\documentclass[11pt]{article}
\usepackage{amstext,amsmath,amssymb,amsfonts,bbm}

\usepackage[utf8]{inputenc}
\usepackage{epsfig}
\usepackage{dsfont}

\usepackage[backend=biber, doi=true, url=true, style = numeric, sorting = none]{biblatex}
\usepackage{amsthm}
\usepackage{subfigure}
\usepackage{color}
\usepackage{multirow}
\usepackage{psfrag}
\usepackage{graphicx}
\usepackage{extarrows}
\usepackage{authblk}
\usepackage[table]{xcolor}
\usepackage{microtype,booktabs,longtable,array,enumitem}
\usepackage{geometry}
\usepackage{parskip}
\usepackage{physics}
\usepackage{comment}
\usepackage{stmaryrd}
\usepackage{mathbbol}
\usepackage{empheq}
\usepackage{tikz-cd}
\usepackage{tikz}
\usetikzlibrary{arrows.meta,calc,positioning,decorations.markings}
\usepackage{caption}
\usepackage{needspace}
\definecolor{ribbonblue}{RGB}{35,101,139}
\definecolor{facegreen}{RGB}{47,117,83}
\definecolor{cutred}{RGB}{157,61,55}
\definecolor{jtSage}{HTML}{E5EFE9}
\definecolor{jtBlue}{HTML}{E5EEF4}
\definecolor{jtGrey}{HTML}{F1F1F1}
\definecolor{jtRose}{HTML}{EDC3B5}
\definecolor{jtGold}{HTML}{F0D8A8}
\definecolor{jtInkBlue}{HTML}{47718D}
\definecolor{jtInkGreen}{HTML}{4D7965}
\tikzset{corner/.style={circle,fill=black,inner sep=1.7pt},
 transport/.style={thick,postaction={decorate},decoration={markings,
 mark=at position .55 with {\arrow{Latex}}}},
 treeedge/.style={thick,ribbonblue,-{Latex[length=2mm]}},
 freeside/.style={thick,dashed,cutred,-{Latex[length=2mm]}}}
\tikzset{
  jt edge/.style={draw=black!90,line width=.72pt,line cap=round,line join=round},
  jt faint/.style={draw=black!25,line width=.45pt,dash pattern=on 2pt off 2pt},
  jt arrow/.style={postaction={decorate},decoration={markings,mark=at position .55 with {\arrow{Stealth[length=1.65mm,width=1.15mm]}}}},
  jt tree/.style={draw=jtInkBlue,line width=1.35pt,line cap=round,line join=round},
  jt label/.style={font=\small,inner sep=1.4pt},
  jt move/.style={draw=black!75,line width=.8pt,-{Stealth[length=2.2mm,width=1.5mm]}}
}

\usepackage{scalerel}
\usepackage{xurl}
\usepackage[colorlinks=true, urlcolor=blue, citecolor=blue,linktoc=page]{hyperref}

\usepackage{tablefootnote}

\newcommand{\ee}{\end{equation}}
\newcommand{\beq}{\begin{equation}}
\newcommand{\eeq}{\end{equation}}
\newcommand{\bes}{\begin{eqnarray}}
\newcommand{\ees}{\end{eqnarray}}
\newcommand{\bqa}{\begin{eqnarray}}
\newcommand{\eqa}{\end{eqnarray}}
\newcommand{\bea}{\begin{eqnarray}}
\newcommand{\eea}{\end{eqnarray}}

\newcommand{\so}{\mathfrak{so}}

\newcommand{\an}{\mathfrak{an}}

\newcommand{\cD}{{\cal D}}

\newcommand{\cF}{{\cal F}}

\newcommand{\cG}{{\cal G}}

\newcommand{\cR}{{\mathcal{R}}}

\newcommand{\g}{\mathfrak{g}}

\def\tell{\tilde{\ell}}

\def\g{\mathfrak{g}}

\def\th{{\tilde h} }

\def\fg{\mathfrak{g}}

\newcommand{\bbm}{\begin{bmatrix}}
\newcommand{\ebm}{\end{bmatrix}}

\usepackage{bbding}

\newlength\myheight  

\theoremstyle{definition}
\newtheorem{theorem}{Theorem}[section]
\AtEndEnvironment{theorem}{\null\hfill\ensuremath{\scaleto{\diamond}{\myheight}}}

\AtEndEnvironment{definition}{\null\hfill\ensuremath{\scaleto{\diamond}{\myheight}}}

\newtheorem{proposition}[theorem]{Proposition}
\AtEndEnvironment{proposition}{\null\hfill\ensuremath{\scaleto{\diamond}{\myheight}}}

\AtEndEnvironment{example}{\null\hfill\ensuremath{\scaleto{\diamond}{\myheight}}}

\newtheorem{lemma}[theorem]{Lemma}
\AtEndEnvironment{lemma}{\null\hfill\ensuremath{\scaleto{\diamond}{\myheight}}}

\AtEndEnvironment{remark}{\null\hfill\ensuremath{\scaleto{\diamond}{\myheight}}}

\AtEndEnvironment{assumption}{\null\hfill\ensuremath{\scaleto{\diamond}{\myheight}}}

\AtEndEnvironment{corollary}{\null\hfill\ensuremath{\scaleto{\diamond}{\myheight}}}

\title{Regge Dilaton Gravity from BF Theory}

\author[1]{Florian Girelli\footnote{florian.girelli@uwaterloo.ca}}
\author[1,2]{Christopher Pollack\footnote{cajpolla@uwaterloo.ca}}

\affil[1]{\textit{Department of Applied Mathematics, University of Waterloo, 200 University Avenue West, Waterloo, Ontario, Canada, N2L 3G1}}
\affil[2]{\textit{Perimeter Institute for Theoretical Physics, 31 Caroline St. N., Waterloo, Ontario, Canada, N2L 2Y5}}
\date{\today}

\begin{document}

\maketitle

\begin{abstract}
In this work we investigate the relation between gauge and geometric descriptions of two-dimensional dilaton gravity at the level of discrete path integrals. We develop a Regge discretization of Euclidean two-dimensional dilaton gravity using homogeneously curved simplices, as well as a ribbon refined discretization of two-dimensional BF theory adapted to homogeneously curved spacetimes. For Euclidean signature and vanishing or negative cosmological constant we show how the ribbon BF discretization, through appropriate gauge fixings, leads to the Regge dilaton gravity formulation. The resulting correspondence makes explicit how a geometric Regge path integral with  its associated measure emerges from its gauge theoretic formulation. 

\end{abstract}

\tableofcontents

\newpage

\section*{Introduction}\addcontentsline{toc}{section}{Introduction}

One of the central objectives of quantum gravity is to give a precise meaning to the formal gravitational partition function
$$
Z=\int [dg_{\mu\nu}]\,e^{iS_{\rm GR}[g]}.
$$
A natural strategy is to regularize this functional integral by replacing the smooth geometry with a discrete one. Regge calculus provides a natural discretization of general relativity in which the metric geometry is encoded in the edge lengths\footnote{In four dimensions one may alternatively formulate the discrete geometry using triangle areas as fundamental variables, although these variables are subject to additional consistency constraints. This leads to the different versions of area or area--angle Regge calculus \cite{barrett_note_1999,dittrich_areaangle_2008}. } of a simplicial complex and curvature is concentrated on codimension-two hinges, measured by deficit angles \cite{regge_general_1961,williams_regge_1992}.
Under suitable refinement and non-degeneracy assumptions, the curvature measures associated with piecewise-flat geometries converge to those of the smooth geometry they approximate \cite{cheeger_curvature_1984}. The usual formulation  treats each simplex as intrinsically flat, even in the presence of a non-vanishing cosmological constant $\Lambda$. In that case the homogeneous curvature generated by $\Lambda$ has to be reconstructed collectively from deficit angles and from the cosmological volume term, so that even a maximally symmetric solution is represented by a non-trivial piecewise-flat approximation. A more natural starting point is therefore to use simplices of constant sectional curvature $K$, with $K$ chosen according to the cosmological constant \cite{bahr_improved_2009,bahr_regge_2010}. The cosmological curvature is then already incorporated locally in each simplex, while the deficit angles measure the curvature in excess of this homogeneous background. This distinction is particularly important on coarse triangulations: curved simplices describe constant curvature solutions exactly rather than approximating them by an assembly of flat cells. In three-dimensional gravity this viewpoint emerges naturally from the construction of perfect discretizations, for which integrating out finer degrees of freedom leads precisely to a Regge action built from homogeneously curved simplices \cite{bahr_improved_2009}. It also improves the realization of the continuum gauge symmetries, which are otherwise generically broken at finite triangulation in a flat-simplex discretization \cite{ bahr_improved_2009, bahr_regge_2010, Dittrich:2008pw, bahr_broken_2009}. 

At the quantum level, specifying the discrete action is not sufficient: one must also determine the integration measure. The choice of measure in the gravitational path integral is a difficult problem because there is no unique canonical measure on the space of geometries. Formally, one would like to integrate over metrics modulo diffeomorphisms, and the separation between physical and gauge directions introduces non-trivial Jacobians and functional determinants \cite{dewitt_quantum_1967_II}. A common starting point is to endow the space of metrics with a field-space metric, such as the DeWitt supermetric, from which a functional measure can be induced; however, this construction itself involves choices and its implementation after gauge fixing is non-trivial \cite{dewitt_quantum_1967_I,menotti_diffeomorphism_1997}. The ambiguity persists after discretization. In Regge calculus, several measures on the edge lengths can be considered, and requiring a sensible relation with the continuum theory places non-trivial restrictions on them. In particular, Hamber and Williams showed that some proposed non-local simplicial measures fail to reproduce the continuum measure in the weak-field, low-momentum limit \cite{hamber_measure_1999}. Thus, while the discrete classical action and its symmetries can be characterized rather explicitly, the measure remains an additional ingredient in the definition of the quantum theory and cannot in general be regarded as a harmless normalization convention. Different choices of measure may lead to different quantum amplitudes and, potentially, to inequivalent quantum theories.

\medskip 

Baratin and Freidel developed a series of 
constructions showing how quantum topological models can emerge from Feynman amplitudes in three- and four-dimensional Euclidean spacetimes \cite{baratin_hidden_2007,baratin_hidden_2007-1}. The resulting models are Regge-like, in the sense that their amplitudes can be expressed in terms of geometric data such as lengths and dihedral angles. In three dimensions, 
their construction treats both flat and homogeneous spherical or hyperbolic spaces and recovers the corresponding Regge action, including the cosmological-volume contribution in the curved cases.
An important feature is that the Feynman amplitudes determine not only the discrete action, but also a specific measure on the geometric variables,  compatible with the topological character of the theory and with Pachner move invariance. The explicit relation of their construction with a potential BF theory discretization was an open question from their perspective. 

In four dimensions, the Baratin--Freidel topological model for \textit{zero} cosmological constant was subsequently related to Poincar\'e 2-group geometry and to the BFCG/Yetter description \cite{asante_quantum_2020}. The continuum relation between ordinary Poincar\'e BF theory and the BFCG formulation can be made explicit by adding an appropriate boundary term  before discretization \cite{girelli-2-groups}. At the discrete level, this semi-dual structure reflects the fact that not all components of the gauge data are discretized on the same complex: part of the connection data is naturally associated with the dual complex, while another component, identified with the analogue of the frame field, is discretized directly on the triangulation. The construction therefore naturally involves lattice 2-gauge theory techniques and  is consistent with the representation-theoretic description in terms  of the Poincar\'e 2-group \cite{Baez:2008hz, Baratin:2014era}.

These results raise two natural questions. Can one construct analogous Regge-like models in two-dimensional spacetime? Can such models be related to a BF formulation, including in the presence of a non-vanishing cosmological constant? In the present work, we address these questions by constructing a homogeneous Regge discretization of two-dimensional dilaton gravity and establishing its relation to a ribbon discretization of BF theory. In particular, the BF formulation determines the measure entering the Regge path integral, rather than requiring it to be postulated independently. We leave the derivation of this model from Feynman amplitudes to future work. 

Since the Einstein--Hilbert action is topological in two dimensions and does not provide non-trivial   dynamics, we instead consider dilaton gravity \cite{grumiller_dilaton_2002}. Its classical solutions are geometries of homogeneous curvature, and the theory admits a BF formulation whose gauge group depends on the value and sign of the cosmological constant. We construct a Regge discretization of the dilaton-gravity action using curved simplices, which are particularly natural in this setting because the homogeneous curvature is already incorporated in the elementary building blocks. We restrict here to Euclidean Regge gravity in order to keep the geometric discussion transparent, although the Lorentzian theory is an important extension and can be approached through several existing formulations \cite{tate_realizability_2012, ito_pinched_2026}. At this stage, the Regge action alone does not determine a unique quantum path integral, since the appropriate measure on the discrete geometric variables remains to be fixed. This ambiguity will instead be resolved by passing to the BF formulation and deriving the Regge description from its discretization.

We therefore proceed by discretizing the relevant BF theories using a ribbon structure rather than the standard lattice description. The ribbon formulation retains information about both the triangulation and its dual complex while still allowing the topological invariance of the theory to be established. After an appropriate gauge fixing, different components of the gauge data are naturally associated with these two dual structures. In the case of non-vanishing cosmological constant, the Iwasawa decomposition of the BF gauge group plays a central role in identifying the relevant holonomies. The short ribbon side holonomies are naturally associated with the dual complex and encode the deficit angles around the vertices. In order to express these deficit angles in terms of the geometric data of the triangulation, however, one must also introduce long ribbon side holonomies along the triangulation (see Fig. \ref{fig:three-BF-discretizations} and \ref{fig:ribboncells}). This provides a discrete explanation for the semi-dual structure already visible in the continuum BF formulation: the variables required to reconstruct the Regge geometry cannot all be placed on the same combinatorial support.

By solving an appropriate subset of the discrete BF constraints, we recover the Regge discretization of two-dimensional dilaton gravity. The BF formulation also provides an important advantage at the quantum level. The path integral of the gauge theory comes with a natural measure determined by the Haar measures of the relevant gauge groups, together with the usual treatment of gauge redundancy. Pushing this measure through the reduction from the BF variables to the Regge variables then determines the corresponding measure for the dilaton-gravity path integral. In this way, the measure entering the Regge formulation is not postulated independently, but is inherited from the underlying BF theory in a manner compatible with its topological invariance.

The paper is organized as follows. In section \ref{sec:homogeneous-spacetime-and-BF}, we fix notations and recall the gauge formulation of (linear) dilaton gravity. We emphasize in particular how a homogeneous geometry can be characterized in terms of a gauge theory. In section \ref{sec:regge}, we discretize the (linear) dilaton gravity action using Regge calculus with homogeneously curved simplices and discuss the ambiguity in the choice of measure for the corresponding path integral.  
In section \ref{sec:split-discretization-of-2D-BF}, we construct the ribbon BF discretization and its Regge-adapted partial gauge fixing, resulting in a path integral we call \textit{split} BF. Section \ref{sec:recovering-homo-regge} derives a suitable Regge length measure from this split BF theory and proves the Regge path integral invariance under Pachner moves. In appendix \ref{app:discrete-dilaton-eom}, we discuss how the  dilaton equations of motion from the Regge action can be viewed as a discretization of the continuum dilaton equations of motion. Appendix \ref{app:measure-considerations} discusses the Haar measure properties when using the Iwasawa decomposition of a given group.   

Table \ref{table:signature-and-curvature-cases} summarizes for which signatures and signs of $\Lambda$ these Regge, ribbon BF, and split BF theories have been defined. The overlapping case of Euclidean signature and $\Lambda \leq 0$ is thus where the split BF to Regge correspondence is established. 

\begin{table}[h]
    \centering
    \begingroup
    \setlength{\tabcolsep}{14pt}
    \renewcommand{\arraystretch}{1.2}
    \begin{tabular}{@{}r|ccc|ccc@{}}
        \toprule
        &  & Euclidean & &  & Lorentzian & \\
         & $\Lambda>0$ & $\Lambda=0$ & $\Lambda<0$ & $\Lambda>0$ & $\Lambda=0$ & $\Lambda<0\:$ \\
        \midrule
        Regge
            & \checkmark & \checkmark & \checkmark & - & - & - \\
        Ribbon BF
            & \checkmark & \checkmark & \checkmark  & \checkmark & \checkmark & \checkmark\\
        Split BF
            & - & \checkmark & \checkmark & \checkmark & \checkmark & \checkmark\\
        \bottomrule
    \end{tabular}
    \endgroup
    \caption{We define three models in this work: Homogeneous Regge discretized linear dilaton gravity, ribbon discretized BF theory, and its partially gauged fixed split BF theory. This table illustrates for which signatures and signs of $\Lambda$ these models are constructed. The common case of Euclidean $\Lambda = 0$ and $\Lambda <0$ amongst all three is where our BF to Regge correspondence is constructed.}\label{table:signature-and-curvature-cases}
\end{table}

\section{Two-dimensional gravity and two-dimensional BF theory}\label{sec:homogeneous-spacetime-and-BF}

We recall that the geometry of a constant curvature spacetime can be locally modeled on a homogeneous space constructed by quotienting Lie groups  \cite{goldman_locally_2010,lee_introduction_2018}. We distinguish the homogeneous model space $X$ from a spacetime $M$ carrying this local geometry. This section fixes our conventions for these notions and reviews how BF theory can similarly encode the local geometry of a constant curvature spacetime. We close by reviewing the continuum relation between BF theory and linear dilaton gravity (LDG).

\subsection{Two-dimensional homogeneous spacetimes and their isometries}\label{subsec:homo-spacetime-isometries}

Consider a two-dimensional constant curvature spacetime $(M, g_{\mu\nu})$, with $\mu,\nu = 0,1$ denoting spacetime indices. We denote the homogeneous curvature of $M$ by
\begin{equation}
    \Lambda = \kappa|\Lambda|:=\frac{\kappa}{l^{2}},\label{eq:cosmo-constant}
\end{equation}
where $l = 1/\sqrt{\abs{\Lambda}}$ is the cosmological radius and $\kappa \in \{1,0,-1\}$. The associated first order metric variables are a zweibein $e^a$ and a spin connection $\omega$, where $a,b = 0,1$ denote internal Lorentz frame indices. These indices are raised and lowered by
\begin{equation}
\eta_{ab}=\mathrm{diag}(\sigma,1),
\quad
\sigma=\pm1,
\end{equation}
where $\sigma=1$ is Euclidean and $\sigma=-1$ is Lorentzian, and
\begin{equation}
    g_{\mu\nu} = \eta_{ab}e^{a}_{\mu}e^{b}_{\nu}.
\end{equation}
We fix the Levi-Civita symbol by
\begin{equation}
    \epsilon_{01} = 1, \quad \epsilon^{01} = \sigma \implies \epsilon_{ab}\epsilon^{bc} = -\sigma \delta_{a}^{c}.
\end{equation}
Table \ref{table:2d-spacetimes} lists the corresponding local model geometries $X$.
\begin{table}[h]
    \centering
    \begingroup
    \setlength{\tabcolsep}{14pt}
    \renewcommand{\arraystretch}{1.2}
    \begin{tabular}{@{}lccc@{}}
        \toprule
        Metric signature/curvature sign & $\kappa=1$ & $\kappa=0$ & $\kappa=-1$ \\
        \midrule
        Euclidean ($\sigma=1$)
            & $S^2$ & $\mathbb{R}^2$ & $H^2$ \\
        Lorentzian ($\sigma=-1$)
            & $\mathrm{dS}_2$ & $\mathbb{R}^{1,1}$ & $\mathrm{AdS}_2$ \\
        \bottomrule
    \end{tabular}
    \endgroup
    \caption{Two-dimensional constant curvature model spaces $X$, classified by signature $\sigma$ and curvature sign $\kappa$. A general spacetime $M$ is only locally isometric to the corresponding model.}\label{table:2d-spacetimes}
\end{table}
Let $\fg_{\sigma,\kappa}$ denote the three-dimensional isometry algebra of $X$, with associated model isometry group $G_{\sigma,\kappa}$. In the Cartan decomposition, the generators $J$ and $P_a$ satisfy
\begin{equation}
[J,P_a]=\epsilon_a{}^bP_b,
\quad 
[P_a,P_b]=\sigma\frac{\kappa}{l^2}\epsilon_{ab}J = \sigma\Lambda\epsilon_{ab}J.
\label{eq:isometry_algebra}
\end{equation}
In Euclidean signature $\sigma = 1$, the curvature signs $\kappa \in\{1,0,-1\}$ correspond respectively to the spherical, flat, and hyperbolic cases:
\begin{equation}
\fg_{1,1} \cong \mathfrak{so}(3) ,\quad \fg_{1,0} \cong \mathfrak{iso}(2),\quad \fg_{1,-1} \cong \mathfrak{so}(2,1).\label{eq:euclidean-cartan-lie-alg}
\end{equation}
For Lorentzian signature $\sigma = -1$, $\kappa\in\{1,0,-1\}$ correspond respectively to positive curvature, flat, and negative curvature kinematics, i.e., de Sitter, Poincar\'e, and anti-de Sitter:
\begin{equation}
    \fg_{-1,1} \cong \mathfrak{so}(1,2),\quad \fg_{-1,0} \cong \mathfrak{iso}(1,1),\quad \fg_{-1,-1} \cong \mathfrak{so}(2,1).\label{eq:lorentzian-cartan-lie-alg}
\end{equation}
We will also work with the real three-dimensional \textit{Abelian} Lie algebra $\fg^{*}$, taken to mean the dual of $\fg_{\sigma,\kappa}$ as a vector space. Denoting the generators of $\fg^{*}$ by $\tilde{J}$ and $\tilde{P}^{a}$, there exists a canonical invariant pairing $\langle \bullet,\bullet\rangle$ between $\fg_{\sigma,\kappa}$ and $\fg^{*}$ satisfying \cite{snobl_classification_2002},
\begin{equation}
    \langle J,\tilde{J}\rangle = 1,\quad \langle P_{a},\tilde{P}^{b}\rangle = \delta_{a}^{b},\quad \langle J,\tilde{P}^{a}\rangle = \langle P_{a},\tilde{J}\rangle = 0,\label{eq:g-g*-Cartan-pairing}
\end{equation}
as well as a (dressing) action $\lhd$ of $\fg_{\sigma,\kappa}$ on $\fg^{*}$ given by,\footnote{Note there is no ``back-action'' of $\fg^{*}$ on $\fg_{\sigma,\kappa}$. As well, it may be useful to recall that the dressing action has the anti-symmetric property $\alpha\lhd \beta = -\beta\rhd \alpha$ for all $\beta\in \fg_{\sigma,\kappa}$ and $\alpha\in \fg^{*}$.}
\begin{equation}
\tilde{J}\lhd J = 0,\quad \tilde{J}\lhd P_{a} = \sigma\frac{\kappa}{l^{2}}\epsilon_{ab}\tilde{P}^{b},\quad \tilde{P}^{a}\lhd J = \epsilon_{b}{}^{a}\tilde{P}^{b},\quad \tilde{P}^{a}\lhd P_{b} = -\epsilon_{b}{}^{a}\tilde{J}.\label{eq:Cartan-Dressing-Lie-Alg}
\end{equation}
A constant curvature spacetime $M$ is locally modeled on the homogeneous space $X$  \cite{goldman_locally_2010}.
More precisely, $M$ admits local charts with values in $X$ whose transition functions are restrictions of elements of $G_{\sigma,\kappa}$. Choosing  a base point $x_0\in X$, we write
\begin{equation}
     X:=G_{\sigma,\kappa}/H_{\sigma},\quad H_{\sigma}:=\{g\in G_{\sigma,\kappa}\:|\:g\cdot x_{0} = x_{0}\}\subset G_{\sigma,\kappa},\label{eq:X-as-quotient}
\end{equation}
for a base point $x_0\in X$. The model metric defines an invariant measure $d\mu_X$ and invariant distance function $d_{X}(\bullet,\bullet)$.\footnote{In a Lorentzian signature this induced distance function is not necessarily positive definite. We restrict to the Euclidean signature in \S\ref{sec:recovering-homo-regge} thus we do not address such subtleties here.} The stabilizers are $H_1=SO(2)$ and $H_{-1}=SO(1,1)$, giving the following quotient realizations of the model spaces in Table \ref{table:2d-spacetimes}:
\begin{subequations}
\begin{align}
S^{2} \cong SO(3)/SO(2),\quad \mathbb{R}^{2}&\cong ISO(2)/SO(2), \quad \mathbb{H}^{2}\cong SO(2,1)/SO(2),\\
\mathrm{dS}_{2} \cong SO(1,2)/SO(1,1),\quad \mathbb{R}^{1,1}&\cong ISO(1,1)/SO(1,1),\quad \mathrm{AdS}_{2}\cong SO(2,1)/SO(1,1). 
\end{align}
\end{subequations}

\subsection{Homogeneous geometry from BF theory and dilaton gravity}\label{subsec:continuum-BF-to-regge}

Two-dimensional BF theory with gauge group $G_{\sigma,\kappa}$ encodes constant curvature geometry in the Cartan formulation \cite{wise_symmetric_2009}. For an invertible zweibein, the constant curvature geometry is encoded in the relations
\begin{equation}
g_{\mu\nu} = \eta_{ab}e^{a}_{\mu}e^{b}_{\nu},\quad d\omega+\sigma\frac{\kappa}{2l^2}\epsilon_{ab}e^a\wedge e^b = 0,\quad de^{a}+\omega\wedge e^{b}\epsilon_{b}{}^{a} =0 .   \label{eq:homo-M-e-omega-relations}
\end{equation}
A topological field theory admitting these relations as part of its dynamics can be naturally constructed. We package the $(\omega, e^{a})$ data together into the Cartan connection,
\begin{equation}
    A:= \omega J + e^{a}P_{a}\in \Omega^{1}(M,\mathfrak{g}_{\sigma,\kappa}).\label{eq:A-cartan}
\end{equation}
The curvature 2-form $\cF(A) = dA+\frac{1}{2}\comm{A}{A}$ associated to $A$ decomposes as
\begin{equation}
\mathcal{F}(A)=FJ+T^aP_a ,\label{eq:A-curvature-decomp}
\end{equation}
with
\begin{subequations}{\label{eq:2d_torsion_curvature}}
\begin{align}
F &= d\omega+\sigma\frac{\kappa}{2l^2}\epsilon_{ab}e^a\wedge e^b\\
T^a &= de^a+\omega\wedge e^b\epsilon_{b}{}^{a}.\label{eq:cartan-torsion}
\end{align}
\end{subequations}
The Cartan curvature $F$ and torsion $T^a$ vanish precisely when \eqref{eq:homo-M-e-omega-relations} holds,
\begin{equation}
\cF(A) = 0 \iff F = 0 \text{ and } T^{a}=0.\label{eq:homogeneous_constraints}
\end{equation}
To impose these conditions dynamically, introduce the Lagrange multiplier
\begin{equation}
B:=\Phi \tilde{J}+X_a\tilde{P}^a \in \Omega^{0}(M,\fg^{*}),
\label{eq:B-cartan}
\end{equation}
such that along with the invariant pairing \eqref{eq:g-g*-Cartan-pairing} the BF action \cite{horowitz_exactly_1989, baez_introduction_1999}
\begin{equation}
S_{\mathrm{BF}}[A,B] =-\frac{1}{8\pi G_{N}}\int_M\langle B,\cF(A)\rangle=-\frac{1}{8\pi G_{N}}\int_M\Big(X_aT^a+\Phi F\Big)
\end{equation}
is precisely the topological field theory admitting the desired dynamics. Indeed, varying $S_{\mathrm{BF}}[A,B]$ with respect to $X_a$ and $\Phi$ imposes \eqref{eq:homogeneous_constraints},
\begin{equation}
\delta_{X_{a}}S = 0 \implies T^a\approx 0, \quad \delta_{\Phi}S = 0 \implies F\approx 0,
\end{equation}
where we henceforth adopt the notation ``$\approx$'' to mean going on-shell.

The connection to the discussions of \S\ref{subsec:homo-spacetime-isometries} comes from the fact that $S_{\mathrm{BF}}[A,B]$ has $G_{\sigma,\kappa}$ as its gauge group. Indeed it satisfies the infinitesimal gauge symmetry
\begin{subequations}
\begin{align}
\delta_\lambda A=d_A\lambda, &\quad \delta_\lambda B=B\lhd\lambda,\quad \lambda=\theta J+\rho^aP_a\in \fg_{\sigma,\kappa} \label{eq:BF_gauge_transformation}\\
\implies \delta_\lambda S_{\mathrm{BF}} &= \int_M \langle B\lhd \lambda,\cF(A)\rangle + \int_M \langle B,[\cF(A),\lambda]\rangle =0,\label{eq:BF_gauge_invariance}
\end{align}
\end{subequations}
which in components, reads
\begin{subequations}{\label{eq:BF-component-gauge-transformations}}
\begin{align}
\delta_\lambda \omega &= \langle d\theta + \comm{e}{\rho}, \tilde{J}\rangle = d\theta + \sigma\frac{\kappa}{l^2}\epsilon_{ab}e^a\rho^b,\label{eq:gauge_omega}\\
\delta_\lambda e^a &= \langle d\rho+\comm{\omega}{\rho}+\comm{e}{\theta},\tilde{P}^{a}\rangle = d\rho^a+\omega\rho^b\epsilon_{b}{}^{a}-\theta e^{b}\epsilon_{b}{}^{a},\label{eq:gauge_frame}\\
\delta_\lambda \Phi &= \langle X\lhd \rho,J\rangle = -X_{a}\rho^{b}\epsilon_{b}{}^{a},\label{eq:gauge_phi}\\
\delta_\lambda X_a &= \langle X\lhd\theta+ \Phi\lhd \rho ,P_{a}\rangle = \theta \epsilon_{a}{}^{b}X_{b}-\sigma\frac{\kappa}{l^{2}}\Phi\rho^{b}\epsilon_{ab} .\label{eq:gauge_X}
\end{align}
\end{subequations}
Thus BF theory encodes the local homogeneous geometry of $M$ in its non-degenerate zweibein sector, with the model isometry group as its gauge group.

The two-dimensional topological BF theory we have built has a natural relation with two-dimensional (linear) dilaton gravity (LDG). 
The standard (Euclidean) action for LDG is given by (omitting boundary terms),
\begin{equation}
     S_{\mathrm{LDG}}[g,\Phi;\Lambda_{\mathrm{LDG}}] = \frac{1}{16\pi G_{N}}\int_{M}d^{2}x\sqrt{g}\Phi(R[g]-2\Lambda_{\mathrm{LDG}}),\label{eq:linear-dilaton-gravity-action}
\end{equation}
where $\Phi$ is a scalar field referred to as the dilaton field. For $\Lambda_{\mathrm{LDG}}=-1$ this is Euclidean Jackiw--Teitelboim (JT) gravity \cite{mertens_solvable_2023}. The equations of motion are given by,
\begin{equation}
    R-2\Lambda_{\mathrm{LDG}}\approx 0,\quad \nabla_{\mu}\nabla_{\nu}\Phi - g_{\mu\nu}\nabla^{2}\Phi-\Lambda_{\mathrm{LDG}}g_{\mu\nu}\Phi \approx 0.\label{eq:LDG-continuum-EOM}
\end{equation}
For an invertible zweibein, imposing the torsion constraint \eqref{eq:cartan-torsion} gives \cite{mertens_solvable_2023}
\begin{equation}
    d\omega = -\frac{1}{2}d^{2}x\sqrt{g}R[g],\quad \frac{1}{2}\epsilon_{ab}e^{a}\wedge e^{b} = d^{2}x\sqrt{g},
\end{equation}
thus,
\begin{equation}
    S_{\mathrm{BF}}[A,B] \overset{T^{a}\approx 0}{\longmapsto} \frac{1}{16\pi G_{N}}\int_{M}d^{2}x\sqrt{g}\Phi(R[g]-2\Lambda) = S_{\mathrm{LDG}}[g,\Phi;\Lambda_{\mathrm{LDG}}=\Lambda]. \label{eq:BF-LDG-relation}
\end{equation}
The characterization of homogeneous geometry by BF theory and this continuum relation between BF and LDG motivate the approach we take in the following sections.

\section{Homogeneous Regge discretization of dilaton gravity and its path integral}\label{sec:regge}

Here we give the \textit{homogeneous} Regge discretization of two-dimensional gravity. Unlike the regular equilateral hyperbolic lattice used in \cite{erdmenger_discrete_2025}, our construction allows variable edge lengths and general triangulations. 

First we will discretize the LDG action using Regge techniques with homogeneously curved simplices. Indeed, it has been argued in higher dimensions that using homogeneous simplices as opposed to the historically flat Regge simplices \cite{regge_general_1961} is a more adequate tool for discretization in the presence of homogeneous curvature \cite{bahr_regge_2010, dittrich_covariant_2010, bonzom_diracs_2013,barrett_first_1994}. Moreover, homogeneously curved simplices will be the natural structure recovered in the BF formulation. Then, we give the associated Regge discretized path integral, taken to be induced by the forthcoming BF to Regge correspondence of \S\ref{sec:recovering-homo-regge}. 

\subsection{Homogeneous Regge discretization of the dilaton gravity action}\label{subsec:Regge-discretization-of-LDG}

Discretization begins by choosing a triangulation $\Delta$ of $M$, comprised of vertices $v\in V(\Delta)$, edges $e\in E(\Delta)$, and triangles $\tau\in T(\Delta)$. Poincar\'{e} duality identifies its corresponding dual complex $\Delta^*$, comprised of nodes $n\in V(\Delta^*)$, links $\ell\in E(\Delta^*)$, and faces $f\in F(\Delta^*)$, dual respectively to triangles, edges, and vertices. We write $|V(\Delta)|$, $|E(\Delta)|$, and $|T(\Delta)|$ for the size of these sets, and similarly for $\Delta^*$. The idea behind a two-dimensional Regge discretization using homogeneously curved simplices is that the total continuum scalar curvature is taken to be supported on two regions in $\Delta$: on each triangle $\tau$, and around each vertex $v$ (see Figure \ref{fig:spherical}).

\begin{figure}[ht]
    \centering
    \includegraphics[width=0.8\linewidth]{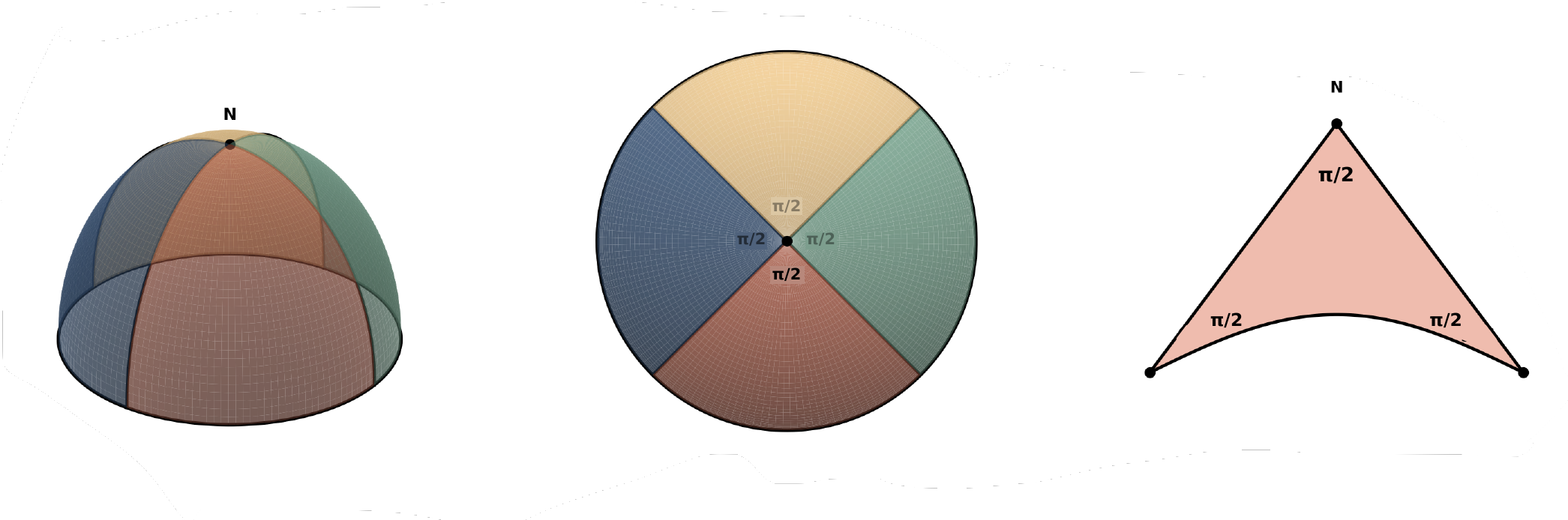}
    \caption{On the left side, we have a triangulation of the northern hemisphere using spherical triangles built from geodesics. In the center, we illustrate how in this case the deficit angle is zero modulo $2\pi$. In the presence of additional curvature concentrated at the vertex, the sum of the angles would not be $2\pi$. On the right side, the positive curvature of the spherical triangle is manifest in that the sum of the angles is larger than $\pi$.}
    \label{fig:spherical}
\end{figure}

\subsubsection{Geometry of the triangulation}\label{subsubsec:regge-triangulation}

Each $\tau$ is taken to have homogeneous sectional curvature $K$ by definition.\footnote{On a Riemannian manifold the sectional curvature $K$ is related to the Riemann tensor via,
\begin{equation}
K(u,v) = \frac{\langle R(u,v)v,u\rangle}{\langle u, u\rangle\langle v, v\rangle - \langle u, v\rangle^{2}},\quad \langle u, v\rangle:= g_{\mu\nu}u^{\mu}v^{\nu}    
\end{equation}
where $u$ and $v$ are linearly independent tangent vectors at the same point on the Riemannian manifold. In two dimensions the sectional curvature is exactly equal to the Gaussian curvature, or equivalently one-half the scalar curvature.} We choose
\begin{equation}
    K = \Lambda,\label{eq:K-Lambda-relation}
\end{equation}
so that $R=2K$ solves the LDG curvature equation inside each simplex. The remaining curvature is concentrated at the vertices. We describe the simplex geometry using the model space $X$ in \eqref{eq:X-as-quotient}, via a quadric $X_\Lambda$ for $\Lambda\ne0$, and an affine plane for $\Lambda=0$. 

It will be convenient to  introduce the curvature dependent functions
\begin{equation}{\label{eq:regge-curvature-functions}}
    C_\Lambda(l)
    :=
    \begin{cases}
        \cos(\sqrt{\Lambda}\,l),
        & \Lambda>0,
        \\[1mm]
        \cosh(\sqrt{-\Lambda}\,l),
        & \Lambda<0,\\[1mm]
        1 & \Lambda=0
    \end{cases}, \quad 
    S_\Lambda(l)
    :=
    \begin{cases}
        \displaystyle
        \frac{\sin(\sqrt{\Lambda}\,l)}{\sqrt{\Lambda}},
        & \Lambda>0,
        \\[3mm]
        \displaystyle
        \frac{\sinh(\sqrt{-\Lambda}\,l)}{\sqrt{-\Lambda}},
        & \Lambda<0
        \\[1mm]
        l & \Lambda=0.
    \end{cases}
\end{equation}
They satisfy
\begin{equation}
    C_\Lambda(l)^2+\Lambda S_\Lambda(l)^2=1,\quad C_{\Lambda}(l):= \dv{l}S_{\Lambda}(l),\quad \dv{l}C_{\Lambda}(l) = -\Lambda S_{\Lambda}(l), \quad l>0.
    \label{eq:CS-identity}
\end{equation}

\paragraph{Constant curvature: $\Lambda\neq0$.}

The homogeneous space can be realized as the following two-dimensional quadric,
\begin{equation}
    X_{\Lambda}:= \Big\{\mathcal{X}\in \mathbb{R}^{3}\;\Big|\;\mathcal{X}\cdot \mathcal{X} = \frac{1}{\Lambda}\Big\}.\label{eq:quadric-homogeneous-space}
\end{equation}
where the scalar product is Euclidean  or Lorentzian. Therefore a vertex $v\in V(\Delta)$ can be specified by a vector $\mathcal{X}_v\in X_{\Lambda}$. We will use the same notation to characterize the vector in the ambient space $\mathbb{R}^3$ and the vertex in $X_\Lambda$. The spacetime metric is discretized by assigning a geodesic distance $l_{ij}$ to every edge formed by vertices $\mathcal{X}_i$ and $\mathcal{X}_j$, and is encoded in the scalar product of their associated ambient vectors through
\begin{equation}
    \Lambda \mathcal{X}_i\cdot \mathcal{X}_j
    =
    C_\Lambda(l_{ij}).
    \label{eq:ambient-distance}
\end{equation} 
Each simplex is a geodesic triangle, so its edge lengths satisfy the appropriate triangle inequalities. Its vertices are represented by three points,
\begin{equation}
    \mathcal{X}_1,\mathcal{X}_2,\mathcal{X}_3\in X_\Lambda,
    \quad
    \mathcal{X}_i\cdot \mathcal{X}_i=\frac{1}{\Lambda}.
\end{equation}
The geodesic starting from $\mathcal{X}_i$ in the direction of a unit tangent vector $t_{ij}\in T_{\mathcal{X}_i}X_\Lambda$ can be parametrized as\footnote{These geodesics would be space-like in the Lorentzian case.}
\begin{equation}
    \mathcal{X}_{ij}(s)
    =
    C_\Lambda(s)\mathcal{X}_i
    +
    S_\Lambda(s)t_{ij},
    \quad
    \mathcal{X}_i\cdot t_{ij}=0,
    \quad
    t_{ij}\cdot t_{ij}=1.
    \label{eq:ambient-geodesic}
\end{equation}
In particular, choosing $s=l_{ij}$ gives the endpoint,
\begin{equation}
    \mathcal{X}_j
    =
    C_\Lambda(l_{ij})\mathcal{X}_i
    +
    S_\Lambda(l_{ij})t_{ij}.
    \label{eq:ambient-edge-endpoint}
\end{equation}
The tangent vector associated with the oriented edge $(ij)$ is therefore
\begin{equation}
    t_{ij}
    =
    \frac{
        \mathcal{X}_j-C_\Lambda(l_{ij})\mathcal{X}_i
    }{
        S_\Lambda(l_{ij})
    }.
    \label{eq:tangent-from-vertices}
\end{equation}
Thus the edge of the triangle is completely characterized by its initial vertex $\mathcal{X}_i$, its length $l_{ij}$, and its initial tangent vector $t_{ij}$.

Consider now the two edges $(ij)$ and $(ik)$ meeting at the vertex $\mathcal{X}_i$. Their angle $\beta_i$ is simply the angle between the corresponding tangent vectors,
\begin{equation}
    \cos\beta_i
    =
    t_{ij}\cdot t_{ik}.
    \label{eq:angle-tangent}
\end{equation}
Using \eqref{eq:ambient-distance} and \eqref{eq:tangent-from-vertices}, one obtains
\begin{equation}
    \cos\beta_i
    =
    \frac{
        \Lambda \mathcal{X}_j\cdot \mathcal{X}_k
        -
        C_\Lambda(l_{ij})C_\Lambda(l_{ik})
    }{
        \Lambda
        S_\Lambda(l_{ij})S_\Lambda(l_{ik})
    }=\frac{
       C_{\Lambda}(l_{jk})
        -
        C_\Lambda(l_{ij})C_\Lambda(l_{ik})
    }{
        \Lambda
        S_\Lambda(l_{ij})S_\Lambda(l_{ik})
    },
    \label{eq:angle-ambient}
\end{equation}
thus
\begin{equation}
    C_\Lambda(l_{jk})
    =
    C_\Lambda(l_{ij})C_\Lambda(l_{ik})
    +
    \Lambda
    S_\Lambda(l_{ij})S_\Lambda(l_{ik})
    \cos\beta_i,
    \label{eq:curved-cosine-law}
\end{equation}
which is the spherical or hyperbolic cosine law depending on the sign of $\Lambda$.

The geometry of a triangle $\tau = (123)$ can equivalently be encoded in its Gram matrix. It is natural to define the Gram matrix elements
\begin{equation}
    (G_{\tau,\Lambda})_{ij}
    :=
    \Lambda \mathcal{X}_i\cdot \mathcal{X}_j,
\end{equation}
such that from \eqref{eq:ambient-distance} the full Gram matrix is of the form
\begin{equation}
    G_{\tau,\Lambda}
    =
    \begin{pmatrix}
        1
        &
        C_\Lambda(l_{12})
        &
        C_\Lambda(l_{13})
        \\
        C_\Lambda(l_{12})
        &
        1
        &
        C_\Lambda(l_{23})
        \\
        C_\Lambda(l_{13})
        &
        C_\Lambda(l_{23})
        &
        1
    \end{pmatrix}.
    \label{eq:vertex-gram-matrix}
\end{equation}
The Gram matrix therefore contains all the information about the edge lengths and, through \eqref{eq:angle-ambient}, the internal angles of the triangle as well.

The area $\mathcal{V}_{\tau,\Lambda}$ of $\tau$ follows from Gauss--Bonnet,
\begin{equation}
    {
    \mathcal{V}_{\tau,\Lambda}
    =
    \frac{
        \beta_1+\beta_2+\beta_3-\pi
    }{\Lambda}.
    }
    \label{eq:curved-triangle-area}
\end{equation}
For $\Lambda>0$ this is the usual spherical excess, while for $\Lambda<0$ both the numerator and $\Lambda$ are negative.

\paragraph{Flat case: $\Lambda = 0$.}

After choosing an origin in the Euclidean plane $\mathbb{R}^2$, represent the vertices by
\begin{equation}
    x_1,x_2,x_3\in\mathbb R^2.
\end{equation}
The oriented edge from $x_i$ to $x_j$ is the vector
\begin{equation}
    E_{ij}:=x_j-x_i,
\end{equation}
with length
\begin{equation}
    l_{ij}^2=E_{ij}\cdot E_{ij}.
\end{equation}
Introducing the unit tangent vector
\begin{equation}
    t_{ij}=\frac{E_{ij}}{l_{ij}},
\end{equation}
the geodesic joining the two vertices takes the form
\begin{equation}
    x_{ij}(s)
    =
    x_i+s\,t_{ij},
    \quad
    0\leq s\leq l_{ij},
    \label{eq:flat-geodesic}
\end{equation}
in particular,
\begin{equation}
    x_j=x_i+l_{ij}t_{ij}.
\end{equation}
At the vertex $x_i$, the angle between the edges $(ij)$ and $(ik)$ is therefore
\begin{equation}
    \cos\beta_i
    =
    t_{ij}\cdot t_{ik}
    =
    \frac{
        E_{ij}\cdot E_{ik}
    }{
        l_{ij}l_{ik}
    }.
\end{equation}
Since
\begin{equation}
    E_{jk}=E_{ik}-E_{ij},
\end{equation}
one immediately obtains the standard Euclidean cosine law
\begin{equation}
    l_{jk}^2
    =
    l_{ij}^2+l_{ik}^2
    -
    2l_{ij}l_{ik}\cos\beta_i.
    \label{eq:flat-cosine-law}
\end{equation}
In contrast with the curved case, the scalar products $x_i\cdot x_j$ depend on the arbitrary choice of origin and therefore do not directly encode the geometry of the triangle. The natural Gram matrix is instead the Gram matrix of two edge vectors. For instance, choosing the vertex $x_1$,
\begin{equation}
    G_{\tau,0}
    =
    \begin{pmatrix}
        E_{12}\cdot E_{12}
        &
        E_{12}\cdot E_{13}
        \\
        E_{13}\cdot E_{12}
        &
        E_{13}\cdot E_{13}
    \end{pmatrix}
    =
    \begin{pmatrix}
        l_{12}^2
        &
        l_{12}l_{13}\cos\beta_1
        \\
        l_{12}l_{13}\cos\beta_1
        &
        l_{13}^2
    \end{pmatrix}.
    \label{eq:flat-edge-gram}
\end{equation}
Its determinant gives the area,
\begin{equation}
    {
    \mathcal{V}_{\tau,0}
    =
    \frac12\sqrt{\det G_{\tau,0}}
    =
    \frac12
    l_{12}l_{13}\sin\beta_1.
    }
    \label{eq:flat-triangle-area}
\end{equation}
The flat limit follows from
\begin{equation}
    C_\Lambda(l)
    =
    1-\frac{\Lambda l^2}{2}
    +O(\Lambda^2),
    \quad
    S_\Lambda(l)
    =
    l-\frac{\Lambda l^3}{6}
    +O(\Lambda^2).
\end{equation}
Expanding the curved cosine law \eqref{eq:curved-cosine-law} to first order in $\Lambda$ then gives
\begin{equation}
    l_{jk}^2
    =
    l_{ij}^2+l_{ik}^2
    -
    2l_{ij}l_{ik}\cos\beta_i,
\end{equation}
recovering the flat result.

\paragraph{Deficit angle around a vertex.}

Consider now a triangulation obtained by gluing such geodesic triangles along edges, and let $v$ be an internal vertex. Denote by $\tau\supset v$ the triangles incident at $v$, and by $\beta_v^\tau\in(0,\pi)$ the ordinary positive internal angle of a consistently oriented, non-folded triangle $\tau$ at this vertex. In this sector no additional orientation sign is required, and the  deficit angle associated with $v$ is defined by
\begin{equation}
    {
    \delta_{v}(l;\Lambda)
    =
    2\pi
    -
    \sum_{\tau\supset v}\beta_{v}^{\tau}(l;\Lambda) .
    }
    \label{eq:deficit-angle-vertex}
\end{equation}
The deficit measures the distributional Gaussian curvature at $v$, in excess of the curvature $\Lambda$ already present inside each triangle. When $\Lambda=0$, all curvature is concentrated at vertices. We emphasize that the $\Lambda$ dependence of the deficit angle is solely inherited from the $\Lambda$ dependence of the internal angles $\beta^{\tau}_{v}$ through the relevant cosine law.

\subsubsection{The discretized action}\label{subsubsec:regge-LDG-action}

We proceed now with the discretization of the LDG action \eqref{eq:linear-dilaton-gravity-action}. First we discretize the dilaton contribution by assigning its values to vertices $v\in V(\Delta)$ as $\Phi_{v}:=\Phi(v)$. The curvature/metric data has been assigned to the triangulation by homogeneous curvature $\Lambda$ within simplices, the possibility of a non-trivial deficit angle \eqref{eq:deficit-angle-vertex} at vertices, and geodesic distances $l_{ij}$ between adjacent vertices. Putting both of these contributions to the curvature together we arrive at the following relation,
\begin{equation}
    \frac{1}{16 \pi G_{N}}\int_{M}d^{2}x\sqrt{g}\Phi R = \frac{K}{8\pi G_{N}}\sum_{\tau}\int_{\tau}d^{2}x\sqrt{g}\Phi +\frac{1}{8\pi G_{N}} \sum_{v}\Phi_{v}\delta_{v}(l;K).
\end{equation}
Subtracting the cosmological constant contribution (see  \eqref{eq:linear-dilaton-gravity-action}) on both sides of this expression and using the relation \eqref{eq:K-Lambda-relation}, the first term on the right hand side above cancels the cosmological volume contribution, giving the final homogeneous Regge discretized LDG action,
\begin{equation}
    S_{\mathrm{LDG},\Delta} = \frac{1}{8\pi G_{N}}\sum_{v}\Phi_{v}\delta_{v}(l;\Lambda).\label{eq:discretized-LDG-action}
\end{equation}
Henceforth we omit the overall numerical factors from \eqref{eq:discretized-LDG-action}. Varying with respect to $\Phi_{v}$ imposes $\delta_{v}(l;\Lambda)\approx 0$ for all $v$. As mentioned, this constraint combined with $K = \Lambda$ recovers exactly the continuum constraint $R-2\Lambda \approx 0$. Thus, within each simplex the continuum constraint has been solved identically. On the other hand, varying with respect to $l_{e}$ gives,
\begin{equation}
    \sum_{\tau\supset e}\sum_{v\in \tau}\Phi_{v}\pdv{\beta^{\tau}_{v}(l;\Lambda)}{l_{e}} \approx 0,\quad \forall e\in E(\Delta).
\end{equation}
In Appendix \ref{app:discrete-dilaton-eom} we show that this equation admits a direct geometrical interpretation as the gluing condition for solutions of the continuum dilaton equation in the different simplicies.

\subsection{Path integral discretization}\label{subsec:regge-path-int-discretization}

We now discretize the formal path integral 
\begin{equation}
    \int [dg][d\Phi] e^{iS_{\mathrm{LDG}}[g,\Phi;\Lambda]}.
\end{equation}
The continuum action may simply be replaced with its discretized version \eqref{eq:discretized-LDG-action}, thus the remaining ingredient is an appropriate discretized measure. 

For $\Lambda\leq0$, we will derive in \S\ref{sec:recovering-homo-regge}  an appropriate choice of measure via its correspondence with BF theory. It will be shown  that this induced measure leads indeed to a topological model, satisfying the expected Pachner move invariance. Therefore, we present here only a summary of the Regge path integral construction, leaving its justification to \S\ref{sec:recovering-homo-regge}.

Recalling \eqref{eq:vertex-gram-matrix}, for each $\tau$ we define the Jacobian factors\footnote{We emphasize that in the $\Lambda \neq 0$ case, $V_{\tau,\Lambda}$ is not the volume $\mathcal{V}_{\tau,\Lambda}$ of $\tau$ (cf. \eqref{eq:curved-triangle-area}). Only in the $\Lambda =0$ does this Jacobian factor recover two times the simplex volume.}
\begin{equation}{\label{eq:triangle-volume-gram-matrix}}
    V_{\tau,\Lambda}:= \begin{cases}
        \frac{1}{\abs{\Lambda}}\sqrt{\mathrm{det}(G_{\tau,\Lambda})},\: \Lambda \neq 0,\\
        2\mathcal{V}_{\tau,0},\quad \Lambda = 0.
    \end{cases}
\end{equation}
An edge length assignment is \textit{admissible} if each triangle satisfies the appropriate triangle inequalities and $V_{\tau,\Lambda}>0$. Denote the space of such assignments by $\cD_\Delta$. The measure we take is given by
\begin{equation}
    \Bigg[\prod_{e\in E(\Delta)}S_{\Lambda}(l_{e})dl_{e}\Bigg]\Bigg[\prod_{\tau\in T(\Delta)}\frac{1}{V_{\tau,\Lambda}}\Bigg]\Bigg[\prod_{v\in V(\Delta)}d\Phi_{v}\Bigg],
\end{equation}
where we recall $S_{\Lambda}(l_{\bullet})$ is defined by \eqref{eq:regge-curvature-functions}.

The domain of the dilaton determines the geometric curvature constraint imposed
by the path integral. For the vertex coupling
$e^{i\Phi_v\delta_v}$, integrating $\Phi_v$ over $\mathbb{R}$
imposes $\delta_v=0$, whereas summing $\Phi_v$ over $\mathbb{Z}$
imposes this condition only modulo $2\pi$. The latter prescription
therefore admits nonzero integer multiples of $2\pi$, which have
trivial rotational holonomy but need not define a regular geometry.
For nondegenerate Riemannian triangulations, these additional sectors
have negative deficit and describe conical excesses, locally
characterized as branched geometries \cite{Troyanov1986EuclideanSurfaces}. The real-valued dilaton excludes these
vertex singularities. It is therefore natural to consider the integration over $\mathbb{R}$. We will see in \S\ref{subsubsec:BF-to-homogeneous-space}, that the BF approach leads instead to an integration over $SO(2)$, hence it will  allow for branched geometries, unless restricted by hand.

Integrating each $\Phi_{v}$ over $\mathbb{R}$  gives the proposed Regge path integral for a given observable $\mathcal{O}(l)$ in terms of the edge length $l$.
\begin{equation}
    Z_{\mathrm{LDG}}^{\Delta}[\mathcal{O}]:= \int_{\mathcal{D}_{\Delta}}\Bigg[\prod_{e\in E(\Delta)}S_{\Lambda}(l_{e})dl_{e}\Bigg]\Bigg[\prod_{\tau\in T(\Delta)}\frac{1}{V_{\tau,\Lambda}}\Bigg]\Bigg[\prod_{v\in V(\Delta)}\delta_{\mathbb{R}}(\delta_{v}(l;\Lambda))\Bigg] \mathcal{O}(l).\label{eq:homo-Regge-path-integral}
\end{equation}
For  the BF-to-Regge comparison we restrict to a subcomplex triangulating a disk $D\subset M$.
This restriction is relevant in the present derivation because it avoids the additional global holonomy data associated with non-contractible cycles; a global equivalence on arbitrary topology is not established here. The restriction of \eqref{eq:homo-Regge-path-integral} to this subcomplex is given by,
\begin{equation}
   Z_{\mathrm{LDG}}^{\Delta,D}[\mathcal{O}]:= \int_{\mathcal{D}_{\Delta}}\Bigg[\prod_{e\in E(\Delta)}S_{\Lambda}(l_{e})dl_{e}\Bigg]\Bigg[\prod_{\tau\in T(\Delta)}\frac{1}{V_{\tau,\Lambda}}\Bigg]\Bigg[\prod_{v\in V^{0}(\Delta)}\delta_{\mathbb{R}}(\delta_{v}(l;\Lambda))\Bigg]\mathcal{O}(l).\label{eq:disk-homo-regge-path-integral}
\end{equation}
Here $\Delta$ denotes the subcomplex triangulating the disk $D\subset M$, and $V^0(\Delta)$ its interior vertices. We now construct the BF discretization underlying the BF to Regge correspondence, from which the above Regge path integral may be derived.

\section{A universal discretization of two-dimensional BF theory}\label{sec:split-discretization-of-2D-BF}

We construct a ribbon discretization of two-dimensional BF theory following \cite{dupuis_q-deformed_2020,meusburger_hopf_2021,meusburger_kitaev_2017,spies_poisson-geometric_2021}. Sections \ref{subsubsec:split-discretized-BF-action}--\ref{subsec:BF-pachner-moves} define its holonomies and constraints, establish gauge and cut invariance, and prove the local Pachner identities.

The construction is ``universal'' in the sense that a partial gauge fixing (``pgf'') procedure can be used to recover either the usual BF discretization on dual links \cite{livine_2d_2001} or its counterpart on triangle edges \cite{oeckl_generalized_2003,witten_quantum_1991,dittrich_flux_2015, dupuis_discretization_2017} (see \S\ref{subsubsec:recovering-standard-BF} and Figure \ref{fig:three-BF-discretizations}). The ribbon refinement therefore adds only pure gauge data. An alternative partial gauge fixing based on the Iwasawa decomposition of $G_{\sigma,\kappa}$ yields the \textit{split} BF model whose path integral is shown in \S\ref{sec:recovering-homo-regge} to recover the homogeneous Regge gravity path integral \eqref{eq:disk-homo-regge-path-integral}. We note that this type of gauge-fixing was already used in the three dimensional gravity context \cite{dupuis_q-deformed_2020}, to move from a Chern-Simons picture to the BF one. 

\begin{figure}[ht]
    \centering
    \begin{tikzpicture}[font=\small]
\path[fill=jtGrey] 
(1.9000000,3.110000) -- 
(2.886667,2.150000) -- 
(1.900000,1.210000) -- 
(1.900000,3.136667) -- 
(0.913333,2.150000) -- 
(1.900000,1.210000) --cycle;
\draw[jt faint] (1.900000,2.150000) -- (0.420000,0.670000);
\draw[jt faint] (0.420000,0.670000) -- (3.380000,0.670000);
\draw[jt faint] (1.900000,2.150000) -- (0.420000,3.630000);
\draw[jt faint] (3.380000,3.630000) -- (0.420000,3.630000);
\draw[jt faint] (1.900000,2.150000) -- (3.380000,3.630000);
\draw[jt faint] (0.420000,0.670000) -- (0.420000,3.630000);
\draw[jt faint] (3.380000,0.670000) -- (3.380000,3.630000);
\draw[jt faint] (1.900000,2.150000) -- (3.380000,0.670000);
\draw[jt edge,jt arrow] (1.900000,1.163333) -- 
(2.886667,2.150000);
\draw[jt edge,jt arrow] (1.900000,1.163333) -- (1.900000,0.448000);
\fill[black] (1.900000,1.163333) circle[radius=1.65pt];
\draw[jt edge,jt arrow] (2.886667,2.150000) --  
(1.900000,3.136667);
\draw[jt edge,jt arrow] (2.886667,2.150000) -- (3.602000,2.150000);
\fill[black] (2.886667,2.150000) circle[radius=1.65pt];
\draw[jt edge,jt arrow] (1.900000,3.136667) -- 
(0.913333,2.150000);
\draw[jt edge,jt arrow] (1.900000,3.136667) -- (1.900000,3.852000);
\fill[black] (1.900000,3.136667) circle[radius=1.65pt];
\draw[jt edge,jt arrow] (0.913333,2.150000) -- 
(1.900000,1.163333);
\draw[jt edge,jt arrow] (0.913333,2.150000) -- (0.198000,2.150000);
\fill[black] (0.913333,2.150000) circle[radius=1.65pt];
\fill[black!35] (1.900000,2.150000) circle[radius=1.65pt];
\fill[black!35] (0.420000,0.670000) circle[radius=1.65pt];
\fill[black!35] (3.380000,0.670000) circle[radius=1.65pt];
\fill[black!35] (3.380000,3.630000) circle[radius=1.65pt];
\fill[black!35] (0.420000,3.630000) circle[radius=1.65pt];
\node[jt label] at (1.929600,2.372000) {$v^{*}$};
\node[jt label] at (1.633600,1.084400) {$n$};
\node[jt label] at (1.900000,4.100000) {Link holonomies};
\node[jt label] at (1.900000,0.250000) {$\Delta^*$};
\path[fill=jtSage] (6.650000,2.150000) -- (5.170000,0.670000) -- (8.130000,0.670000) -- cycle;
\path[fill=jtSage] (6.650000,2.150000) -- (8.130000,0.670000) -- (8.130000,3.630000) -- cycle;
\path[fill=jtSage] (6.650000,2.150000) -- (8.130000,3.630000) -- (5.170000,3.630000) -- cycle;
\path[fill=jtSage] (6.650000,2.150000) -- (5.170000,3.630000) -- (5.170000,0.670000) -- cycle;

\draw[jt faint] (7.6366,2.150000) -- 
(8.3586,2.150000) ;
\draw[jt faint] (6.65,3.136667) -- 
(6.65,3.852) ;
\draw[jt faint] (5.65,2.150000) -- 
(4.928,2.150000) ;
\draw[jt faint] (6.65,1.263333) -- 
(6.65,0.5413) ;

\draw[jt faint] (6.65,1.263333) -- 
(7.6366,2.150000) ;
\fill[black!35] (7.6366,2.150000) circle[radius=1.65pt];
\draw[jt faint] 
(7.6366,2.150000) --  
(6.65,3.136667);
\fill[black!35] (6.65,3.136667) circle[radius=1.65pt];
\draw[jt faint] 
(6.65,3.136667) -- 
(5.65,2.150000);
\fill[black!35] (5.65,2.150000) circle[radius=1.65pt];
\draw[jt faint] (5.65,2.150000) -- 
(6.65,1.263333);

\draw[jt edge,jt arrow] (6.650000,2.150000) -- (5.170000,0.670000);
\draw[jt edge,jt arrow] (5.170000,0.670000) -- (8.130000,0.670000);
\draw[jt edge,jt arrow] (6.650000,2.150000) -- (5.170000,3.630000);
\draw[jt edge,jt arrow] (8.130000,3.630000) -- (5.170000,3.630000);
\draw[jt edge,jt arrow] (6.650000,2.150000) -- (8.130000,3.630000);
\draw[jt edge,jt arrow] (5.170000,0.670000) -- (5.170000,3.630000);
\draw[jt edge,jt arrow] (8.130000,0.670000) -- (8.130000,3.630000);
\draw[jt edge,jt arrow] (6.650000,2.150000) -- (8.130000,0.670000);
\fill[black] (6.650000,2.150000) circle[radius=1.65pt];
\fill[black] (5.170000,0.670000) circle[radius=1.65pt];
\fill[black] (8.130000,0.670000) circle[radius=1.65pt];
\fill[black] (8.130000,3.630000) circle[radius=1.65pt];
\fill[black] (5.170000,3.630000) circle[radius=1.65pt];
\fill[black!35] (6.650000,1.284400) circle[radius=1.65pt];
\node[jt label] at (6.650000,4.100000) {Edge holonomies};
\node[jt label] at (6.650000,0.250000) {$\Delta$};
\node[jt label] at (6.650000,2.372000) {$v$};
\node[jt label] at (6.50000,1.084400) {$\tau$};
\begin{scope}[jt arrow/.style={postaction={decorate},decoration={markings,
  mark=at position .5 with {\arrow{Stealth[length=1.65mm,width=1.15mm]}}}}]
\clip (9.712800,0.462800) rectangle (13.087200,3.837200);
\path[fill=jtGrey] (9.712800,0.462800) -- (13.087200,0.462800) -- (13.087200,3.837200) -- (9.712800,3.837200) -- cycle;
\path[fill=jtSage] (11.400000,1.755333) -- (10.512000,0.867333) -- (12.288000,0.867333) -- cycle;
\path[fill=jtSage] (11.794667,2.150000) -- (12.682667,1.262000) -- (12.682667,3.038000) -- cycle;
\path[fill=jtSage] (11.400000,2.544667) -- (12.288000,3.432667) -- (10.512000,3.432667) -- cycle;
\path[fill=jtSage] (11.005333,2.150000) -- (10.117333,3.038000) -- (10.117333,1.262000) -- cycle;
\path[fill=jtSage] (12.288000,0.472667) -- (10.512000,0.472667) -- (11.400000,-0.415333) -- cycle;
\path[fill=jtSage] (13.077333,3.038000) -- (13.077333,1.262000) -- (13.965333,2.150000) -- cycle;
\path[fill=jtSage] (10.512000,3.827333) -- (12.288000,3.827333) -- (11.400000,4.715333) -- cycle;
\path[fill=jtSage] (9.722667,1.262000) -- (9.722667,3.038000) -- (8.834667,2.150000) -- cycle;
\path[fill=jtBlue] (11.400000,1.755333) -- (10.512000,0.867333) -- (10.117333,1.262000) -- (11.005333,2.150000) -- cycle;
\path[fill=jtBlue] (10.512000,0.867333) -- (12.288000,0.867333) -- (12.288000,0.472667) -- (10.512000,0.472667) -- cycle;
\path[fill=jtBlue] (11.400000,1.755333) -- (12.288000,0.867333) -- (12.682667,1.262000) -- (11.794667,2.150000) -- cycle;
\path[fill=jtBlue] (12.682667,1.262000) -- (12.682667,3.038000) -- (13.077333,3.038000) -- (13.077333,1.262000) -- cycle;
\path[fill=jtBlue] (11.794667,2.150000) -- (12.682667,3.038000) -- (12.288000,3.432667) -- (11.400000,2.544667) -- cycle;
\path[fill=jtBlue] (12.288000,3.432667) -- (10.512000,3.432667) -- (10.512000,3.827333) -- (12.288000,3.827333) -- cycle;
\path[fill=jtBlue] (11.400000,2.544667) -- (10.512000,3.432667) -- (10.117333,3.038000) -- (11.005333,2.150000) -- cycle;
\path[fill=jtBlue] (10.117333,1.262000) -- (10.117333,3.038000) -- (9.722667,3.038000) -- (9.722667,1.262000) -- cycle;
\path[fill=jtGrey] (11.400000,1.755333) -- (11.794667,2.150000) -- (11.400000,2.544667) -- (11.005333,2.150000) -- cycle;
\path[fill=jtGrey] (10.512000,0.472667) -- (10.512000,0.867333) -- (10.117333,1.262000) -- (9.722667,1.262000) -- cycle;
\path[fill=jtGrey] (12.288000,0.472667) -- (13.077333,1.262000) -- (12.682667,1.262000) -- (12.288000,0.867333) -- cycle;
\path[fill=jtGrey] (12.288000,3.432667) -- (12.682667,3.038000) -- (13.077333,3.038000) -- (12.288000,3.827333) -- cycle;
\path[fill=jtGrey] (9.722667,3.038000) -- (10.117333,3.038000) -- (10.512000,3.432667) -- (10.512000,3.827333) -- cycle;
\draw[jt edge,jt arrow] (11.400000,1.755333) -- (10.512000,0.867333);
\draw[jt edge,jt arrow] (10.512000,0.867333) -- (12.288000,0.867333);
\draw[jt edge,jt arrow] (11.400000,1.755333) -- (12.288000,0.867333);
\draw[jt edge,jt arrow] (11.794667,2.150000) -- (12.682667,1.262000);
\draw[jt edge,jt arrow] (12.682667,1.262000) -- (12.682667,3.038000);
\draw[jt edge,jt arrow] (11.794667,2.150000) -- (12.682667,3.038000);
\draw[jt edge,jt arrow] (11.400000,2.544667) -- (12.288000,3.432667);
\draw[jt edge,jt arrow] (12.288000,3.432667) -- (10.512000,3.432667);
\draw[jt edge,jt arrow] (11.400000,2.544667) -- (10.512000,3.432667);
\draw[jt edge,jt arrow] (11.005333,2.150000) -- (10.117333,3.038000);
\draw[jt edge,jt arrow] (10.117333,1.262000) -- (10.117333,3.038000);
\draw[jt edge,jt arrow] (11.005333,2.150000) -- (10.117333,1.262000);
\draw[jt edge,jt arrow] (10.512000,0.472667) -- (12.288000,0.472667);
\draw[jt edge,jt arrow] (10.512000,0.472667) -- (11.400000,-0.415333);
\draw[jt edge,jt arrow] (11.400000,-0.415333) -- (12.288000,0.472667);
\draw[jt edge,jt arrow] (13.077333,1.262000) -- (13.077333,3.038000);
\draw[jt edge,jt arrow] (13.077333,1.262000) -- (13.965333,2.150000);
\draw[jt edge,jt arrow] (13.965333,2.150000) -- (13.077333,3.038000);
\draw[jt edge,jt arrow] (12.288000,3.827333) -- (10.512000,3.827333);
\draw[jt edge,jt arrow] (12.288000,3.827333) -- (11.400000,4.715333);
\draw[jt edge,jt arrow] (11.400000,4.715333) -- (10.512000,3.827333);
\draw[jt edge,jt arrow] (9.722667,1.262000) -- (9.722667,3.038000);
\draw[jt edge,jt arrow] (9.722667,3.038000) -- (8.834667,2.150000);
\draw[jt edge,jt arrow] (8.834667,2.150000) -- (9.722667,1.262000);
\draw[jt edge,jt arrow] (11.005333,2.150000) -- (11.400000,1.755333);
\draw[jt edge,jt arrow] (10.117333,1.262000) -- (10.512000,0.867333);
\draw[jt edge,jt arrow] (10.512000,0.472667) --
(10.512000,0.867333);
\draw[jt edge,jt arrow] 
(12.288000,0.472667)--
(12.288000,0.867333) ;
\draw[jt edge,jt arrow] (11.400000,1.755333) -- (11.794667,2.150000);
\draw[jt edge,jt arrow] (12.288000,0.867333) -- (12.682667,1.262000);
\draw[jt edge,jt arrow] 
(13.077333,1.262000) --
(12.682667,1.262000);
\draw[jt edge,jt arrow] 
(13.077333,3.038000)--(12.682667,3.038000);
\draw[jt edge,jt arrow] (11.794667,2.150000) -- (11.400000,2.544667);
\draw[jt edge,jt arrow] (12.682667,3.038000) -- (12.288000,3.432667);
\draw[jt edge,jt arrow] 
(12.288000,3.827333) --
(12.288000,3.432667) ;
\draw[jt edge,jt arrow] 
(10.512000,3.827333) --
(10.512000,3.432667);
\draw[jt edge,jt arrow] (11.400000,2.544667) -- (11.005333,2.150000);
\draw[jt edge,jt arrow] (10.512000,3.432667) -- (10.117333,3.038000);
\draw[jt edge,jt arrow] (10.117333,1.262000) -- (9.722667,1.262000);
\draw[jt edge,jt arrow] (10.117333,3.038000) -- (9.722667,3.038000);
\fill[black] (11.400000,1.755333) circle[radius=1.65pt];
\fill[black] (10.512000,0.867333) circle[radius=1.65pt];
\fill[black] (12.288000,0.867333) circle[radius=1.65pt];
\fill[black] (11.794667,2.150000) circle[radius=1.65pt];
\fill[black] (12.682667,1.262000) circle[radius=1.65pt];
\fill[black] (12.682667,3.038000) circle[radius=1.65pt];
\fill[black] (11.400000,2.544667) circle[radius=1.65pt];
\fill[black] (12.288000,3.432667) circle[radius=1.65pt];
\fill[black] (10.512000,3.432667) circle[radius=1.65pt];
\fill[black] (11.005333,2.150000) circle[radius=1.65pt];
\fill[black] (10.117333,3.038000) circle[radius=1.65pt];
\fill[black] (10.117333,1.262000) circle[radius=1.65pt];
\fill[black] (12.288000,0.472667) circle[radius=1.65pt];
\fill[black] (10.512000,0.472667) circle[radius=1.65pt];
\fill[black] (11.400000,-0.415333) circle[radius=1.65pt];
\fill[black] (13.077333,3.038000) circle[radius=1.65pt];
\fill[black] (13.077333,1.262000) circle[radius=1.65pt];
\fill[black] (13.965333,2.150000) circle[radius=1.65pt];
\fill[black] (10.512000,3.827333) circle[radius=1.65pt];
\fill[black] (12.288000,3.827333) circle[radius=1.65pt];
\fill[black] (11.400000,4.715333) circle[radius=1.65pt];
\fill[black] (9.722667,1.262000) circle[radius=1.65pt];
\fill[black] (9.722667,3.038000) circle[radius=1.65pt];
\fill[black] (8.834667,2.150000) circle[radius=1.65pt];
\end{scope}
\node[jt label] at (11.400000,4.100000) {Ribbon refinement};
\path[fill=jtGrey,jt edge] (2.950000,-0.420000) -- (3.150000,-0.420000) -- (3.150000,-0.220000) -- (2.950000,-0.220000) -- cycle;
\node[jt label,anchor=west,font=\footnotesize] at (3.260000,-0.320000) {dual face $v^{*}$};
\path[fill=jtSage,jt edge] (6.400000,-0.420000) -- (6.600000,-0.420000) -- (6.600000,-0.220000) -- (6.400000,-0.220000) -- cycle;
\node[jt label,anchor=west,font=\footnotesize] at (6.710000,-0.320000) {triangle face $\tau$};
\path[fill=jtBlue,jt edge] (9.650000,-0.420000) -- (9.850000,-0.420000) -- (9.850000,-0.220000) -- (9.650000,-0.220000) -- cycle;
\node[jt label,anchor=west,font=\footnotesize] at (9.960000,-0.320000) {ribbon face};
\end{tikzpicture}

    \caption{Three BF discretizations. Left: holonomies on oriented links of $\Delta^*$, with grey dual face $v^*$ and node $n=\tau^*$. Center: holonomies on oriented edges of $\Delta$, with green triangle face $\tau=n^*$. Right: the ribbon refinement, with holonomies on all sides of the blue ribbon faces (cf. Figure \ref{fig:ribboncells}). Partial gauge fixing collapses the ribbons to recover either of the first two discretizations.}
    \label{fig:three-BF-discretizations}
\end{figure}

\subsection{From continuum to discrete}\label{subsubsec:split-discretized-BF-action}

We start with the standard continuum two-dimensional BF action (omitting overall constant factors)
\begin{equation}
    S_{\mathrm{BF}}[A,B]:= \int_{M}\langle B, \cF(A)\rangle\label{eq:2d_bf_action}
\end{equation}
with gauge group $G_{\sigma,\kappa}$.

\paragraph{Discretization and ribbon structure.}

Recall the triangulation notation $\Delta$ of $M$ as described in the first paragraph of \S\ref{subsec:Regge-discretization-of-LDG}. A ribbon refinement of such a general triangulation is constructed as follows. 

For every $v\in \partial \tau$ choose a point near $v$ in the interior of $\tau$ and label it $(\tau, v)$. Across each original edge $e\in E(\Delta)$ connect $(\tau_{R}, v)$ to $(\tau_{L}, v)$ near $v = s(e)$ and near $v = t(e)$ respectively without any crossings, where we define a left triangle $\tau_{L}$ and right triangle $\tau_{R}$ respectively as the compatibly oriented and oppositely oriented boundary sides of $e$. This forms a ribbon refinement of $\Delta$ with the following data:
\begin{enumerate}
    \item A quadrilateral ribbon for each $e$, with \textit{corners}  $(\tau_{R}, s(e)), (\tau_{L},s(e)), (\tau_{L}, t(e)),(\tau_{R},t(e))$.
    \item A triangle face inside each original $\tau$.
    \item A polygonal dual face surrounding each original $v$.
\end{enumerate}
The first item above is the ribbon refinement data, whereas the latter two items are the standard data of $\Delta$ and $\Delta^{*}$.\footnote{One may think of ribbon corners as nodes in the dual face to $v$ by Poincar\'{e} duality, subject to the following key difference: in $\Delta^{*}$, the two faces associated to $s(e)$ and $t(e)$ share an oriented link crossing $e$, whereas in the ribbon construction this shared link is blown up into the two oriented sides $(\tau_{R},s(e))\to (\tau_{L},s(e))$ and $(\tau_{R},t(e))\to (\tau_{L},t(e))$ of the ribbon. Analogously, the single shared edge $e$ along which two triangles in $\Delta$ are glued is blown up into the two oriented sides $(\tau_{L},s(e))\to (\tau_{L},t(e))$ and $(\tau_{R},s(e)) \to (\tau_{R},t(e))$ of the ribbon.} The number of ribbon corners, ribbon sides, and total faces\footnote{The total number of faces is the sum of the number of ribbon faces $\abs{E(\Delta)}$, triangle faces $\abs{T(\Delta)}$, and dual faces $\abs{F(\Delta^{*})} = \abs{V(\Delta)}$.} are respectively,
\begin{equation}
    2\abs{E(\Delta)},\quad 4\abs{E(\Delta)}, \quad \abs{E(\Delta)}+\abs{T(\Delta)}+\abs{V(\Delta)}.
\end{equation}
The ribbon cellulation is obtained by a local refinement of the original triangulation and therefore represents the same surface. As a consistency check, it
 has the same Euler characteristic as the original triangulation $\Delta$, 
\begin{equation}
    2\abs{E(\Delta)}-4\abs{E(\Delta)}+(\abs{E(\Delta)}+\abs{T(\Delta)}+\abs{V(\Delta)}) = \abs{V(\Delta)}-\abs{E(\Delta)}+\abs{T(\Delta)}.
\end{equation}

We discretize $A$ using off-shell $G_{\sigma,\kappa}$-holonomies as follows. Choose a reference point $p\in M$ and define 
\begin{equation}
    g(\tau, v) = \mathcal{P}\mathrm{exp}\Big(\int_{p\to(\tau,v)}A\Big),\quad g(p) = 1
\end{equation}
such that $g(\tau,v)$ denotes the value of the field $g$ at the corner $(\tau, v)$.\footnote{We emphasize that $(\tau,v)$ is a \textit{label} for a single point in $M$, see Fig. \ref{fig:ribboncells}, it is not a tuple of a point ``$\tau$'' and a point ``$v$''. In this sense one still has the usual interpretation $g:M\to G_{\sigma,\kappa}$.} In a simply connected neighborhood of each face, flatness $\cF(A)\approx 0$ makes this expression independent of the chosen path. Our fundamental discrete variables are defined as the (off-shell) $G_{\sigma,\kappa}$-holonomies
\begin{subequations}
\begin{align}
\bar{h}^{v}_{e}&:=\mathcal{P}\mathrm{exp}\Big(\int_{(\tau_{R},v)\to (\tau_{L},v)}A\Big)\in G_{\sigma,\kappa},\quad v = s(e), t(e),\label{eq:framed-h-holonomy-def}\\
\bar{\ell}^{\tau}_{e}&:=\mathcal{P}\mathrm{exp}\Big(\int_{(\tau,s(e))\to (\tau,t(e))}A\Big)\in G_{\sigma,\kappa},\quad \tau = \tau_{L},\tau_{R}.\label{eq:framed-ell-holonomy-def}
\end{align}
\end{subequations}
The complete ribbon structure from this discretization of $A$ is captured by Figure \ref{fig:ribboncells}.
\begin{figure}[ht]
\centering
\begin{tikzpicture}[scale=.95,font=\small]
 \fill[ribbonblue!10] (-1,-2) rectangle (1,2);
 \fill[facegreen!9] (-4.5,-1.45)--(-1,-2)--(-1,2)--(-4.5,1.45)--cycle;
 \fill[facegreen!9] (4.5,-1.45)--(1,-2)--(1,2)--(4.5,1.45)--cycle;
 \fill[black!5] (-1,2)--(-1.4,3.9)--(1.4,3.9)--(1,2)--cycle;
 \fill[black!5] (-1,-2)--(-1.4,-3.9)--(1.4,-3.9)--(1,-2)--cycle;

 \coordinate (Ls) at (-1,-2);
 \coordinate (Rs) at (1,-2);
 \coordinate (Lt) at (-1,2);
 \coordinate (Rt) at (1,2);

 \draw[transport] (Rs)--node[below=3pt] {$\bar h_e^{s(e)}$} (Ls);
 \draw[transport] (Rt)--node[above=3pt] {$\bar h_e^{t(e)}$} (Lt);
 \draw[transport] (Ls)--node[left=4pt] {$\bar\ell_e^{\tau_L}$} (Lt);
 \draw[transport] (Rs)--node[right=4pt] {$\bar\ell_e^{\tau_R}$} (Rt);

 \foreach \p in {Ls,Rs,Lt,Rt} \node[corner] at (\p) {};
 \node[above left=2pt] at (Lt) {$(\tau_L,t(e))$};
 \node[above right=2pt] at (Rt) {$(\tau_R,t(e))$};
 \node[below left=2pt] at (Ls) {$(\tau_L,s(e))$};
 \node[below right=2pt] at (Rs) {$(\tau_R,s(e))$};

 \node[align=center,text=ribbonblue] at (0,0) {ribbon face\\at $e$};
 \node[align=center,text=facegreen] at (-3.0,0) {triangle face\\at $\tau_L$};
 \node[align=center,text=facegreen] at (3.0,0) {triangle face\\at $\tau_R$};
 \node[align=center] at (0,3.30) {dual face\\at $t(e)$};
 \node[align=center] at (0,-3.30) {dual face\\at $s(e)$};
\end{tikzpicture}
\caption{A ribbon and its four neighboring faces. All ribbon sides carry $G_{\sigma,\kappa}$-holonomies of the same connection $A$. The two paths from $(\tau_R,s(e))$ to $(\tau_L,t(e))$ have products $\bar h_e^{s(e)}\bar\ell_e^{\tau_L}$ and $\bar\ell_e^{\tau_R}\bar h_e^{t(e)}$ respectively. The shaded neighboring regions continue into the rest of the cellulation. We will often refer to $\bar{h}^{s(e)}_{e}, \bar{h}^{t(e)}_{e}$ as the \textit{short} side holonomies, and $\bar{\ell}^{\tau_{L}}_{e},\bar{\ell}^{\tau_{R}}_{e}$ as the \textit{long} side holonomies.}
\label{fig:ribboncells}
\end{figure}

The next step is to determine the appropriate discrete constraints to impose in this scheme. The key point is that we have three regions on which to impose the continuum constraint $\cF(A)\approx 0$: on each ribbon face, on each triangle face, and on each dual face. In each case we have path independence on-shell, consequently,
\begin{equation}{\label{eq:framed-flatness}}
    \cF(A)\approx 0 \implies \begin{cases}
        \bar{h}^{v}_{e}\approx g(\tau_{R},v)^{-1}g(\tau_{L},v),\:\forall v\in \{s(e), t(e)\},\\
        \bar{\ell}^{\tau}_{e} \approx g(\tau, s(e))^{-1}g(\tau, t(e)),\:\forall \tau\in \{\tau_{L},\tau_{R}\}.
    \end{cases} 
\end{equation}
On each ribbon face the implication is that
\begin{equation}
    \bar{h}^{s(e)}_{e}\bar{\ell}^{\tau_{L}}_{e} \approx g(\tau_{R}, s(e))^{-1}g(\tau_{L},t(e)) \approx \bar{\ell}^{\tau_{R}}_{e}\bar{h}^{t(e)}_{e},
\end{equation}
therefore,
\begin{equation}
    \cF(A)\approx 0 \text{ on a ribbon face}\implies  \bar{\mathcal{W}}_{e}:=\bar{h}^{s(e)}_{e}\bar{\ell}^{\tau_{L}}_{e} (\bar{h}^{t(e)}_{e})^{-1}(\bar{\ell}^{\tau_{R}}_{e})^{-1}\approx 1.
\end{equation}
On each triangle face $\tau  = (v_{1}v_{2}v_{3})$ we define (off-shell),
\begin{equation}
    \bar{\cG}_{\tau}:= \bar{L}_{1}\bar{L}_{2}\bar{L}_{3},\quad \bar{L}_{i}:= (\bar{\ell}^{\tau}_{e_{i}})^{\epsilon_{\tau}(e_{i})},
\end{equation}
where $\bar{\ell}^{\tau}_{e_{i}}$ is defined by \eqref{eq:framed-ell-holonomy-def} and $\epsilon_{\tau}(e) = \pm 1$ depending on if the orientation of $e\in \partial\tau$ agrees with $\partial \tau$, therefore,
\begin{equation}
    \cF(A)\approx 0\text{ on a triangle face}\implies \bar{\cG}_{\tau}\approx g(\tau,v_1)^{-1}g(\tau,v_2)g(\tau,v_2)^{-1}g(\tau,v_3)g(\tau,v_3)^{-1}g(\tau,v_1)= 1.
\end{equation}
On each dual face, order the corners $(\tau_{i}, v)$ along its boundary and define
\begin{equation}
    \bar{\cR}_{v}:= \bar{h}_{1}\cdots \bar{h}_{m}
\end{equation}
where each $\bar{h}_{i}$ is shorthand notation for the (oriented) ribbon side holonomy from $(\tau_{i},v)$ to $(\tau_{i+1},v)$ as defined by \eqref{eq:framed-h-holonomy-def}, using an inverse when required and taking $\tau_{m+1} = \tau_{1}$. Therefore,
\begin{equation}
    \cF(A)\approx 0 \text{ on a dual face}\implies \bar{\cR}_{v}\approx \prod_{i=1}^{m}g(\tau_{i},v)^{-1}g(\tau_{i+1},v) = 1.
\end{equation}
We conclude then that the appropriate discretized constraints are,
\begin{subequations}{\label{eq:framed-discreted-constraints}}
\begin{align}
        \bar{h}^{s(e)}_{e}\bar{\ell}^{\tau_{L}}_{e} (\bar{h}^{t(e)}_{e})^{-1}(\bar{\ell}^{\tau_{R}}_{e})^{-1}&\approx 1\text{ on each ribbon face},\\
        \bar{\cG}_{\tau} &\approx 1\text{ on each triangle face},\\
        \bar{\cR}_{v} &\approx 1\text{ on each dual face.}
\end{align}
\end{subequations}
The continuum constraint $\cF(A)\approx 0$ has simply been split and smeared across the three face regions of the discretization. The path integral for this model is then taken to have the following distributional form,
\begin{equation}{\label{eq:ribbon-split-BF-path-integral}}
    Z^{\Delta}_{\mathrm{BF},\mathrm{rib}}:= \int \prod_{e\in E(\Delta)}\Big(d\bar{h}^{s(e)}_{e}d\bar{\ell}^{\tau_{L}}_{e}d\bar{\ell}^{\tau_{R}}_{e}d\bar{h}^{t(e)}_{e}\Big)\prod_{e\in E(\Delta)}\delta_{G_{\sigma,\kappa}}(\bar{\mathcal{W}}_{e})\prod_{\tau\in T(\Delta)}\delta_{G_{\sigma,\kappa}}(\bar{\cG}_{\tau})\prod_{v\in V(\Delta)}\delta_{G_{\sigma,\kappa}}(\bar{\cR}_{v}),
\end{equation}
where all delta functions above are on $G_{\sigma,\kappa}$ and all measures are the unimodular Haar measure on $G_{\sigma,\kappa}$. We treat these expressions as formal BF amplitudes, dividing out by gauge volumes when appropriate.

\subsection{Gauge transformations and path integral invariance}\label{subsubsec:gauge-sym-edge-redundancy}

The discretized gauge symmetries satisfied by \eqref{eq:ribbon-split-BF-path-integral} take a standard form reminiscent of lattice BF gauge theory. A gauge parameter $g_{\tau v}\in G_{\sigma,\kappa}$ acts at each corner $(\tau, v)$. The continuum transformation
\begin{equation}
    A\mapsto g^{-1}Ag+g^{-1}dg,
\end{equation}
includes the ribbon holonomy transformations
\begin{subequations}{\label{eq:framed-holonomy-gauge-transformations}}
    \begin{align}
    \bar{h}^{s(e)}_{e}{}' &= g^{-1}_{\tau_{R}s(e)}\bar{h}^{s(e)}_{e}g_{\tau_{L}s(e)},\quad \bar{\ell}^{\tau_{L}}_{e}{}' = g_{\tau_{L}s(e)}^{-1}\bar{\ell}^{\tau_{L}}_{e}g_{\tau_{L}t(e)},\\
    \bar{h}^{t(e)}_{e}{}' &= g_{\tau_{R}t(e)}^{-1}\bar{h}^{t(e)}_{e}g_{\tau_{L}t(e)},\quad \bar{\ell}^{\tau_{R}}_{e}{}' = g_{\tau_{R}s(e)}^{-1}\bar{\ell}^{\tau_{R}}_{e}g_{\tau_{R}t(e)}.
    \end{align}
\end{subequations}
We see all ribbon sides meeting at a common corner transform under the same $g_{\tau v}$. 

A \textit{cut} chooses the starting corner of a cyclic constraint product. Taking the cuts of $\bar{\cG}_\tau$ and $\bar{\cR}_v$ at $(\tau,v_0)$ and $(\tau_0,v)$ respectively gives
\begin{equation}
   \bar{\mathcal{W}}_{e}' = g_{\tau_{R}s(e)}^{-1}\bar{\mathcal{W}}_{e}g_{\tau_{R}s(e)},\quad \bar{\cG}_{\tau}' = g_{\tau v_{0}}^{-1}\bar{\cG}_{\tau}g_{\tau v_{0}},\quad \bar{\cR}_{v}' = g_{\tau_{0} v}^{-1}\bar{\cR}_{v}g_{\tau_{0}v}.
\end{equation}
The intermediate gauge parameters cancel, leaving conjugation at the cut.

Unimodularity of $G_{\sigma,\kappa}$ gives gauge and cut invariance. Each measure is invariant under left and right multiplication,
\begin{equation}
    d\bar{h}^{s(e)}_{e}{}' = d\bar{h}^{s(e)}_{e},\quad d\bar{\ell}^{\tau_{L}}_{e}{}' = d\bar{\ell}^{\tau_{L}}_{e},\quad  d\bar{h}^{t(e)}_{e}{}' = d\bar{h}^{t(e)}_{e},\quad d\bar{\ell}^{\tau_{R}}_{e}{}' = d\bar{\ell}^{\tau_{R}}_{e}.
\end{equation}
The delta functions of the constraints are invariant given that they satisfy the conjugation invariant identity,
\begin{equation}
    \delta_{G_{\sigma,\kappa}}(g^{-1}ag) = \delta_{G_{\sigma,\kappa}}(a),\:\forall g,a\in G_{\sigma,\kappa}
\end{equation}
again by unimodularity. A cyclic change of cut also conjugates the constraint, so the same identity gives cut invariance.

\subsubsection{Recovering the standard BF discretization}\label{subsubsec:recovering-standard-BF}

To recover the edge holonomy discretization of BF, we collapse link ribbon sides by gauge fixing the vertex faces. Exchanging gauge fixing of vertex faces with gauge fixing of triangle faces recovers the link discretization instead. The gauge invariance demonstrated in the previous section allows us to do this easily.

At each vertex, writing $\bar{\cR}_{v} = \bar{h}_{1}\cdots \bar{h}_{m}$, choose the tree for gauge fixing along the ribbon sides from $\bar{h}_{2}$ to $\bar{h}_{m}$ such that the corner of $\bar{h}_{1}$ is the root of the tree. At every non-root point in this tree we use the gauge fixing insertion,
\begin{equation}
   1= \int dg_{\tau_{i}v}\delta_{G_{\sigma,\kappa}}(g_{\tau_{i}v}^{-1}\bar{h}_{i}g_{\tau_{i+1}v}),\quad g_{\tau_{i}v} = \bar{h}_{i}g_{\tau_{i+1}v},\quad i = 2,\ldots m,\quad g_{\tau_{1}v} = 1.
\end{equation}
This sets $\bar h_2=\cdots=\bar h_m=1$ and reduces the vertex constraint to
\begin{equation}
    \bar{\cR}_{v}\mapsto \bar{h}_{1},
\end{equation}
thus integrating over the remaining $\bar{h}_{1}$ root variables sets them to $1$,
\begin{equation}
 \int d\bar h_1\,\delta_{G_{\sigma,\kappa}}(\bar{\cR}_{v})f(\bar h_1)\mapsto \int d\bar h_1\,\delta_{G_{\sigma,\kappa}}(\bar h_1)f(\bar h_1)=f(1),
\end{equation}
where ``$\mapsto$'' denotes the insertion of the gauge fixing factor. All $\bar{h}_{i}$ ribbon holonomies are thereby removed.

The ribbon flatness constraint is the last remaining piece, which after this partial gauge fixing takes the form
\begin{equation}
    \bar{\mathcal{W}}_{e}\mapsto \bar{\ell}^{\tau_{L}}_{e}(\bar{\ell}^{\tau_{R}}_{e})^{-1}.
\end{equation}
Integrating over all $\delta_{G_{\sigma,\kappa}}(\bar{\mathcal{W}}_{e})$ sets $\bar{g}_{e}:=\bar{\ell}^{\tau_{L}}_{e} = \bar{\ell}^{\tau_{R}}_{e}$, collapsing all ribbon faces. The only remaining data in \eqref{eq:ribbon-split-BF-path-integral} is triangle face constraints $\bar{\cG}_{\tau}$ depending on only the $G_{\sigma,\kappa}$-holonomies $\bar{g}_{e}$. Altogether this gives the final partially gauge-fixed form of the path integral \eqref{eq:ribbon-split-BF-path-integral},
\begin{equation}
    \int \prod_{e\in E(\Delta)}d\bar{g}_{e}\prod_{\tau\in T(\Delta)}\delta_{G_{\sigma,\kappa}}\Big(\prod_{e\subset \partial \tau}\bar{g}_{e}^{\epsilon_{\tau}(e)}\Big),
\end{equation}
which is exactly the form of the BF path integral when discretized using edge $G_{\sigma,\kappa}$-holonomies. Thus, as claimed, our ribbon blow up procedure leading to \eqref{eq:ribbon-split-BF-path-integral} only introduced pure gauge data. We reiterate that one could instead perform the gauge fixing on the triangle faces to recover the standard spin foam type discretization of $A$ into $G_{\sigma,\kappa}$-holonomies on links \cite{livine_2d_2001} instead of edges.

\subsection{Pachner move invariance}\label{subsec:BF-pachner-moves}

For both the $2\leftrightarrow 2$ and $3\leftrightarrow 1$ moves we choose one frame at each original boundary vertex, fixing all boundary link sides of the ribbons to the identity. The displayed boundary holonomies are held fixed in these common frames, and only interior holonomies are integrated. Invariance under both moves makes use of the identity
\begin{equation}
    \int dg \delta_{G_{\sigma,\kappa}}(ag^{-1})\delta_{G_{\sigma,\kappa}}(gb) = \delta_{G_{\sigma,\kappa}}(ab),
\end{equation}
which follows by unimodularity of $G_{\sigma,\kappa}$.

\paragraph{The $2\leftrightarrow 2$ Pachner move.}

We compare the two diagonal refinements of a quadrilateral $(v_1v_2v_3v_4)$ with the same boundary holonomies $\bar a,\bar b,\bar c,\bar d$ (Figure \ref{fig:BF-2-to-2}).

Beginning with insertion of a diagonal $(v_{1}v_{3})$, the quadrilateral is split into two triangles $\tau_{L} = (v_{1}v_{3}v_{4})$ and $\tau_{R} = (v_{1}v_{2}v_{3})$ glued by a ribbon face with edge holonomies $\bar{h}^{v_{1}}_{e_{13}}, \bar{h}^{v_{3}}_{e_{13}}, \bar{\ell}^{\tau_{L}}_{e_{13}}, \bar{\ell}^{\tau_{R}}_{e_{13}}$ where $v_{1} = s(e_{13})$ and $v_{3} = t(e_{13})$. The relevant constraints are those for each interior face,
\begin{equation}
    \bar{\mathcal{W}}_{e_{13}} = \bar{h}^{v_{1}}_{e_{13}}\bar{\ell}^{\tau_{L}}_{e_{13}}(\bar{h}^{v_{3}}_{e_{13}})^{-1}(\bar{\ell}^{\tau_{R}}_{e_{13}})^{-1},\quad \bar{\cG}_{\tau_{R}} = \bar{a}\bar{b}(\bar{\ell}^{\tau_{R}}_{e_{13}})^{-1},\quad \bar{\cG}_{\tau_{L}} = \bar{\ell}^{\tau_{L}}_{e_{13}}\bar{c}\bar{d}.
\end{equation}
Our boundary convention sets $\bar h_{e_{13}}^{v_1}=\bar h_{e_{13}}^{v_3}=1$. Integrating the two interior long sides gives
\begin{align}
    \mathcal{A}_{(13)}^{\mathrm{gf}}&:= \int d\bar{\ell}^{\tau_L}_{e_{13}}d\bar{\ell}^{\tau_R}_{e_{13}}\delta_{G_{\sigma,\kappa}}(\bar{\mathcal W}_{e_{13}})\delta_{G_{\sigma,\kappa}}(\bar{\cG}_{\tau_R})\delta_{G_{\sigma,\kappa}}(\bar{\cG}_{\tau_L})\nonumber\\
    &=\int d\bar{\ell}^{\tau_L}_{e_{13}}\delta_{G_{\sigma,\kappa}}\Big(\bar a\bar b(\bar{\ell}^{\tau_L}_{e_{13}})^{-1}\Big)\delta_{G_{\sigma,\kappa}}\Big(\bar{\ell}^{\tau_L}_{e_{13}}\bar c\bar d\Big).
\end{align}
Hence
\begin{equation}
    \mathcal{A}_{(13)}^{\mathrm{gf}} = \delta_{G_{\sigma,\kappa}}\Big(\bar a\bar b\bar c\bar d\Big).
\end{equation}
Performing the analogous construction for the case of a $(v_{2}v_{4})$ insertion, the quadrilateral is split into two triangles $\tau_{L} = (v_{1}v_{2}v_{4})$ and $\tau_{R} = (v_{2}v_{3}v_{4})$ glued by a ribbon face with edge holonomies $\bar{h}^{v_{2}}_{e_{24}}, \bar{h}^{v_{4}}_{e_{24}}, \bar{\ell}^{\tau_{L}}_{e_{24}}, \bar{\ell}^{\tau_{R}}_{e_{24}}$ where $v_{2} = s(e_{24})$ and $v_{4} = t(e_{24})$. The interior face constraints are then
\begin{equation}
        \bar{\mathcal{W}}_{e_{24}} = \bar{h}^{v_{2}}_{e_{24}}\bar{\ell}^{\tau_{L}}_{e_{24}}(\bar{h}^{v_{4}}_{e_{24}})^{-1}(\bar{\ell}^{\tau_{R}}_{e_{24}})^{-1},\quad \bar{\cG}_{\tau_{R}} = \bar{b}\bar{c}(\bar{\ell}^{\tau_{R}}_{e_{24}})^{-1},\quad \bar{\cG}_{\tau_{L}} = \bar{\ell}^{\tau_{L}}_{e_{24}}\bar{d}\bar{a}.
\end{equation}
With $\bar h_{e_{24}}^{v_2}=\bar h_{e_{24}}^{v_4}=1$, the same integration gives
\begin{align}
    \mathcal{A}_{(24)}^{\mathrm{gf}}&:=\int d\bar{\ell}^{\tau_L}_{e_{24}}d\bar{\ell}^{\tau_R}_{e_{24}}\delta_{G_{\sigma,\kappa}}(\bar{\mathcal W}_{e_{24}})\delta_{G_{\sigma,\kappa}}(\bar{\cG}_{\tau_R})\delta_{G_{\sigma,\kappa}}(\bar{\cG}_{\tau_L})\nonumber\\
    &=\int d\bar{\ell}^{\tau_L}_{e_{24}}\delta_{G_{\sigma,\kappa}}\Big(\bar b\bar c(\bar{\ell}^{\tau_L}_{e_{24}})^{-1}\Big)\delta_{G_{\sigma,\kappa}}\Big(\bar{\ell}^{\tau_L}_{e_{24}}\bar d\bar a\Big),
\end{align}
so
\begin{equation}
    \mathcal{A}_{(24)}^{\mathrm{gf}} = \delta_{G_{\sigma,\kappa}}\Big(\bar{b}\bar{c}\bar{d}\bar{a}\Big)=\delta_{G_{\sigma,\kappa}}\Big(\bar{a}\bar{b}\bar{c}\bar{d}\Big)
\end{equation}
which exactly matches $\mathcal{A}_{(13)}^{\mathrm{gf}}$, therefore, indeed the path integral \eqref{eq:ribbon-split-BF-path-integral} is invariant under the $2\leftrightarrow 2$ Pachner move.

\begin{figure}[ht]
    \centering
    \begin{tikzpicture}[x=0.7cm,y=0.7cm]
  \begin{scope}[xshift=1.3cm]
    \coordinate (R1) at (.44,0);
    \coordinate (R2) at (5.2,0);
    \coordinate (R3) at (5.2,3.96);
    \coordinate (L1) at (0,.44);
    \coordinate (L3) at (4.76,4.4);
    \coordinate (L4) at (0,4.4);

    \fill[jtSage] (R1)--(R2)--(R3)--cycle;
    \fill[jtSage] (L1)--(L3)--(L4)--cycle;
    \fill[jtBlue] (R1)--(L1)--(L3)--(R3)--cycle;

    \draw[jt edge,jt arrow] (R1)--(R2);
    \draw[jt edge,jt arrow] (R2)--(R3);
    \draw[jt edge,jt arrow] (L3)--(L4);
    \draw[jt edge,jt arrow] (L4)--(L1);
    \draw[jt edge,jt arrow] (L1)--(L3);
    \draw[jt edge,jt arrow] (R1)--(R3);
    \draw[jt tree,jt arrow] (R1)--(L1);
    \draw[jt tree,jt arrow] (R3)--(L3);

    \node[jt label,anchor=north] at (2.8,-.10) {$\bar a$};
    \node[jt label,anchor=west] at (5.32,2.05) {$\bar b$};
    \node[jt label,anchor=south] at (2.35,4.50) {$\bar c$};
    \node[jt label,anchor=east] at (-.12,2.35) {$\bar d$};
    \node[jt label,rotate=39.75] at (2.10,2.75)
      {$\bar\ell^{\tau_L}_{e_{13}}$};
    \node[jt label,rotate=39.75] at (3.10,1.65)
      {$\bar\ell^{\tau_R}_{e_{13}}$};
    \node[jt label,rotate=0,anchor=north] at (-.2,.2)
      {$\bar h^{v_1}_{e_{13}}$};
    \node[jt label,rotate=0,anchor=south west] at (5.0,4.1)
      {$\bar h^{v_3}_{e_{13}}$};
    \node[jt label] at (1.05,3.35) {$\tau_L$};
    \node[jt label] at (4.13,1.02) {$\tau_R$};
    \node[jt label,anchor=north] at (0.9,-.10) {$v_{1}$};
    \node[jt label,anchor=north west] at (5.29,-.08) {$v_2$};
    \node[jt label,anchor=south west] at (4.0,4.48) {$v_3$};
    \node[jt label,anchor=south east] at (-.09,4.48) {$v_4$};

    \foreach \point in {R1,R2,R3,L1,L3,L4}
      \node[corner] at (\point) {};
    \foreach \point in {R1,R2,L3,L4}{
      \node[circle,draw=jtInkBlue,fill=white,line width=.65pt,
        inner sep=3.3pt] at (\point) {};
      \node[corner] at (\point) {};
    }

  \end{scope}

  \draw[jt move,{Stealth[length=2.2mm,width=1.5mm]}-{Stealth[length=2.2mm,width=1.5mm]}]
    (8.08,2.2)--(9.22,2.2);
  
  \begin{scope}[xshift=7.1cm]
    \coordinate (L1) at (0,0);
    \coordinate (L2) at (4.76,0);
    \coordinate (L4) at (0,3.96);
    \coordinate (R2) at (5.2,.44);
    \coordinate (R3) at (5.2,4.4);
    \coordinate (R4) at (.44,4.4);

    \fill[jtSage] (L1)--(L2)--(L4)--cycle;
    \fill[jtSage] (R2)--(R3)--(R4)--cycle;
    \fill[jtBlue] (R2)--(L2)--(L4)--(R4)--cycle;

    \draw[jt edge,jt arrow] (L1)--(L2);
    \draw[jt edge,jt arrow] (R2)--(R3);
    \draw[jt edge,jt arrow] (R3)--(R4);
    \draw[jt edge,jt arrow] (L4)--(L1);
    \draw[jt edge,jt arrow] (L2)--(L4);
    \draw[jt edge,jt arrow] (R2)--(R4);
    \draw[jt tree,jt arrow] (R2)--(L2);
    \draw[jt tree,jt arrow] (R4)--(L4);

    \node[jt label,anchor=north] at (2.35,-.10) {$\bar a$};
    \node[jt label,anchor=west] at (5.32,2.35) {$\bar b$};
    \node[jt label,anchor=south] at (2.80,4.50) {$\bar c$};
    \node[jt label,anchor=east] at (-.12,2.05) {$\bar d$};
    \node[jt label,rotate=-39.75] at (2.10,1.65)
      {$\bar\ell^{\tau_L}_{e_{24}}$};
    \node[jt label,rotate=-39.75] at (3.10,2.75)
      {$\bar\ell^{\tau_R}_{e_{24}}$};
    \node[jt label,anchor=north] at (5.4,0.2) 
      {$\bar h^{v_2}_{e_{24}}$};
    \node[jt label,anchor=south] at (-0.4,4.2)
        {$\bar h^{v_4}_{e_{24}}$};
    \node[jt label] at (1.07,1.02) {$\tau_L$};
    \node[jt label] at (4.15,3.35) {$\tau_R$};

    \node[jt label,anchor=north east] at (-.09,-.08) {$v_1$};
    \node[jt label,anchor=north west] at (5.29,1) {$v_2$};
    \node[jt label,anchor=south west] at (5.29,4.48) {$v_3$};
    \node[jt label,anchor=south east] at (-.09,3.5) {$v_4$};

    \foreach \point in {L1,L2,L4,R2,R3,R4}
      \node[corner] at (\point) {};
    \foreach \point in {L1,R2,R3,L4}{
      \node[circle,draw=jtInkBlue,fill=white,line width=.65pt,
        inner sep=3.3pt] at (\point) {};
      \node[corner] at (\point) {};
    }
    
  \end{scope}
\end{tikzpicture}
    \caption{The ribbon BF $2\leftrightarrow2$ move with common boundary holonomies $\bar a,\bar b,\bar c,\bar d$. Circled corners mark the retained boundary frames, the blue ribbon sides are fixed to the identity by this frame convention.}
    \label{fig:BF-2-to-2}
\end{figure}

\paragraph{The $3\leftrightarrow 1$ Pachner move.}

Insert $v_0$ inside $\tau_0=(v_1v_2v_3)$, creating edges $e_4=(v_0v_2)$, $e_5=(v_0v_3)$, and $e_6=(v_0v_1)$ and their ribbons. The boundary holonomies remain $\bar a,\bar b,\bar c$ (Figure \ref{fig:BF-3-to-1}), with $\bar h_{e_4}^{v_2}=\bar h_{e_5}^{v_3}=\bar h_{e_6}^{v_1}=1$.
\begin{figure}[ht]
    \centering
    \begin{tikzpicture}[x=.85cm,y=.85cm, every node/.style={jt label}]
  
  \coordinate (C) at (3.10,2.00);
  \coordinate (F) at ($(C)+(60:1.20)$);
  \coordinate (I) at ($(C)+(120:1.20)$);
  \coordinate (B) at ($(C)+(-30:3.00)$);
  \coordinate (D) at ($(F)+(-30:3.00)$);
  \coordinate (E) at ($(F)+(90:3.00)$);
  \coordinate (G) at ($(I)+(90:3.00)$);
  \coordinate (A) at ($(C)+(210:3.00)$);
  \coordinate (H) at ($(I)+(210:3.00)$);

  \fill[jtSage] (A)--(B)--(C)--cycle;
  \fill[jtSage] (D)--(E)--(F)--cycle;
  \fill[jtSage] (G)--(H)--(I)--cycle;
  \fill[jtBlue] (C)--(F)--(D)--(B)--cycle;
  \fill[jtBlue] (F)--(I)--(G)--(E)--cycle;
  \fill[jtBlue] (I)--(C)--(A)--(H)--cycle;
  \fill[jtGrey] (C)--(F)--(I)--cycle;

  \draw[jt edge,jt arrow] (A)--(B)
    node[midway,below=4pt] {$\bar a$};
  \draw[jt edge,jt arrow] (D)--(E)
    node[midway,right=5pt] {$\bar b$};
  \draw[jt edge,jt arrow] (G)--(H)
    node[midway,left=5pt] {$\bar c$};

  \draw[jt edge,jt arrow] (C)--(B);
  \node[font=\scriptsize] at (4.00,1.1)
    {$\bar\ell_{e_4}^{\tau_3}$};
  \draw[jt edge,jt arrow] (F)--(D)
    node[pos=.52,above=1pt] {$\bar\ell_{e_4}^{\tau_1}$};
  \draw[jt edge,jt arrow] (F)--(E)
    node[midway,right=1pt] {$\bar\ell_{e_5}^{\tau_1}$};
  \draw[jt edge,jt arrow] (I)--(G)
    node[midway,left=0.5pt] {$\bar\ell_{e_5}^{\tau_2}$};
  \draw[jt edge,jt arrow] (I)--(H)
    node[pos=0.52,above=1pt] {$\bar\ell_{e_6}^{\tau_2}$};
  \draw[jt tree,jt arrow] (C)--(A);
  \node[font=\scriptsize] at (2.30,1.1)
    {$\bar\ell_{e_6}^{\tau_3}$};

  \draw[jt tree,jt arrow] (C)--(F);
  \node[font=\scriptsize] at (3.85,2.30)
    {$\bar h_{e_4}^{v_0}$};
  \draw[jt tree,jt arrow] (F)--(I)
    node[midway,above=1pt,font=\scriptsize]
      {$\bar h_{e_5}^{v_0}$};
 
  \draw[freeside] (I)--(C);
  \node[font=\scriptsize] at (2.4,2.30)
    {$\bar h_{e_6}^{v_0}$};

  \draw[jt edge,jt arrow] (B)--(D);
  \node[anchor=west,font=\scriptsize]
    at (6.0,.80) {$\bar h_{e_4}^{v_2}$};
  \draw[jt edge,jt arrow] (E)--(G);
  \node[above=5pt,font=\scriptsize]
    at (3.10,5.8) {$\bar h_{e_5}^{v_3}$};
  \draw[jt edge,jt arrow] (H)--(A);
  \node[anchor=east,font=\scriptsize]
    at (.3,0.8) {$\bar h_{e_6}^{v_1}$};

  \node at (3.10,1.2) {$\tau_3$};
  \node at (4.45,3.63) {$\tau_1$};
  \node at (1.95,3.63) {$\tau_2$};
   \node at (3.10,2.6) {$v_{0}^{*}$};
  
  \node[below=3pt] at (.50,.50) {$v_1$};
  \node[below=3pt] at (6.7,2.1) {$v_2$};
  \node[above=3pt] at (2.10,5.8) {$v_3$};

  \foreach \p in {A,D,G} {
    \draw[draw=jtInkBlue,fill=white,line width=.72pt]
      (\p) circle[radius=3.7pt];
  }
  \foreach \p in {A,B,C,D,E,F,G,H,I} \node[corner] at (\p) {};

  \draw[jt move,{Stealth[length=2.2mm,width=1.5mm]}-{Stealth[length=2.2mm,width=1.5mm]}]
    (7.35,3.10)--(8.65,3.10);

  \coordinate (J) at (9.15,.50);
  \coordinate (K) at ($(J)+(0:6.20)$);
  \coordinate (L) at ($(J)+(60:6.20)$);
  \fill[jtSage] (J)--(K)--(L)--cycle;
  \draw[jt edge,jt arrow] (J)--(K)
    node[midway,below=4pt] {$\bar a$};
  \draw[jt edge,jt arrow] (K)--(L)
    node[midway,right=5pt] {$\bar b$};
  \draw[jt edge,jt arrow] (L)--(J)
    node[midway,left=5pt] {$\bar c$};
  \node at (12.25,2.55) {$\tau_0$};
  \node[below left=3pt] at (J) {$v_1$};
  \node[below right=3pt] at (K) {$v_2$};
  \node[above=3pt] at (L) {$v_3$};
  \foreach \p in {J,K,L} {
    \draw[draw=jtInkBlue,fill=white,line width=.72pt]
      (\p) circle[radius=3.7pt];
    \node[corner] at (\p) {};
  }
\end{tikzpicture}
    \caption{The ribbon BF $3\leftrightarrow1$ move with common boundary holonomies $\bar a,\bar b,\bar c$. The blue tree consists of $\bar h_{e_4}^{v_0},\bar h_{e_5}^{v_0},\bar\ell_{e_6}^{\tau_3}$ and is rooted at $(\tau_3,v_1)$. Circled corners mark the retained boundary frames; their connecting boundary short sides are fixed to the identity.}
    \label{fig:BF-3-to-1}
\end{figure}
The relevant amplitude for the single triangle configuration is simply,
\begin{equation}
    \mathcal{A}_{\tau_{0}} = \delta_{G_{\sigma,\kappa}}(\bar{a}\bar{b}\bar{c})
\end{equation}
The three triangle amplitude is given by integrating the following constraints over all ribbon holonomy data,
\begin{subequations}
\begin{align}
    \bar{\cR}_{v_{0}}&=\bar{h}^{v_{0}}_{e_{4}}\bar{h}^{v_{0}}_{e_{5}}\bar{h}^{v_{0}}_{e_{6}}\\
    \bar{\mathcal{W}}_{e_{4}} = \bar{h}^{v_{0}}_{e_{4}}\bar{\ell}^{\tau_{1}}_{e_{4}}(\bar{h}^{v_{2}}_{e_{4}})^{-1}(\bar{\ell}^{\tau_{3}}_{e_{4}})^{-1},\quad   \bar{\mathcal{W}}_{e_{5}} &= \bar{h}^{v_{0}}_{e_{5}}\bar{\ell}^{\tau_{2}}_{e_{5}}(\bar{h}^{v_{3}}_{e_{5}})^{-1}(\bar{\ell}^{\tau_{1}}_{e_{5}})^{-1},\quad   \bar{\mathcal{W}}_{e_{6}} = \bar{h}^{v_{0}}_{e_{6}}\bar{\ell}^{\tau_{3}}_{e_{6}}(\bar{h}^{v_{1}}_{e_{6}})^{-1}(\bar{\ell}^{\tau_{2}}_{e_{6}})^{-1}\\
    \bar{\cG}_{\tau_{1}} = \bar{\ell}^{\tau_{1}}_{e_{4}}\bar{b}(\bar{\ell}^{\tau_{1}}_{e_{5}})^{-1},\quad \bar{\cG}_{\tau_{2}} &= \bar{\ell}^{\tau_{2}}_{e_{5}}\bar{c}(\bar{\ell}^{\tau_{2}}_{e_{6}})^{-1},\quad \bar{\cG}_{\tau_{3}} = \bar{\ell}^{\tau_{3}}_{e_{6}}\bar{a}(\bar{\ell}^{\tau_{3}}_{e_{4}})^{-1}
\end{align}
\end{subequations}
thus,
\begin{equation}
    \mathcal{A}_{\tau_{1}\tau_{2}\tau_{3}} = \int \Big(\prod_{i=4}^{6}d\bar{h}^{v_{0}}_{e_{i}}d\bar{\ell}^{\tau_{L}}_{e_{i}}d\bar{\ell}^{\tau_{R}}_{e_{i}}\Big)\Big(\prod_{i=4}^{6}\delta_{G_{\sigma,\kappa}}(\bar{\mathcal{W}}_{e_{i}})\Big)\Big(\prod_{j=1}^{3}\delta_{G_{\sigma,\kappa}}(\bar{\cG}_{\tau_{j}})\Big)\delta_{G_{\sigma,\kappa}}(\bar{\cR}_{v_{0}})
\end{equation}
We perform some gauge fixing by rooting the tree consisting of $\bar h_{e_4}^{v_0},\bar h_{e_5}^{v_0},\bar\ell_{e_6}^{\tau_3}$ at the boundary corner $(\tau_3,v_1)$. Gauge transformations at its three interior corners set these holonomies to the identity without changing the boundary data. The remaining $\bar h_{e_6}^{v_0}$ is then set to the identity by $\delta_{G_{\sigma,\kappa}}(\bar{\cR}_{v_0})$, giving
\begin{align}
    \mathcal{A}_{\tau_{1}\tau_{2}\tau_{3}}^{\mathrm{gf}} =& \int d\bar{\ell}^{\tau_{1}}_{e_{4}}d\bar{\ell}^{\tau_{3}}_{e_{4}}d\bar{\ell}^{\tau_{2}}_{e_{5}}d\bar{\ell}^{\tau_{1}}_{e_{5}}d\bar{\ell}^{\tau_{2}}_{e_{6}}\Bigg[\delta_{G_{\sigma,\kappa}}\Big(\bar{\ell}^{\tau_{1}}_{e_{4}}(\bar{\ell}^{\tau_{3}}_{e_{4}})^{-1}\Big)\delta_{G_{\sigma,\kappa}}\Big(\bar{\ell}^{\tau_{2}}_{e_{5}}(\bar{\ell}^{\tau_{1}}_{e_{5}})^{-1}\Big)\delta_{G_{\sigma,\kappa}}\Big(\bar{\ell}^{\tau_{2}}_{e_{6}}\Big)\nonumber\\
    &\times \delta_{G_{\sigma,\kappa}}\Big(\bar{\ell}^{\tau_{1}}_{e_{4}}\bar{b}(\bar{\ell}^{\tau_{1}}_{e_{5}})^{-1}\Big)\delta_{G_{\sigma,\kappa}}\Big(\bar{\ell}^{\tau_{2}}_{e_{5}}\bar{c}(\bar{\ell}^{\tau_{2}}_{e_{6}})^{-1}\Big)\delta_{G_{\sigma,\kappa}}\Big(\bar{a}(\bar{\ell}^{\tau_{3}}_{e_{4}})^{-1}\Big)\Bigg]
\end{align}
Performing the integrations over the gauge-fixed ribbon constraints sets $\bar{\ell}^{\tau_{1}}_{e_{4}} = \bar{\ell}^{\tau_{3}}_{e_{4}}, \bar{\ell}^{\tau_{2}}_{e_{5}} = \bar{\ell}^{\tau_{1}}_{e_{5}},$ and $\bar{\ell}^{\tau_{2}}_{e_{6}} = 1$, giving,
\begin{align}
    \mathcal{A}_{\tau_{1}\tau_{2}\tau_{3}}^{\mathrm{gf}} &=\int d\bar{\ell}^{\tau_{3}}_{e_{4}}d\bar{\ell}^{\tau_{1}}_{e_{5}}\Bigg[\delta_{G_{\sigma,\kappa}}\Big(\bar{\ell}^{\tau_{3}}_{e_{4}}\bar{b}(\bar{\ell}^{\tau_{1}}_{e_{5}})^{-1}\Big)\delta_{G_{\sigma,\kappa}}\Big(\bar{\ell}^{\tau_{1}}_{e_{5}}\bar{c}\Big)\delta_{G_{\sigma,\kappa}}\Big(\bar{a}(\bar{\ell}^{\tau_{3}}_{e_{4}})^{-1}\Big)\Bigg]\nonumber\\
    &=\delta_{G_{\sigma,\kappa}}(\bar{a}\bar{b}\bar{c})
\end{align}
which is exactly equal to the one triangle amplitude $\mathcal{A}_{\tau_{0}}$. Therefore, as claimed, the gauge-fixed local amplitude agrees with the one-triangle amplitude, establishing the $3\leftrightarrow1$ Pachner identity after the corresponding gauge redundancy has been removed.

\subsection{Split BF: The Regge adapted partial gauge fixing}\label{subsec:regge-adapted-pgf}

In this section we detail another partial gauge fixing of \eqref{eq:ribbon-split-BF-path-integral} which reduces it to a form that is well adapted to recovering the path integral of homogeneous Regge gravity. Motivated by the continuum correspondence between BF and LDG  in \S\ref{subsec:continuum-BF-to-regge}, the  goal is to reduce the triangle and vertex constraints to complementary subgroup sectors of $G_{\sigma,\kappa}$. This is achieved by using an Iwasawa-type matched-pair splitting $G_{\sigma,\kappa}\sim H_{\sigma}\bowtie AN_{\sigma,\kappa}$. In the Euclidean hyperbolic case this is the usual Iwasawa factorization, while in the flat Euclidean case it reduces to the semidirect-product factorization. The corresponding group factorization is global for the Euclidean cases $\kappa=0,-1$ and for  the flat Lorentzian case. For curved Lorentzian cases the  splitting  is still available locally.

Indeed, for $\Lambda\ne0$, the $P_a$ in the Cartan basis do not span a subalgebra, obstructing an independent subgroup discretization of the torsion constraint. The Iwasawa basis removes this obstruction. Related constructions in three-dimensional gravity \cite{dupuis_origin_2020,girelli_beyond_2025} induce this change through a boundary volume term, whereas here we change the Lie algebra basis directly.

Unlike the ribbon collapse in \S\ref{subsubsec:recovering-standard-BF}, this procedure removes only the $H_\sigma$ part of triangle holonomies and the $AN_{\sigma,\kappa}$ part of vertex holonomies. It retains the ribbon structure and the subgroup constraints which are shown in \S\ref{sec:recovering-homo-regge} to be related to triangle closure and the vanishing of the deficit angles.

\subsubsection{The Iwasawa decomposition}

Recall from \S\ref{subsec:continuum-BF-to-regge} that by respectively splitting the $A$ and $B$ fields into components using the definitions \eqref{eq:A-cartan} and \eqref{eq:B-cartan} one arrives at the following form of the continuum BF action,
\begin{equation}
    S_{\mathrm{BF}}[A,B] = S_{\mathrm{BF}}[e,\omega, X, \Phi] = \int_{M}\Big(\Phi F + X_{a}T^{a}\Big),\label{eq:cartan-BF-action}
\end{equation}
where $F$ and $T^{a}$ are given by \eqref{eq:2d_torsion_curvature}. To switch from the Cartan decomposition to the Iwasawa decomposition we employ the change of Lie algebra basis
\begin{equation}
    P_{a}\mapsto P'_{a} := P_{a}-n_{a}J,\label{eq:P-Iwasawa-basis}
\end{equation}
where $n_{a}$ is a fixed\footnote{By fixed we mean $dn_{a} = \delta n_{a} = 0$.} internal vector satisfying
\begin{equation}
    n_{a}n^{a} =\eta^{ab}n_{a}n_{b}= -\sigma\frac{\kappa}{l^{2}}.\label{eq:n-normalization}
\end{equation}
The Lie brackets \eqref{eq:isometry_algebra} of $\fg_{\sigma,\kappa}$ in this new basis take the form
\begin{equation}
    \comm{J}{P'_{a}} = \epsilon_{a}{}^{b}(P'_{b}+n_{b}J),\quad  \comm{P'_{a}}{P'_{b}} = \epsilon_{ab}n^{c}P'_{c},
\end{equation}
which we see allows $\comm{P'_{a}}{P'_{b}}$ to close and to form the algebra $\an_{\sigma,\kappa}$.

\medskip

The Iwasawa decomposition of $\fg_{\sigma,\kappa}$ is therefore
\begin{equation}
     \mathfrak{g}_{\sigma,\kappa}\cong \mathfrak{h}_{\sigma}\bowtie \mathfrak{an}_{\sigma,\kappa}\cong \mathfrak{an}_{\sigma,\kappa}\bowtie \mathfrak{h}_{\sigma},
\end{equation}
where $\mathfrak{h}_{\sigma}$ and $\mathfrak{an}_{\sigma,\kappa}$ are respectively spanned by $J$ and $P'_{a}$, subject to the appropriate signature and brackets defined by the values of the parameters $\sigma$ and $\kappa$.

\medskip 

We should emphasize the Euclidean case with positive cosmological constant. In this case the normalization condition \eqref{eq:n-normalization} requires the vector $n_{a}$ to be purely imaginary. This reflects the fact that the compact real Lie algebra $\so(3)$ cannot contain a non-compact real subalgebra. Consequently, rather than working directly with $\so(3)$, one must pass to its complexification and subsequently impose appropriate reality conditions in order to recover the desired real geometry. A closely related situation arises in three-dimensional gravity \cite{dupuis_origin_2020}. The treatment of the corresponding case $\g_{1,1}$ therefore requires a more detailed analysis of the complexified algebra, together with the associated reality conditions. Such technical developments go beyond the scope of the present paper, and in particular beyond the argument we wish to develop here. We therefore leave a systematic analysis of this case for future work.

\medskip

At the level of the $A$ field, to keep it invariant, a compensating field redefinition is induced,
\begin{equation}
    A = \omega J +e^{a}P_{a} = \Omega J + e^{a}P'_{a}\iff \Omega:= \omega +e^{a}n_{a}.\label{eq:Omega-Iwasawa-basis}
\end{equation}
At the level of the BF action \eqref{eq:2d_bf_action}, to ensure the pairing \eqref{eq:g-g*-Cartan-pairing} is invariant under this change of Lie algebra basis a corresponding change of basis and field redefinition for $B$ is induced,
\begin{equation}
    B = \Phi \tilde{J} + X_{a}\tilde{P}^{a} = \Phi \tilde{J}'+\chi_{a}\tilde{P}^{a} \iff  \tilde{J}':=\tilde{J}+n_{a}\tilde{P}^{a} \text{ and }\chi_{a}:=  X_{a}-\Phi n_{a}.
\end{equation}
A quick check verifies that indeed the pairing relations \eqref{eq:g-g*-Cartan-pairing} are preserved,
\begin{equation}
    \langle J,\tilde{J}'\rangle = 1,\quad \langle P'_{a},\tilde{P}^{b}\rangle = \delta_{a}^{b},\quad \langle J,\tilde{P}^{a}\rangle = \langle P'_{a},\tilde{J}'\rangle = 0,
\end{equation}
and that the dressing actions \eqref{eq:Cartan-Dressing-Lie-Alg} in terms of $J, P'_{a}, \tilde{J}',$ and $\tilde{P}^{a}$ are given by,
\begin{equation}
    \tilde{J}'\lhd J = n_{b}\epsilon_{a}{}^{b}\tilde{P}^{a}, \quad \tilde{J}'\lhd P'_{a} = -n_{b}\epsilon_{a}{}^{b}\tilde{J}',\quad \tilde{P}^{a}\lhd J = \epsilon_{b}{}^{a}\tilde{P}^{b},\quad \tilde{P}^{a}\lhd P'_{b} = -\epsilon_{b}{}^{a}\tilde{J}'+n^{a}\epsilon_{bc}\tilde{P}^{c}.\label{eq:Iwaswa-Dressing-Lie-Alg}
\end{equation}
The punchline is that now the component fields of $A$ and $B$ are valued in the Iwasawa defined subalgebras of the isometry algebra (and its dual).

The continuum form of the Iwasawa decomposed BF action \eqref{eq:2d_bf_action} is then given by,
\begin{equation}
    S_{\mathrm{BF}}[e,\Omega,\chi,\Phi] = \int_{M}\Big(\langle \Phi, d\Omega + \Omega \lhd e\rangle + \langle \chi, de+\Omega\rhd e+\frac{1}{2}\comm{e}{e}\rangle\Big), \label{eq:BF-Iwasawa-action}
\end{equation}
which by construction is exactly equal to the Cartan decomposed BF action \eqref{eq:cartan-BF-action}. By varying \eqref{eq:BF-Iwasawa-action} respectively with respect to $\Phi$ and $\chi_{a}$, we arrive at the Iwasawa decomposed versions of the Cartan flatness and torsion constraints \eqref{eq:2d_torsion_curvature},
\begin{subequations}{\label{eq:2d_iwasawa_torsion_curvature}}
\begin{align}
    \cR&:= d\Omega + \Omega\lhd e = d\Omega + \Omega \wedge e^{a}\epsilon_{a}{}^{b}n_{b}\approx 0,\\
   \mathcal{G}^{a}&:= de^{a}+(\Omega\rhd e)^{a}+\frac{1}{2}\comm{e}{e}^{a} = de^{a}+\Omega \wedge e^{b}\epsilon_{b}{}^{a}+\frac{n^{a}}{2}\epsilon_{bc}e^{b}\wedge e^{c}\approx 0.
\end{align}
\end{subequations}
The two curvature constraints are related by
\begin{equation}
    \cR = F+n_{a}\cG^{a}.\label{eq:Iwasawa-Cartan-curvature-relation}
\end{equation}
We see then that on-shell of $\cG^{a}\approx 0$ we have $\cR\approx F$. The gauge transformation rules for $\Omega, e^{a}, \Phi,$ and $\chi_{a}$ are easily derived from \eqref{eq:BF-component-gauge-transformations} and \eqref{eq:Iwaswa-Dressing-Lie-Alg} using $\lambda = \theta J +\rho^{a}P_{a}=\vartheta J+\rho^{a}P'_{a}$, $\vartheta := \theta + n^{a}\rho_{a}$, giving,
\begin{subequations}
\begin{align}
    \delta_{\lambda}\Omega &= \langle d\vartheta + \comm{\Omega}{\rho}+\comm{e}{\vartheta}, \tilde{J}'\rangle =  d\vartheta + \Omega \rho^{b}\epsilon_{b}{}^{a}n_{a}-\vartheta e^{a}\epsilon_{a}{}^{b}n_{b},\\
    \delta_{\lambda}e^{a} &= \langle d\rho + \comm{\Omega}{\rho}+\comm{e}{\vartheta}+\comm{e}{\rho}, \tilde{P}^{a}\rangle =d\rho^{a}+\Omega \rho^{b}\epsilon_{b}{}^{a}-\vartheta e^{b}\epsilon_{b}{}^{a}+n^{a}\epsilon_{cd}e^{c}\rho^{d},\\
    \delta_{\lambda}\Phi &= \langle \Phi\lhd \rho+\chi\lhd \rho, J\rangle =-\chi_{a}\rho^{b}\epsilon_{b}{}^{a}-\Phi n_{a}\rho^{b}\epsilon_{b}{}^{a},\\
    \delta_{\lambda}\chi_{a} &= \langle \Phi\lhd\vartheta+\chi\lhd\vartheta+\chi\lhd\rho , P'_{a}\rangle = \vartheta\epsilon_{a}{}^{b}\chi_{b}+\vartheta\epsilon_{a}{}^{b}\Phi n_{b}-n^{c}\chi_{c}\epsilon_{ab}\rho^{b}.
\end{align}    
\end{subequations}

Whenever the corresponding global group factorization exists, an element  $g\in G_{\sigma,\kappa}$ may be uniquely decomposed as,
\begin{equation}
    g = h\ell = \tilde{\ell}\tilde{h},\quad \tilde{h} = h\lhd \ell, \quad \tilde{\ell} = h\rhd \ell, \quad h,\tilde{h}\in H_{\sigma},\quad \ell,\tilde{\ell}\in AN_{\sigma,\kappa}.\label{eq:G-Iwasawa-decomp}
\end{equation}
Such $h,\tilde{h}, \ell,$ and $\tell$ are related to $\Omega$ and $e^{a}$ on-shell via the relations,
\begin{equation}{\label{eq:Omega-e-on-shell}}
   \cF(A)\approx 0 \implies A \approx g^{-1}dg \implies\begin{cases}
       \Omega  \approx (\ell^{-1}h^{-1}dh\ell)\rvert_{\mathfrak{h}_{\sigma}},\\
       e^{a} \approx (\ell^{-1}h^{-1}dh\ell)^{a}\rvert_{\mathfrak{an}_{\sigma,\kappa}}+(\ell^{-1}d\ell)^{a}.
   \end{cases}
\end{equation}
It is the discrete analogs of these $h, \th, \ell$ and $\tell$ that we wish to recover from \eqref{eq:ribbon-split-BF-path-integral} in our partial gauge fixing procedure.

\subsubsection{Partial gauge fixing towards Regge}

We use unbarred symbols $h_e^v\in H_\sigma$ and $\ell_e^\tau\in AN_{\sigma,\kappa}$ for the subgroup-reduced holonomies. Their face constraints are similarly denoted without bars,
\begin{subequations}{\label{eq:unbarred-ribbon-constraints}}
\begin{align}
\mathcal{W}_{e}&:= h^{s(e)}_{e}\ell^{\tau_{L}}_{e}(h^{t(e)}_{e})^{-1}(\ell^{\tau_{R}}_{e})^{-1}\in G_{\sigma,\kappa},\\
\cG_{\tau}&:= L_{1}L_{2}L_{3}\in AN_{\sigma,\kappa},\quad L_{i}:=(\ell^{\tau}_{e_{i}})^{\epsilon_{\tau}(e_{i})}\in AN_{\sigma,\kappa},\\
\cR_{v}&:= h_{1}\cdots h_{m}\in H_{\sigma},\quad h_{i}:=\begin{cases}
(h^{s(e_{i})}_{e_{i}})^{\pm1}\in H_{\sigma},\\
(h^{t(e_{i})}_{e_{i}})^{\pm1}\in H_{\sigma}.
\end{cases}
\end{align}
\end{subequations}
We call $\cR_v$ and $\cG_\tau$ the split BF curvature and torsion constraints. The ribbon constraint $\mathcal W_e\approx1$ gives the matched-pair relations \cite{chari_guide_1995}
\begin{equation}
   h^{s(e)}_{e}\ell^{\tau_{L}}_{e} = \ell^{\tau_{R}}_{e}h^{t(e)}_{e}
    \label{eq:ribbon-factorization},\quad \ell^{\tau_{R}}_{e} = h^{s(e)}_{e}\rhd \ell^{\tau_{L}}_{e},\quad h^{t(e)}_{e} = h^{s(e)}_{e}\lhd \ell^{\tau_{L}}_{e}.
\end{equation}
Thus, unlike in the Cartan decomposition, the ribbon face constraint with the Iwasawa decomposition gives independent relations for the subgroup holonomies on the sides of the ribbon. It is then the goal of our partial gauge fixing procedure to achieve,
\begin{equation}
    \bar{h}^{v}_{e}\mapsto h^{v}_{e},\quad \bar{\ell}^{\tau}_{e}\mapsto \ell^{\tau}_{e},\quad \bar{\mathcal{W}}_{e}\mapsto \mathcal{W}_{e},\quad \bar{\cG}_{\tau}\mapsto \cG_{\tau}, \quad \bar{\cR}_{v}\mapsto \cR_{v}.
\end{equation}
We detail this construction now.

\paragraph{Choose cuts and specify trees.} The first step is to fix cuts for the constraints and define trees for the gauge fixing procedure.

Write $\bar{\cG}_\tau=\bar L_1\bar L_2\bar L_3$ with cut $(\tau,v_1)$ and $\bar{\cR}_v=\bar h_1\cdots\bar h_m$ with cut $(\tau_1,v)$. Each factor in these products runs from corner $i$ to corner $i+1$, cyclically. Omitting the first factor defines a tree rooted at the cut (Figure \ref{fig:split-BF-cuts}). Our goal is to remove the $H_\sigma$ parts of $\bar L_2,\bar L_3$ and the $AN_{\sigma,\kappa}$ parts of $\bar h_2,\ldots,\bar h_m$ along these trees.
\begin{figure}[ht]
    \centering
    \begin{tikzpicture}[font=\small]
\path[fill=jtSage] (0.750000,1.000000) -- (4.050000,1.000000) -- (2.400000,3.550000) -- cycle;
\draw[jt faint,jt arrow] (0.750000,1.000000) -- (4.050000,1.000000);
\node[jt label] at (2.400000,0.750000) {$\bar L_1$};
\draw[jt tree,jt arrow] (4.050000,1.000000) -- (2.400000,3.550000);
\node[jt label] at (3.434893,2.410813) {$\bar L_2$};
\draw[jt tree,jt arrow] (2.400000,3.550000) -- (0.750000,1.000000);
\node[jt label] at (1.365107,2.410813) {$\bar L_3$};
\draw[draw=jtInkBlue,line width=.7pt,fill=white] (0.750000,1.000000) circle[radius=3.9pt];
\fill[black] (0.750000,1.000000) circle[radius=1.65pt];
\node[jt label] at (0.750000,0.520000) {$(\tau,v_1)$};
\fill[black] (4.050000,1.000000) circle[radius=1.65pt];
\node[jt label] at (4.050000,0.520000) {$(\tau,v_2)$};
\fill[black] (2.400000,3.550000) circle[radius=1.65pt];
\node[jt label] at (2.400000,3.800000) {$(\tau,v_3)$};
\node[jt label] at (2.400000,4.150000) {Triangle face $\tau$};
\path[fill=jtGrey] (7.400000,1.000000) -- (10.350000,1.000000) -- (10.670000,2.570000) -- (8.950000,3.550000) -- (7.230000,2.570000) -- cycle;
\draw[jt faint,jt arrow] (7.400000,1.000000) -- (10.350000,1.000000);
\node[jt label] at (8.875000,0.730000) {$\bar h_1$};
\draw[jt tree,jt arrow] (10.350000,1.000000) -- (10.670000,2.570000);
\node[jt label] at (10.774561,1.731077) {$\bar h_2$};
\draw[jt tree,jt arrow] (10.670000,2.570000) -- (8.950000,3.550000);
\node[jt label] at (9.943664,3.294593) {$\cdots$};
\draw[jt tree,jt arrow] (8.950000,3.550000) -- (7.230000,2.570000);
\node[jt label] at (7.956336,3.494593) {$\bar h_{m-1}$};
\draw[jt tree,jt arrow] (7.230000,2.570000) -- (7.400000,1.000000);
\node[jt label] at (7.046569,1.755934) {$\bar h_m$};
\draw[draw=jtInkBlue,line width=.7pt,fill=white] (7.400000,1.000000) circle[radius=3.9pt];
\fill[black] (7.400000,1.000000) circle[radius=1.65pt];
\fill[black] (10.350000,1.000000) circle[radius=1.65pt];
\fill[black] (10.670000,2.570000) circle[radius=1.65pt];
\fill[black] (8.950000,3.550000) circle[radius=1.65pt];
\fill[black] (7.230000,2.570000) circle[radius=1.65pt];
\node[jt label] at (7.400000,0.520000) {$(\tau_1,v)$};
\node[jt label] at (10.350000,0.520000) {$(\tau_2,v)$};
\node[jt label] at (8.950000,4.150000) {Dual face at $v$};
\end{tikzpicture}
    \caption{For clarity we illustrate the local trees defined by the products $\bar{\cG}_{\tau}$ and $\bar{\cR}_{v}$. The blue arrows denote the trees and the ordering along the tree upon which the gauge fixing is performed. The circled corner denotes the cut of the respective constraints, and the dotted lines denote the holonomy excluded from the trees.}
    \label{fig:split-BF-cuts}
\end{figure}

\paragraph{Determine the gauge fixing parameters.}

Consider the Iwasawa factorization of the gauge parameter $g_{\tau v}\in G_{\sigma,\kappa}$ at each corner $(\tau, v)$ as,
\begin{equation}
    g_{\tau v} = h_{\tau v}\ell_{\tau v},\quad h_{\tau v}\in H_{\sigma}, \quad \ell_{\tau v}\in AN_{\sigma,\kappa},
\end{equation}
such that $h_{\tau v_{1}} = 1$ at each triangle root, and $\ell_{\tau_{1} v} = 1$ at each vertex/dual face root. To achieve our pgf goal the gauge transformation rules \eqref{eq:framed-holonomy-gauge-transformations} tell us we must determine the appropriate gauge parameters $h_{\tau v_{i}}$ and $\ell_{\tau_{i} v}$ such that,
\begin{subequations}
    \begin{align}
        \bar{L}'_{i} &=g_{\tau v_i}^{-1}\, \bar{L}_{i}\,  g_{\tau v_{i+1}}=  \ell_{\tau v_{i}}^{-1}(h_{\tau v_{i}}^{-1}\,\bar{L}_{i}\, h_{\tau v_{i+1}})\ell_{\tau v_{i+1}}\in AN_{\sigma,\kappa},\quad i = 2, 3\\
        \bar{h}_{i}' &=g_{\tau_i v}^{-1}\, \bar{h}_{i}\,  g_{\tau_{i+1} v} = \ell_{\tau_{i}v}^{-1}(h_{\tau_{i}v}^{-1}\,\bar{h}_{i}\, h_{\tau_{i+1}v})\ell_{\tau_{i+1}v}\in H_{\sigma},\quad i = 2,\ldots, m.
    \end{align}
\end{subequations}
The necessary gauge parameters to achieve the required fixing are thus,
\begin{subequations}{\label{eq:split-gauge-recursion}}
    \begin{align}
         h_{\tau v_{i}} &= (\bar{L}_{i}h_{\tau v_{i+1}})\rvert_{H_{\sigma}},\quad i = 3, 2,\quad v_{4} = v_{1}\\
         \ell_{\tau_{i}v} &= (h_{\tau_{i}v}^{-1}\bar{h}_{i}h_{\tau_{i+1}v}\ell_{\tau_{i+1}v})\rvert_{AN_{\sigma,\kappa}},\quad i = m,\ldots, 2,\quad \tau_{m+1} = \tau_{1}.
    \end{align}
\end{subequations}
One can think of these choices as arising from the factorization of the $\bar{L}_{i}$ and $\bar{h}_{i}$ into their $AN_{\sigma,\kappa}$ and $H_{\sigma}$ components as,
\begin{subequations}
\begin{align}
    \bar{L}_{i} &= H_{i}L_{i} \implies (\bar{L}_{i})\rvert_{H_{\sigma}} = H_{i}, \quad (\bar{L}_{i})\rvert_{AN_{\sigma,\kappa}} = L_{i},\\
    \bar{h}_{i} &= \ell_{i}h_{i}\implies (\bar{h}_{i})\rvert_{H_{\sigma}} = h_{i},\quad (\bar{h}_{i})\rvert_{AN_{\sigma,\kappa}} = \ell_{i},
\end{align}
\end{subequations}
then using the gauge transformations \eqref{eq:framed-holonomy-gauge-transformations} to cancel the undesired $H_{i}$ and $\ell_{i}$ parts.

\paragraph{Inserting the gauge fixing factors.}

To determine and insert the appropriate gauge fixing factors into \eqref{eq:ribbon-split-BF-path-integral} we must first address an issue of integration: despite $G_{\sigma,\kappa}$ being unimodular, for $\Lambda\neq 0$ the subgroup $AN_{\sigma,\kappa}$ is \textit{not} unimodular. We thus use the general result that since $H_{\sigma}, AN_{\sigma,\kappa}\subset G_{\sigma,\kappa}$ and $H_{\sigma}\cap AN_{\sigma,\kappa} = \{1\}$, the unimodular Haar measure $dg$ on $G_{\sigma,\kappa}$ decomposes as \cite{knapp_lie_1996,dowd_notes_nodate,nachbin_haar_1976},
\begin{equation}
    dg = dhd_{R}\ell = d_{L}\tilde{\ell} d\th,\quad g = h\ell = \tilde{\ell}\th\in G_{\sigma,\kappa}, \quad h,\th\in H_{\sigma},\quad \ell,\tell\in AN_{\sigma,\kappa},\label{eq:path-int-measure-v1}
\end{equation}
where $dh$ is the unimodular Haar measure on $H_{\sigma}$ and $d_{L}$ and $d_{R}$ are respectively the left and right invariant Haar measures on $AN_{\sigma,\kappa}$. See Appendix \ref{app:measure-considerations} for useful integration relations in this context.

At each corner we integrate the unfixed parts of the gauge parameter with the measures listed in the table below.
\begin{center}
\begin{tabular}{@{}lll@{}}
\toprule
Corner $(\tau, v)$ is a root for & Gauge parameter & Parameter measure \\
\midrule
neither tree & $g_{\tau v}\in G_{\sigma,\kappa}$ & $dg_{\tau v}=dh_{\tau v}\,d_R\ell_{\tau v}$\\
tree in $\Delta$ & $\ell_{\tau v}\in AN_{\sigma,\kappa}$ & $d_L\ell_{\tau v}$\\
tree in $\Delta^{*}$ & $h_{\tau v}\in H_{\sigma}$ & $dh_{\tau v}$\\
both trees & $g_{\tau v} = 1\in G_{\sigma,\kappa}$ & no integration\\
\bottomrule
\end{tabular}
\captionof{table}{Gauge fixing parameter and integration scheme for the partial gauge fixing procedure.}
\end{center}
To implement this scheme in the path integral \eqref{eq:ribbon-split-BF-path-integral}, we insert the following gauge fixing factor,
\begin{equation}
    1 = \int \prod_{(\tau, v)}d\mu(g_{\tau v})\prod_{\tau}\prod_{i=2}^{3}\delta_{H_{\sigma}}\Big((\bar{L}_{i}')\rvert_{H_{\sigma}}\Big)\prod_{v}\prod_{i=2}^{m}\delta_{AN_{\sigma,\kappa}}\Big((\bar{h}_{i}')\rvert_{AN_{\sigma,\kappa}}\Big),\label{eq:to-regge-pgf-factor}
\end{equation}
where the respective delta functions above are those induced on $H_{\sigma}$ and $AN_{\sigma, \kappa}$ by the factorization,
\begin{equation}
\delta_{G_{\sigma,\kappa}}(g) = \delta_{H_{\sigma}}(h)\delta_{AN_{\sigma,\kappa}}(\ell) = \delta_{AN_{\sigma,\kappa}}(\tell)\delta_{H_{\sigma}}(\th),\quad g = h\ell = \tell \th,\label{eq:G-delta-factorization}
\end{equation}
and it is understood that in each case $d\mu(g_{\tau v})$ denotes the appropriate integration measure and gauge parameter specified in the above table. The insertion \eqref{eq:to-regge-pgf-factor} can be shown to have unit Faddeev-Popov determinant by using the relations detailed in Appendix \ref{app:measure-considerations} to account for the non-unimodularity of $AN_{\sigma,\kappa}$. Indeed, consider first a corner $(\tau_{i}, v_{j})$ not corresponding to a root of either tree, then \eqref{eq:to-regge-pgf-factor} contributes,
\begin{align}
    &\int dg_{\tau_{i} v_{j}}\delta_{H_{\sigma}}\Big((\bar{L}_{j}')\rvert_{H_{\sigma}}\Big)\delta_{AN_{\sigma,\kappa}}\Big((\bar{h}_{i}')\rvert_{AN_{\sigma,\kappa}}\Big)= \int dh_{\tau_{i} v_{j}}d_{R}\ell_{\tau_{i} v_{j}}\delta_{H_{\sigma}}\Big((\bar{L}_{j}')\rvert_{H_{\sigma}}\Big)\delta_{AN_{\sigma,\kappa}}\Big((\bar{h}_{i}')\rvert_{AN_{\sigma,\kappa}}\Big)\nonumber\\
    &=\int dh_{\tau_{i} v_{j}}d_{R}\ell_{\tau_{i} v_{j}}\delta_{H_{\sigma}}\Big(\ell_{\tau_{i} v_{j}}^{-1}\rhd\Big(h_{\tau_{i} v_{j}}^{-1}H_{j}(L_{j}\rhd h_{\tau_{i} v_{j+1}})\Big)\Big)\delta_{AN_{\sigma,\kappa}}\Big(\ell_{\tau_{i}v_{j}}^{-1}(h_{\tau_{i}v_{j}}^{-1}\bar{h}_{i}h_{\tau_{i+1}v_{j}}\ell_{\tau_{i+1}v_{j}})\rvert_{AN_{\sigma,\kappa}}\Big)\nonumber\\
    &=\int dh_{\tau_{i} v_{j}}d_{R}\ell_{\tau_{i} v_{j}}\Delta_{R}(\ell_{\tau_{i}v_{j}})^{-1}\delta_{H_{\sigma}}\Big(h_{\tau_{i} v_{j}}^{-1}H_{j}(L_{j}\rhd h_{\tau_{i} v_{j+1}})\Big)\delta_{AN_{\sigma,\kappa}}\Big(\ell_{\tau_{i}v_{j}}^{-1}(h_{\tau_{i}v_{j}}^{-1}\bar{h}_{i}h_{\tau_{i+1}v_{j}}\ell_{\tau_{i+1}v_{j}})\rvert_{AN_{\sigma,\kappa}}\Big)\nonumber\\
    &=\int dh_{\tau_{i} v_{j}}d_{L}\ell_{\tau_{i} v_{j}}\delta_{H_{\sigma}}\Big(h_{\tau_{i} v_{j}}^{-1}H_{j}(L_{j}\rhd h_{\tau_{i} v_{j+1}})\Big)\delta_{AN_{\sigma,\kappa}}\Big(\ell_{\tau_{i}v_{j}}^{-1}(h_{\tau_{i}v_{j}}^{-1}\bar{h}_{i}h_{\tau_{i+1}v_{j}}\ell_{\tau_{i+1}v_{j}})\rvert_{AN_{\sigma,\kappa}}\Big)\nonumber\\
    &=\int d_{L}\ell_{\tau_{i}v_{j}}\delta_{AN_{\sigma,\kappa}}\Big(\ell_{\tau_{i}v_{j}}^{-1}(u^{-1}\bar{h}_{i}h_{\tau_{i+1}v_{j}}\ell_{\tau_{i+1}v_{j}})\rvert_{AN_{\sigma,\kappa}}\Big) = 1,\quad u:=H_{j}(L_{j}\rhd h_{\tau_{i} v_{j+1}})
\end{align}
The contribution when $(\tau_{i},v_{j})$ instead corresponds to a root of just a $\Delta$ tree is by definition given by plugging $h_{\tau_{i}v_{j}} = 1$ into the first step of the above and omitting the $H_{\sigma}$ delta distribution. Similarly, the contribution when $(\tau_{i}, v_{j})$ corresponds to a root of a $\Delta^{*}$ tree is given by plugging $\ell_{\tau_{i}v_{j}} = 1$ into the first step of the above and omitting the $AN_{\sigma,\kappa}$ delta distribution. These respectively leave one with the contributions
\begin{subequations}
\begin{align}
    &\int d_{L}\ell_{\tau_{i} v_{j}}\delta_{AN_{\sigma,\kappa}}\Big(\ell_{\tau_{i}v_{j}}^{-1}(\bar{h}_{i}h_{\tau_{i+1}v_{j}}\ell_{\tau_{i+1}v_{j}})\rvert_{AN_{\sigma,\kappa}}\Big)  = 1\\
    &\int dh_{\tau_{i} v_{j}}\delta_{H_{\sigma}}\Big(h_{\tau_{i} v_{j}}^{-1}H_{j}(L_{j}\rhd h_{\tau_{i} v_{j+1}})\Big) = 1
\end{align}
\end{subequations}
The case of $(\tau_{i}, v_{j})$ corresponding to a root of both trees by definition provides an identity contribution since we take $g_{\tau_{i}v_{j}} = 1$ and do not integrate.

Inserting \eqref{eq:to-regge-pgf-factor} into \eqref{eq:ribbon-split-BF-path-integral}, gauge invariance allows an inverse gauge transformation to remove the parameters from the integrand. The primed holonomies become unprimed, leaving
\begin{equation}
    \prod_{\tau}\prod_{i=2}^{3}\delta_{H_{\sigma}}\Big(H_{i}\Big)\prod_{v}\prod_{i=2}^{m}\delta_{AN_{\sigma,\kappa}}\Big(\ell_{i}\Big).
\end{equation}
The parameter integrals now factor out as the gauge volumes already divided out by convention. Integrating the remaining delta functions gives
\begin{subequations}
    \begin{align}
       d\bar{L}_{i}\delta_{H_{\sigma}}(H_{i}) &=  dH_{i}d_{R}L_{i}\delta_{H_{\sigma}}(H_{i})\rightarrow d_{R}L_{i},\\
        d\bar{h}_{i}\delta_{AN_{\sigma,\kappa}}(\ell_{i}) &= d_{L}\ell_{i}dh_{i}\delta_{AN_{\sigma,\kappa}}(\ell_{i})\rightarrow  dh_{i},
    \end{align}
\end{subequations}
where the arrow above denotes performing the integration. This deals with the fixing and integrations along all non-root elements of the aforementioned trees.

To address the root elements, note that with $\bar{L}_{1} = H_{1}L_{1}$ and $\bar{h}_{1} = \ell_{1}h_{1}$ the partially gauged fixed constraints in \eqref{eq:ribbon-split-BF-path-integral} now take the form,
\begin{equation}
    \bar{\cG}_{\tau} \rightarrow H_{1}L_{1} L_{2}L_{3} = H_{1}\cG_{\tau},\quad \bar{\cR}_{v}\rightarrow \ell_{1}h_{1}h_{2}\cdots h_{m} = \ell_{1}\cR_{v}.
\end{equation}
Performing the integrations over the remaining $H_{1}$ and $\ell_{1}$ factors follows from \eqref{eq:G-delta-factorization} as
\begin{subequations}
    \begin{align}
        \int dH_{1} d_{R}L_{1} \Big(\delta_{G_{\sigma,\kappa}}(H_{1}L_{1} L_{2}L_{3})f(H,L)\Big) &= \int d_{R}L_{1}\Big(\delta_{AN_{\sigma,\kappa}}(L_{1} L_{2}L_{3})f(1,L)\Big) \nonumber\\
        &= \int d_{R}L_{1}\delta_{AN_{\sigma,\kappa}}(\cG_{\tau})f(1,L),\\
        \int d_{L}\ell_{1}dh_{1}\Big(\delta_{G_{\sigma,\kappa}}(\ell_{1}h_{1}h_{2}\cdots h_{m})f(\ell, h)\Big) &= \int dh_{1}\Big(\delta_{H_{\sigma}}(h_{1}h_{2}\cdots h_{m})f(1,h)\Big)\nonumber\\
        &=\int dh_{1}\delta_{H_{\sigma}}(\cR_{v})f(1,h),
    \end{align}
\end{subequations}
for any smooth test function $f$. Altogether the partial gauge fixing leaves us with subgroup face constraints $\delta_{AN_{\sigma,\kappa}}(\cG_{\tau})$ and $\delta_{H_{\sigma}}(\cR_{v})$, and hence the partially gauged fixed form of the path integral \eqref{eq:ribbon-split-BF-path-integral},
\begin{equation}
    Z^{\Delta}_{\mathrm{BF},\mathrm{rib}}\rvert_{\text{pgf}}= \int \prod_{e\in E(\Delta)}\Big(dh^{s(e)}_{e}d_{R}\ell^{\tau_{L}}_{e}d_{L}\ell^{\tau_{R}}_{e}dh^{t(e)}_{e}\Big)\prod_{e\in E(\Delta)}\delta_{G_{\sigma,\kappa}}(\mathcal{W}_{e})\prod_{\tau\in T(\Delta)}\delta_{AN_{\sigma,\kappa}}(\cG_{\tau})\prod_{v\in V(\Delta)}\delta_{H_{\sigma}}(\cR_{v}).
\end{equation}
The final step is thus to integrate over the ribbon flatness constraint. This can be simply done by noting that $d_{L}\ell^{\tau_{R}}_{e}dh^{t(e)}_{e} = d(\ell^{\tau_{R}}_{e}h^{t(e)}_{e})$, and using,
\begin{equation}
    \int d_L\ell_e^{\tau_R}dh_e^{t(e)}\delta_{G_{\sigma,\kappa}}\Big(h_e^{s(e)}\ell_e^{\tau_L}(h_e^{t(e)})^{-1}(\ell_e^{\tau_R})^{-1}\Big)f(\ell_e^{\tau_R},h_e^{t(e)})=f\Big(h_e^{s(e)}\rhd\ell_e^{\tau_L},h_e^{s(e)}\lhd\ell_e^{\tau_L}\Big).
\end{equation}
This last step gives the final partially gauge-fixed form for \eqref{eq:ribbon-split-BF-path-integral},
\begin{align}{\label{eq:BF-split-path-integral-v3}}
    Z^{\Delta}_{\mathrm{BF},\mathrm{rib}}\rvert_{\text{pgf}} &= \int \prod_{e\in E(\Delta)}dg_{e}\prod_{v\in V(\Delta)}\delta_{H_{\sigma}}(\cR_{v})\prod_{\tau\in T(\Delta)}\delta_{AN_{\sigma,\kappa}}(\cG_{\tau})\nonumber\\
    &=:Z^{\Delta}_{\mathrm{BF},\mathrm{split}},
\end{align}
where $dg_{e} := dh^{s(e)}_{e}d_{R}\ell^{\tau_{L}}_{e} = d_{L}\ell^{\tau_{R}}_{e} dh^{t(e)}_{e}$. 

We remark that the cilia (cuts) in \eqref{eq:BF-split-path-integral-v3} are inherited from the tree roots chosen in the partial gauge fixing of the cut independent ribbon BF path integral \eqref{eq:ribbon-split-BF-path-integral}. Since $AN_{\sigma,\kappa}$ is non-unimodular for $\Lambda\neq0$, cyclically changing the cut in $\cG_{\tau}$ generally introduces a modular factor in its delta distribution, e.g.,
\begin{equation}
    \delta_{AN_{\sigma,\kappa}}\Big(L_{1}^{-1}\cG_{\tau}L_{1}\Big) =\Delta_{R}(L_{1})\delta_{AN_{\sigma,\kappa}}(\cG_{\tau})
\end{equation}
Comparing different cilium choices therefore requires the corresponding change of partial gauge fixing at the level of \eqref{eq:ribbon-split-BF-path-integral}, accounting for the modular factors and dressing Jacobians described in Appendix \ref{app:measure-considerations}. We call $Z^\Delta_{\mathrm{BF},\mathrm{split}}$ the \textit{split BF path integral}. 

\paragraph{Split BF on the disk.} A special case of the split BF path integral \eqref{eq:BF-split-path-integral-v3} that we will need for matching with Regge is its restriction to a subcomplex triangulating a disk $D\subset M$. The restriction to a disk is to allow for $M$ to have a general topology while foregoing a proper treatment of holonomies around non-contractible loops. Extending the correspondence to non-trivial topology would require treating these global holonomies explicitly and is not addressed here.

Let $\Delta$ be an oriented triangulation of a disk $D\subset M$. We add one exterior face $\tau_{0}$ to $\Delta$, and denote the resulting cellulation of $S^{2}$ by $\bar{\Delta}$. Its vertices and edges are those of $\Delta$, and,
\begin{equation}
    T(\bar{\Delta}) = T(\Delta)\cup \{\tau_{0}\}.
\end{equation}
Every boundary edge now has a full ribbon \eqref{eq:ribbon-factorization}, and no weights are assigned to $\tau_{0}$. Choose some $v_{*}\in \partial D$ and cut $\cG_{\tau_{0}}$ at $v_{*}$ and cut $\cR_{v_{*}}$ similarly at $\tau_{0}$. These boundary constraints are redundant by the discrete Bianchi identity, thus we retain only the independent constraints,
\begin{equation}
    \cG_{\tau}\approx 1\quad \forall \tau\in T(\Delta),\quad \cR_{v}\approx 1\quad \forall v\in V(\Delta)\setminus\{v_{*}\}.
\end{equation}
The boundary completed disk formulation of the split BF path integral \eqref{eq:BF-split-path-integral-v3} is thus simply
\begin{equation}
    Z_{\mathrm{BF},\mathrm{split}}^{\Delta,D} = \int \prod_{e\in E(\Delta)}dg_{e}\prod_{\tau\in T(\Delta)}\delta_{AN_{\sigma,\kappa}}(\cG_{\tau})\prod_{v\neq v_{*}}\delta_{H_{\sigma}}(\cR_{v}).\label{eq:split-BF-path-integral-boundary-completion}
\end{equation}
The next section is dedicated to proving the equivalence of \eqref{eq:split-BF-path-integral-boundary-completion} with the Regge path integral \eqref{eq:disk-homo-regge-path-integral}.

\section{Recovering two-dimensional homogeneous Regge gravity from BF theory}\label{sec:recovering-homo-regge}

We now prove the main correspondence in the sector where all ingredients are under control. Let $\Delta$ triangulate a disk and take Euclidean signature $(\sigma=1)$ with $\kappa=0,-1$. We restrict to non-degenerate, consistently oriented, non-folded triangles and to a fixed orientation branch. Starting from  the split BF amplitude \eqref{eq:split-BF-path-integral-boundary-completion}, we first identify the $AN_{1,\kappa}$ triangle constraint with geodesic closure, then identify the $SO(2)$ curvature constraint with the Regge deficit angle modulo $2\pi$, and finally push the Haar measure forward to edge-length variables. The resulting BF amplitude carries a periodic deficit-angle constraint; equality with the real-delta LDG Regge  path integral \eqref{eq:disk-homo-regge-path-integral} follows after selecting the zero-winding sector.

Starting from the split BF path integral \eqref{eq:split-BF-path-integral-boundary-completion} on the disk, the argument explicitly proceeds as follows:
\begin{enumerate}
    \item We define a notion of intrinsic length for each long ribbon side and prove that the ribbon relation \eqref{eq:ribbon-factorization} implies both long ribbon side holonomies $\ell^{\tau_{L}}_{e},\ell^{\tau_{R}}_{e}$ are assigned the same intrinsic length. We then identify $\cG_{\tau}\approx 1$ as the condition for geodesic triangle closure. 
    \item We define a notion of \textit{edge-adapted frame} for every $\tau$, allowing for the manifest identification of the interior vertex curvature constraint $\cR_{v}\approx 1$ with vanishing deficit angle modulo $2\pi$.
    \item We implement the change of coordinates from split BF holonomies to Regge edge lengths, thereby identifying their path integral measures.
    \item Finally, as a consistency check, we show the Pachner move invariance of the Regge path integral.
\end{enumerate}
Results 1. and 2. establish equality of the split BF and Regge constraint surfaces, which combined with 3. gives equality of their path integrals. Result 4. is a quality check to ensure the Regge theory we derive is indeed topological.

\subsection{Triangle closure in split BF}\label{subsec:regge-closure-from-torsion}

The global Iwasawa factorization of $G_{1,\kappa}$ identifies the model space $X$ of \eqref{eq:X-as-quotient} with $AN_{1,\kappa}$. We use this identification to interpret $\cG_\tau\approx1$ as geodesic triangle closure.

The $G_{1,\kappa}$-invariant distance $d_{X}(\bullet,\bullet)$ defines the intrinsic length of $\ell\in AN_{1,\kappa}$ by
\begin{equation}
    l(\ell):=d_{X}(x_{0},\ell x_{0}),\label{eq:ell-intrinsic-length}
\end{equation}
where $x_0\in X$ is the base point fixed by $H_1$.\footnote{Different choices of base point give equivalent homogeneous space descriptions.}

\begin{lemma}[$H_{\sigma}$ dressing and inversion invariance of $l(\ell)$]\label{lem:dressing-and-inversion-invariance-of-distance-function}
    For every $h\in H_{1}$ and $\ell\in AN_{1,\kappa}$, the intrinsic lengths \eqref{eq:ell-intrinsic-length} are $H_{1}$ dressing invariant, and inversion invariant,
    \begin{equation}
        l(h\rhd \ell) = l(\ell),\quad l(\ell^{-1}) = l(\ell).
    \end{equation}
    Thus, if $e$ is an edge with ribbon data \eqref{eq:ribbon-factorization}, then $\ell^{\tau_{L}}_{e}$ and $\ell^{\tau_{R}}_{e}$ assign the same intrinsic length,
    \begin{equation}
        l(\ell^{\tau_{L}}_{e}) = l(\ell^{\tau_{R}}_{e}).
    \end{equation}
\end{lemma}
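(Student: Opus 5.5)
The plan is to work entirely with the action of $G_{1,\kappa}$ on the model space $X=G_{1,\kappa}/H_1$ and the $G_{1,\kappa}$-invariance of $d_X$. We will use that $H_1$ fixes the base point $x_0$, and the matched-pair identity \eqref{eq:G-Iwasawa-decomp}: for $h\in H_1$ and $\ell\in AN_{1,\kappa}$,
\begin{equation*}
h\ell=(h\rhd\ell)(h\lhd\ell),
\end{equation*}
with $h\rhd\ell\in AN_{1,\kappa}$ and $h\lhd\ell\in H_1$. This factorization is global for the Euclidean cases $\kappa=0,-1$ treated here.

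\textbf{Dressing invariance.} Apply both sides of the matched-pair identity to $x_0$. Since $h\lhd\ell\in H_1$ fixes $x_0$, we get $(h\rhd\ell)x_0=h\ell x_0$. Hence
\begin{equation*}
l(h\rhd\ell)=d_X(x_0,h\ell x_0)=d_X(h x_0,h\ell x_0)=d_X(x_0,\ell x_0)=l(\ell).
\end{equation*}
The second equality uses $hx_0=x_0$, and the third uses $G_{1,\kappa}$-invariance of $d_X$.

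\textbf{Inversion invariance.} Use invariance under $\ell^{-1}$ together with symmetry of the distance:
\begin{equation*}
l(\ell^{-1})=d_X(x_0,\ell^{-1}x_0)=d_X(\ell x_0,x_0)=d_X(x_0,\ell x_0)=l(\ell).
\end{equation*}
Symmetry holds because $d_X$ is a genuine Riemannian distance in Euclidean signature, which is why the lemma is restricted to $\sigma=1$.

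\textbf{Ribbon consequence.} By the ribbon relation \eqref{eq:ribbon-factorization}, $\ell^{\tau_R}_e=h^{s(e)}_e\rhd\ell^{\tau_L}_e$. The dressing invariance then gives $l(\ell^{\tau_R}_e)=l(\ell^{\tau_L}_e)$. Inversion invariance is what makes the assigned length independent of the orientation in which a long side is traversed, for instance when the factors $L_i=(\ell^\tau_{e_i})^{\epsilon_\tau(e_i)}$ enter $\cG_\tau$.

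\textbf{Main obstacle.} The only delicate point is justifying that the dressing action $\rhd$ is defined through the \emph{unique} global factorization. Once uniqueness in \eqref{eq:G-Iwasawa-decomp} is granted, every step is a one-line consequence of the facts that $H_1$ is the stabilizer of $x_0$ and that $d_X$ is invariant. In the flat case $\kappa=0$ one can additionally sanity-check the result directly: $AN_{1,0}$ is the translation group, $h\rhd\ell$ is the rotated translation, and its Euclidean length is clearly unchanged.
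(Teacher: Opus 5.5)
Your proof is correct and follows essentially the same route as the paper. You apply the matched-pair factorization $h\ell=(h\rhd\ell)(h\lhd\ell)$ to the $H_1$-fixed base point $x_0$ and use $G_{1,\kappa}$-invariance of $d_X$ for dressing invariance, invariance plus symmetry of $d_X$ for inversion invariance, and the ribbon relation $\ell^{\tau_R}_e=h^{s(e)}_e\rhd\ell^{\tau_L}_e$ for the final claim; the only cosmetic difference is that you insert $x_0=hx_0$ before invoking invariance, whereas the paper translates by $h^{-1}$ and then uses $h^{-1}x_0=x_0$.
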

\begin{proof}
    Since by definition $h\in H_{\sigma}$ fixes $x_{0}$ (cf. \eqref{eq:X-as-quotient}), we have,
    \begin{equation}
        (h\rhd \ell)x_{0} = (h\rhd \ell)(h\lhd \ell)x_{0} = h\ell x_{0}.
    \end{equation}
    The distance function is $G_{1,\kappa}$-invariant, thus,
    \begin{equation}
        l(h\rhd\ell) = d_{X}(x_{0}, h\ell x_{0}) = d_{X}(h^{-1}x_{0},\ell x_{0}) = d_{X}(x_{0},\ell x_{0}) = l(\ell),
    \end{equation}
    as claimed. For inversion, we simply use the symmetry of $d_{X}(\bullet,\bullet)$ in its two arguments,
    \begin{equation}
        l(\ell) = d_{X}(x_{0},\ell x_{0}) = d_{X}(\ell^{-1}x_{0},x_{0}) = d_{X}(x_{0},\ell^{-1}x_{0}) = l(\ell^{-1}),
    \end{equation}
    as claimed. The final claim follows from the above and the ribbon relation \eqref{eq:ribbon-factorization},
    \begin{equation}
        l(\ell^{\tau_{R}}_{e}) = l(h^{s(e)}_{e}\rhd \ell^{\tau_{L}}_{e})= l(\ell^{\tau_{L}}_{e}),
    \end{equation}
    as claimed.
\end{proof}
Consider now a single triangle $\tau$ in the split BF theory, denoting its three oriented edges by $e_{i}$. We adopt the same shorthand as in \eqref{eq:unbarred-ribbon-constraints} to write its associated torsion constraint $\cG_{\tau}$ as,
\begin{equation}
    \cG_{\tau} = L_{1}L_{2}L_{3},\quad L_{i}:=(\ell^{\tau}_{e_{i}})^{\epsilon_{\tau}(e_{i})},\quad i = 1,2,3.
\end{equation}
Each $L_{i}$ factor determines the geodesic from $x_{0}$ to $L_{i}x_{0}$ of length $l(L_{i})$. We now prove that solutions to $\cG_{\tau}\approx 1$ are equivalent to closed oriented homogeneous triangles in $X$, just like the simplices in the homogeneous Regge discretization of two-dimensional gravity.

\begin{proposition}[Geodesic Triangle Closure]\label{prop:triangle-closure}
    On the sector in which the three resulting geodesics are non-degenerate,
    \begin{equation}
        \cG_{\tau} = L_{1}L_{2}L_{3}\approx 1,
    \end{equation}
    is equivalent to the closure of an oriented geodesic/homogeneous triangle in $X$ whose edge lengths are $l(L_{1}), l(L_{2}), l(L_{3})$.
\end{proposition}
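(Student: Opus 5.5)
The idea is to turn the group word $\cG_\tau=L_1L_2L_3$ into a closed polygon in $X$ by taking cumulative products and acting on the base point $x_0$. We use the global Iwasawa factorization for $\sigma=1,\kappa\in\{0,-1\}$. Under it, the orbit map $\ell\mapsto \ell x_0$ is a bijection $AN_{1,\kappa}\to X$, because $AN_{1,\kappa}$ acts simply transitively on $X=G_{1,\kappa}/H_1$.

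Set $x_1:=x_0$, $x_2:=L_1x_0$, $x_3:=L_1L_2x_0$ and $x_4:=L_1L_2L_3x_0$. Join $x_i$ to $x_{i+1}$ by the image under the partial product $L_1\cdots L_{i-1}$ of the geodesic from $x_0$ to $L_ix_0$. Since $G_{1,\kappa}$ acts by isometries, these images are again geodesics. By invariance of $d_X$, the $i$-th segment has length
\[
d_X(x_i,x_{i+1})=d_X(L_1\cdots L_{i-1}x_0,\,L_1\cdots L_i x_0)=d_X(x_0,L_ix_0)=l(L_i).
\]
For $\kappa=-1$ on $H^2$ and $\kappa=0$ on $\mathbb{R}^2$, geodesics between distinct points are unique, so each segment is determined by its endpoints.

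The first direction is that constraint implies closure. If $\cG_\tau=1$, then $x_4=x_1$ and the three segments close up into a geodesic triangle with vertices $x_1,x_2,x_3$ and edge lengths $l(L_1),l(L_2),l(L_3)$. The non-degeneracy hypothesis means the three points are distinct and not collinear, so this is a genuine triangle. Its orientation is fixed by the cyclic order of the cut $(\tau,v_1)$, together with the signs $\epsilon_\tau(e_i)$ built into the $L_i$. Lemma~\ref{lem:dressing-and-inversion-invariance-of-distance-function} ensures that the inversions $L_i=(\ell^\tau_{e_i})^{\pm1}$ and the choice of ribbon side do not change the assigned lengths. The lengths are therefore edge data rather than triangle-dependent.

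The converse, closure implies the constraint, is the step I expect to be the main obstacle. Closure of the geodesic triangle only gives $L_1L_2L_3x_0=x_0$, so a priori $L_1L_2L_3\in H_1$. To finish, I will use the factorization uniqueness of \eqref{eq:G-Iwasawa-decomp}. The product $L_1L_2L_3$ lies in the subgroup $AN_{1,\kappa}$, and $H_1\cap AN_{1,\kappa}=\{1\}$, so $L_1L_2L_3=1$. Equivalently, the orbit map $AN_{1,\kappa}\to X$ is injective.

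The remaining point is to argue that every oriented geodesic triangle arises this way, so that the two solution sets match. Given a triangle with vertices $y_1,y_2,y_3$, I will use transitivity to move $y_1$ to $x_0$. I will then use the unique $AN_{1,\kappa}$ representatives $L_1$ with $L_1x_0=y_2$ and $L_1L_2x_0=y_3$, and set $L_3:=(L_1L_2)^{-1}$. The $SO(2)$ freedom of rotating the triangle about $x_0$ corresponds exactly to the $H$-dressing already removed by the partial gauge fixing. The correspondence therefore holds up to this residual frame choice, and I will state it that way.
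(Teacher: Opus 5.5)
Your proof is correct and follows the paper's route. The forward direction uses cumulative products acting on $x_0$ together with $G_{1,\kappa}$-invariance of $d_X$, and the converse uses the unique $AN_{1,\kappa}$ representatives $x_i=q_ix_0$ with $L_i=q_i^{-1}q_{i+1}$, to which you usefully add the injectivity step $L_1L_2L_3\in H_1\cap AN_{1,\kappa}=\{1\}$ that the paper leaves implicit. One side remark needs correcting: the rotation of the developed triangle about $x_0$ is not removed by the partial gauge fixing; it is precisely the residual $H_1$ freedom at the cut, which the paper fixes only later through the edge-adapted frame condition $\varphi_\tau=0$.
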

\begin{proof}
    Suppose first that $L_{1}L_{2}L_{3}\approx 1$. Using the base point $x_{0}\in X$, define three other points $x_{1}, x_{2}, x_{3}\in X$ by,
    \begin{equation}
        x_{1}:= L_{1}x_{0},\quad x_{2}:= L_{1}L_{2}x_{0},\quad x_{3}:=L_{1}L_{2}L_{3}x_{0}.
    \end{equation}
    Since $L_{1}L_{2}L_{3}\approx 1$, we have $x_{3} \approx x_{0}$, thus $x_{0}\approx x_{3}, x_{1}, x_{2}$ close. Moreover, the lengths of each geodesic side of this configuration are indeed $l(L_{i})$,
    \begin{subequations}
    \begin{align}
        d_{X}(x_{0},x_{1}) &= d_{X}(x_{0}, L_{1}x_{0}) = l(L_{1}),\\
        d_{X}(x_{1}, x_{2}) &= d_{X}(L_{1}x_{0}, L_{1}L_{2}x_{0}) = d_{X}(x_{0}, L_{2}x_{0}) = l(L_{2}),\\
        d_{X}(x_{2}, x_{0}) &= d_{X}(L_{1}L_{2}x_{0}, x_{0}) = d_{X}(x_{0}, (L_{1}L_{2})^{-1}x_{0}) \approx d_{X}(x_{0}, L_{3}x_{0}) = l(L_{3}).
    \end{align}
     \end{subequations}
    Conversely, let $(x_{1}, x_{2}, x_{3})$ be a closed geodesic triangle in $X$. By the correspondence of $X$ and $AN_{1,\kappa}$, we may write $x_{i} = q_{i}x_{0}$ with unique $q_{i}\in AN_{1,\kappa}$. Then, defining,
    \begin{equation}
        L_{1} = q_{1}^{-1}q_{2},\quad L_{2} = q_{2}^{-1}q_{3}, \quad L_{3} = q_{3}^{-1}q_{1},
    \end{equation}
    we have,
    \begin{equation}
        L_{1}L_{2}L_{3} = q_{1}^{-1}q_{2}q_{2}^{-1}q_{3}q_{3}^{-1}q_{1} = 1,
    \end{equation}
    which is exactly the on-shell constraint $\cG_{\tau} = L_{1}L_{2}L_{3}\approx 1$. Moreover, the edge lengths are indeed,
    \begin{subequations}
    \begin{align}
        d_{X}(x_{1}, x_{2}) &= d_{X}(q_{1}x_{0}, q_{2}x_{0}) = d_{X}(x_{0}, q_{1}^{-1}q_{2}x_{0}) = l(L_{1}),\\
        d_{X}(x_{2}, x_{3}) &= d_{X}(q_{2}x_{0}, q_{3}x_{0}) = d_{X}(x_{0}, q_{2}^{-1}q_{3}x_{0}) = l(L_{2}),\\
        d_{X}(x_{3}, x_{1}) &= d_{X}(q_{3}x_{0}, q_{1}x_{0}) = d_{X}(x_{0}, q_{3}^{-1}q_{1}x_{0}) = l(L_{3}).
    \end{align}
     \end{subequations}
     Thus indeed $\cG_{\tau} \approx 1$ is equivalent to the closure of an oriented geodesic triangle in $X$, whose edge lengths are $l(L_{1}), l(L_{2}), l(L_{3})$, as claimed.
\end{proof}
Thus the nondegenerate solutions of $\cG_{\tau}\approx1$ give precisely the homogeneous triangles used in the Regge discretization.

\subsection{Edge-adapted frames and Regge deficit angle in split BF}\label{subsec:regge-deficit-from-curvature}

The next step is to prove that on-shell of all $\cG_{\tau}\approx 1$ constraints the discretized Iwasawa curvature constraint $\cR_{v}\approx 1$ is equivalent to the vanishing of the homogeneous Regge deficit angle (modulo $2\pi$). The key ingredient to this construction is to use the residual $H_{1}$ gauge freedom in \eqref{eq:split-BF-path-integral-boundary-completion} to fix \textit{edge-adapted frames} for each triangle. The split BF ribbon holonomies in this frame then have a manifest relation to the desired Regge variables.

Towards this goal, we represent the initial tangent of the geodesic from $x_{0}$ to $\ell x_{0}$, $\ell\neq 1 \in AN_{1,\kappa}$, in geodesic polar coordinates $(l(\ell), \varphi(\ell))$ on $X$. By the identification of $X$ with $AN_{1,\kappa}$, the geodesic polar line element and measure are
\begin{equation}\label{eq:AN-polar-measure}
    ds^{2} = dl^{2}+S_{\Lambda}(l)^{2}d\varphi^{2},\quad d_{L}\ell = S_{\Lambda}(l)dld\varphi
\end{equation}
Indeed, $AN_{1,\kappa}$ acts simply transitively by isometries, so its invariant area measure is a left Haar measure. Recalling for each physical triangle $\tau$ the cut of $\cG_{\tau} = L_{1}L_{2}L_{3}$ is denoted $(\tau, v_{1})$, define $\varphi_{\tau}$ to be the polar angle of the geodesic $x_{0}\rightarrow L_{1}x_{0}$ describing the angular direction of the first edge at the cut. Then, for every $\tau$, the residual $H_{1}$ gauge freedom in \eqref{eq:split-BF-path-integral-boundary-completion} allows one to align the first $\tau$ frame axis with the outgoing edge at the cut.

\begin{lemma}[edge-adapted frames]{\label{lem:edge-adapted-frames}}
    On $\cG_{\tau}\approx 1$, the residual $H_{1}$ gauge freedom may be fixed by 
    \begin{equation}\label{eq:edge-adapted-gauge-condition}
        \varphi_{\tau} = 0 \: (\mathrm{mod}\: 2\pi)
    \end{equation}
    This is the so-called \textit{edge-adapted frame} for $\tau$, aligning the first $\tau$ frame axis with the outgoing edge at the cut. The Faddeev-Popov determinant is one.

    Fixing one interior vertex $v$ with incident edges $e_{1},\ldots, e_{m}$ in cyclic order $\tau_{i}$ between $e_{i-1}$ and $e_{i}$ as in Figure \ref{fig:edge-adapted-frames}, the short side holonomies in the curvature constraint $\cR_{v} = h_{1}\cdots h_{m}$ have the form
    \begin{equation}
        h_{i} = \mathrm{exp}(-\beta^{\tau_{i}}_{v}J)
    \end{equation}
    when each $\tau_{i}$ is in its edge-adapted frame.
\end{lemma}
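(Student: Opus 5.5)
The proof splits into two parts: the gauge fixing with its unit Faddeev--Popov determinant, and the identification of the short side holonomies as rotations by the internal angles.

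\emph{Gauge fixing.} After the pgf of \S\ref{subsec:regge-adapted-pgf}, each triangle face retains a residual $H_1=SO(2)$ gauge parameter $h_\tau$ acting at its cut corner $(\tau,v_1)$. Under a residual transformation the cut factor changes as $L_1\mapsto h_\tau^{-1}\rhd L_1$, and the short sides incident on the corners of $\tau$ change by the corresponding dressed rotations, so that all $H_1$/$AN_{1,\kappa}$ splittings are preserved. Since $(h^{-1}\rhd L_1)x_0=h^{-1}L_1x_0$ and $h^{-1}=\exp(-\theta J)$ fixes $x_0$, this acts on the geodesic polar coordinates of \eqref{eq:AN-polar-measure} by $\varphi(L_1)\mapsto\varphi(L_1)-\theta$ while leaving $l(L_1)$ invariant, as in Lemma \ref{lem:dressing-and-inversion-invariance-of-distance-function}. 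The action is free and transitive on the circle of directions whenever $L_1\neq1$, which holds on the non-degenerate sector. I would therefore insert
\[
1=\int_{SO(2)}dh_\tau\,\delta_{SO(2)}\big(\varphi(h_\tau^{-1}\rhd L_1)\big).
\]
With $dh=d\theta$ normalized, the derivative of the argument with respect to $\theta$ is $-1$, so the Faddeev--Popov determinant is $1$. I would also check the compatibility with the measure: $d_L L_1=S_\Lambda(l)\,dl\,d\varphi$, and the gauge orbit is exactly the $d\varphi$ direction, so nothing else in the measure is affected.

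\emph{Short sides as rotations.} Fix $v$ with its incident triangles $\tau_i$, each in its edge-adapted frame. I would work in the on-shell description of Proposition \ref{prop:triangle-closure}: every $\tau_i$ is realized as a geodesic triangle in $X$ with the cut vertex at $x_0$ and its first edge leaving $x_0$ along $\varphi=0$. To read off the frame at any other corner, in particular at $v$, I would transport along the tree, i.e. conjugate by $(L_1L_2)^{-1}$ as appropriate. Because the $AN$ parts on the $\Delta$-tree were gauged so that their $H$ components vanish, the induced frame at $v$ is again the one obtained by the $AN$-translation carrying $v$ to $x_0$.

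The main computation uses the ribbon relation \eqref{eq:ribbon-factorization} at the edge $e_i$ shared by $\tau_i$ and $\tau_{i+1}$. It states that the short side $h_e^v$ is exactly the rotation taking the direction of $e_i$ in the frame of one triangle to its direction in the frame of the other: $\ell^{\tau_R}=h\rhd\ell^{\tau_L}$ and $h\ell^{\tau_L}x_0=\ell^{\tau_R}x_0$, with the lengths agreeing by Lemma \ref{lem:dressing-and-inversion-invariance-of-distance-function}. I would then compare the two polar angles:
\begin{itemize}
\item Inside $\tau_i$, the edges $e_{i-1}$ and $e_i$ at $v$ differ in polar angle by exactly the interior angle $\beta_v^{\tau_i}$. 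This is the definition \eqref{eq:angle-tangent}, and the value is fixed by the curved cosine law \eqref{eq:curved-cosine-law}.
\item Composing this with the rotation that aligns frames across the ribbon gives a rotation of $-\beta_v^{\tau_i}$, taken in the orientation order of Figure \ref{fig:edge-adapted-frames}.
\end{itemize}
Equivalently, one can regroup $\cR_v$ so that each factor is the rotation carrying the frame of $\tau_i$ to that of $\tau_{i+1}$. In the edge-adapted gauge each factor then consists only of the angle between consecutive edges within one triangle, $h_i=\exp(-\beta_v^{\tau_i}J)$.

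\emph{Main obstacle.} The hardest step is the bookkeeping at corners that are not the cut of $\tau_i$. The edge-adapted frame is defined only at the cut, and the dressing $h\lhd\ell$ changes the rotation attached to a corner as one moves along the long sides. The claim requires that moving along a geodesic side by an $AN$ element introduces no extra rotation beyond what the matched-pair relations already account for. In the hyperbolic case this is not automatic, since $AN$ translations are not parallel transports. My first attempt would be to choose the cut vertex $v_1=v$ for every incident triangle, so that the angle at $v$ is read directly at $x_0$. The cut-changing modular factors noted after \eqref{eq:BF-split-path-integral-v3} do not affect the value of $h_i$, only the measure. A second source of error is orientation: signs and ordering, using $\epsilon_\tau(e)$, must be tracked to get $-\beta$ rather than $+\beta$.
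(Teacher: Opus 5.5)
Your proposal is correct and follows essentially the paper's own argument. The residual $SO(2)$ at the cut is propagated around $\tau$ by the dressed recursion; this recursion closes, and so keeps $L_3\in AN_{1,\kappa}$, precisely because $\cG_\tau\approx1$, which you should make explicit behind ``all splittings are preserved''. It shifts $\varphi_\tau\mapsto\varphi_\tau-\theta_\tau$ with unit Faddeev--Popov determinant, and $h_i$ is then read off from the ribbon relation by matching the shared-edge tangent: $(1,0)$ in the $\tau_{i-1}$ frame against $(\cos\beta_v^{\tau_i},-\sin\beta_v^{\tau_i})$ in the $\tau_i$ frame. Your non-cut-corner worry is resolved exactly as you propose: the paper's argument (and Figure~\ref{fig:edge-adapted-frames}) takes the first frame axis at each $(\tau_i,v)$ along $e_i$, which amounts to cutting every incident triangle at $v$, and this is why the second claim is stated for one fixed interior vertex.
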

\begin{figure}[ht]
\centering
\begin{tikzpicture}[
  x=1cm,y=1cm,font=\small,line cap=round,line join=round,
  ef edge/.style={draw=black!60,line width=.65pt},
  ef axis/.style={line width=1.15pt,-{Stealth[length=2mm,width=1.4mm]}},
  every node/.style={inner sep=2pt}
]
  
  \coordinate (A) at (-.36,0);
  \coordinate (B) at (.36,0);
  \coordinate (C) at ({-.36-1.8*sqrt(3)},1.8);
  \coordinate (D) at (-.36,3.6);
  \coordinate (E) at (.36,3.6);
  \coordinate (F) at ({.36+1.8*sqrt(3)},1.8);
  \fill[jtSage] (A)--(D)--(C)--cycle;
  \fill[jtSage!65!white] (B)--(F)--(E)--cycle;
  \fill[jtBlue] (A)--(D)--(E)--(B)--cycle;
  \draw[ef edge] (A)--(C)--(D)--(A);
  \draw[ef edge] (B)--(F)--(E)--(B);
  \draw[ef edge,dashed] (D)--(E);

  \node at ({-.36-.6*sqrt(3)-1},1.8) {$\tau_i$};
  \node at ({.36+.6*sqrt(3)+1},1.8) {$\tau_{i-1}$};
  \node at ($(A)!.5!(C)+(-.16,-.2771)$) {$e_i$};
  \node at (0.8,1.8) {$e_{i-1}$};

  \draw[ef axis,draw=ribbonblue] (B)--++(90:1.35);
  \draw[ef axis,draw=facegreen] (A)--++(150:1.35);

  \draw[ef axis,draw=ribbonblue,dash pattern=on 3pt off 2pt]
    (A)--++(90:1.35);
  \draw[draw=cutred,line width=.8pt,
    -{Stealth[length=1.6mm,width=1.1mm]}]
    ($(A)+(150:.80)$) arc[start angle=150,end angle=90,radius=.80];
  \node[text=cutred] at ($(A)+(120:1.25)$) {$-\beta_v^{\tau_i}$};

  \draw[draw=cutred,line width=1.1pt,
    -{Stealth[length=1.8mm,width=1.2mm]}] (B)--(A);
  \node[text=cutred,below] at (0,-.17) {$h_{i}$};
  \fill (A) circle (1.25pt);
  \fill (B) circle (1.25pt);
  \node[below left] at ($(A)+(-.25,-.12)$) {$(\tau_i,v)$};
  \node[below right] at ($(B)+(.25,-.12)$) {$(\tau_{i-1},v)$};
\end{tikzpicture}
\caption{Edge-adapted frames at the corners $(\tau_i,v)$ and $(\tau_{i-1},v)$. The solid arrows are the first frame axes, along $e_i$ and $e_{i-1}$ respectively. The ribbon relation identifies the shared-edge tangent (dashed blue arrow), whose angle in the $\tau_i$ frame is $-\beta_v^{\tau_i}$. Thus the short-side rotation is $\exp(-\beta_v^{\tau_i}J)$.}
\label{fig:edge-adapted-frames}
\end{figure}
\begin{proof}
   Consider first a single $\tau$. Denote the $H_{1}$ gauge parameter at the root by $h_{\tau v_{1}} = e^{\theta_{\tau}J}$, and propagate it to the other corners by the same recursion as \eqref{eq:split-gauge-recursion}. Then, on triangle closure,
    \begin{equation}\label{eq:edge-adapted-residual-transform}
        L'_{i} = h_{\tau v_{i}}^{-1}L_{i}h_{\tau v_{i+1}}\in AN_{1,\kappa},\quad h_{\tau v_{i+1}} = (h_{\tau v_{i}}^{-1}\lhd L_{i})^{-1}, \quad v_{4} = v_{1}
    \end{equation}
    This recursion closes because $\cG_{\tau} \approx 1$. Since $H_{1}$ fixes $x_{0}$,
    \begin{equation}
        L'_{1}x_{0} = h_{\tau v_{1}}^{-1}L_{1}x_{0},\quad L'_{1}L'_{2}x_{0} = h_{\tau v_{1}}^{-1}L_{1}L_{2}x_{0} \implies \varphi'_{\tau} = \varphi_{\tau} - \theta_{\tau}
    \end{equation}
    Thus $\varphi_{\tau} = 0$ (mod $2\pi$) may be imposed by the insertion,
    \begin{equation}
        1 = \int_{0}^{2\pi}d\theta_{\tau}\delta_{2\pi}(\varphi_{\tau}-\theta_{\tau})
    \end{equation}
    giving the first claim.

    The second claim follows from the fact that the shared edge $e_{i-1}$ of $\tau_{i}$ and $\tau_{i-1}$ has coordinates $(1,0)$ in the $\tau_{i-1}$ frame and $(\cos\beta^{\tau_{i}}_{v},-\sin\beta^{\tau_{i}}_{v})$ in the $\tau_{i}$ frame by $\cG_{\tau_{i}}\approx 1$. The ribbon relation \eqref{eq:ribbon-factorization} identifies these tangents, so we must have
     \begin{equation}
        h_{i} = \mathrm{exp}(-\beta^{\tau_{i}}_{v}J)
    \end{equation}
    as claimed.
\end{proof}
Note that the condition \eqref{eq:edge-adapted-gauge-condition} does not align every outgoing edge at every corner. The three corner rotations in \eqref{eq:edge-adapted-residual-transform} are cyclically related on $\cG_{\tau}\approx 1$, leaving only the one independent $H_{1}$ action at the cut.

The relation between $\cR_{v}\approx 1$ and the Regge deficit angle is manifest in edge-adapted frames.

\begin{proposition}[Regge deficit angle recovery]\label{prop:Regge-deficit-recovery}
    For $v\in V^0(\Delta)$, on the nondegenerate sector of the incident $\cG_{\tau_i}\approx1$ with consistently oriented, non-folded triangles, the curvature constraint is given by the homogeneous Regge deficit angle,
    \begin{equation}
        \cR_{v}\overset{\cG_{\tau_{i}}\approx 1}{=} \mathrm{exp}(\delta_{v}(l;\Lambda)J).
    \end{equation}
    Consequently,
    \begin{equation}
        \cR_{v}\approx 1 \iff \delta_{v}(l;\Lambda) = 0 \quad \text{(mod $2\pi$)}.
    \end{equation}
\end{proposition}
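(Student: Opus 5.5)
The proof assembles Lemma \ref{lem:edge-adapted-frames} with the abelian structure of $H_1=SO(2)$. Fix an interior vertex $v\in V^0(\Delta)$ with incident edges $e_1,\ldots,e_m$ and triangles $\tau_1,\ldots,\tau_m$ in cyclic order, as in Figure \ref{fig:edge-adapted-frames}. We work on the nondegenerate sector where every incident $\cG_{\tau_i}\approx1$. By Proposition \ref{prop:triangle-closure}, each $\tau_i$ is then a closed oriented geodesic triangle in $X$ with edge lengths $l(L_j)$. By Lemma \ref{lem:dressing-and-inversion-invariance-of-distance-function}, neighbouring triangles assign the same length to their shared edge, so a single edge-length assignment $l$ is well defined. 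The internal angle $\beta^{\tau_i}_v(l;\Lambda)\in(0,\pi)$ at $v$ is therefore the one determined by the cosine law \eqref{eq:curved-cosine-law} for $\Lambda=-1/l^2<0$, or by \eqref{eq:flat-cosine-law} for $\Lambda=0$.

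The first step uses the residual $H_1$ gauge freedom. Lemma \ref{lem:edge-adapted-frames} lets us put every $\tau_i$ into its edge-adapted frame at unit Faddeev--Popov cost. In that frame each short side holonomy in $\cR_v=h_1\cdots h_m$ is $h_i=\exp(-\beta^{\tau_i}_v J)$.

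Before using this, I will check that the edge-adapted frame of $\tau_i$ is the frame at the corner $(\tau_i,v)$, and not only at the cut $(\tau_i,v_1)$. If the cut of $\cG_{\tau_i}$ is at a different vertex, the frame at $(\tau_i,v)$ differs by the propagated rotation from \eqref{eq:edge-adapted-residual-transform}. I expect this to be handled by reading the lemma as applying to the frame transported to the corner $(\tau_i,v)$: on closure, the transported frame at $v$ is aligned with an incident edge of $\tau_i$. A different choice of incident edge only rotates by $\pm\beta^{\tau_i}_v$, and these rotations telescope in the product.

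The second step is the product itself. Since $H_1=SO(2)$ is abelian,
\begin{equation}
\cR_v=\prod_{i=1}^m \exp(-\beta^{\tau_i}_v J)=\exp\Big(-\sum_{\tau\supset v}\beta^\tau_v J\Big)=\exp\big((2\pi-\textstyle\sum_{\tau\supset v}\beta^\tau_v)J\big)=\exp(\delta_v(l;\Lambda)J).
\end{equation}
The third equality uses $\exp(2\pi J)=1$ in $SO(2)$, with $J$ normalised as the rotation generator, so that $\exp(\theta J)$ is rotation by $\theta$. The last equality is \eqref{eq:deficit-angle-vertex}.

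I must also check that orientations and inverses in $\cR_v$ match the convention of Lemma \ref{lem:edge-adapted-frames}. The hypothesis that the triangles are consistently oriented and non-folded guarantees that all $\beta^{\tau_i}_v$ enter with the same sign. Any $h_i$ entering $\cR_v$ as an inverse is already accounted for by the $\pm1$ in \eqref{eq:unbarred-ribbon-constraints}, which is exactly how the lemma's expression was derived. This sign bookkeeping is the main place an error could creep in.

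The consequence then follows at once: $\exp(\theta J)=1$ in $SO(2)$ if and only if $\theta\in2\pi\mathbb{Z}$, so $\cR_v\approx1$ exactly when $\delta_v(l;\Lambda)=0$ (mod $2\pi$). Finally, $\cR_v$ is gauge covariant (conjugated at the cut) and $H_1$ is abelian, so the identity holds independently of the chosen frames. The result therefore holds on the full constraint surface, not only in the gauge-fixed slice.
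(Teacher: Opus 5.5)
Your proof is correct and follows the paper's route: put each incident $\tau_i$ in its edge-adapted frame, read off $h_i=\exp(-\beta^{\tau_i}_v J)$ from Lemma~\ref{lem:edge-adapted-frames}, multiply in the abelian group $SO(2)$, and use $\exp(2\pi J)=1$. One correction to your remark on cuts: the claim that the frame transported to a non-cut corner is aligned with an incident edge is false in general. Already for $\Lambda=0$, where $AN$ acts by translations, neither edge at the third corner of a generic triangle lies along the axis. This does not hurt your proof, because your closing invariance argument is the right fix and covers a point the paper's own proof passes over. Concretely, use each $\tau_i$'s residual $H_1$ parameter to align its frame at $(\tau_i,v)$ with $e_i$, then invoke invariance of $\cR_v$ under the abelian residual gauge. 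Equivalently, each corner frame enters the two adjacent short sides with opposite signs, so only the frame-independent angle $\beta^{\tau_i}_v$ survives in the product.
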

\begin{proof}
    By choosing edge-adapted frames for every $\tau_{i}$ incident on $v$, Lemma \ref{lem:edge-adapted-frames} says
    \begin{equation}
        \cR_{v} \overset{\cG_{\tau_{i}}\approx 1}{=} \prod_{\tau\ni v}\mathrm{exp}(-\beta^{\tau}_{v} J) = \mathrm{exp}(-\sum_{\tau\ni v}\beta^{\tau}_{v} J).
    \end{equation}
    Since $\mathrm{exp}(2\pi J) = 1$, we have,
    \begin{equation}
      \cR_{v} \overset{\cG_{\tau_{i}}\approx 1}{=}  \mathrm{exp}\Big((2\pi-\sum_{\tau\ni v}\beta^{\tau}_{v}) J\Big) = \mathrm{exp}(\delta_{v}(l;\Lambda) J).
    \end{equation}
    It is then immediate that $\cR_{v}\approx 1 \iff \delta_{v}(l;\Lambda) = 0 \:\text{(mod $2\pi$)} $, as claimed.
\end{proof}
These results give the following distributional equivalence between the constraint surfaces of split BF and homogeneous Regge,
\begin{equation}
     \delta_{H_{1}}(\cR_{v})\prod_{\tau\ni v}\delta_{AN_{1,\kappa}}(\cG_{\tau}) = \delta_{2\pi}(\delta_{v}(l;\Lambda))\prod_{\tau\ni v}\delta_{AN_{1,\kappa}}(\cG_{\tau}).
\end{equation}
To write down the edge frame fixed version of \eqref{eq:split-BF-path-integral-boundary-completion} we temporarily restore the ribbon face constraint therein,
\begin{equation}
    dg_{e} \leftrightarrow dh^{s(e)}_{e}d_{R}\ell^{\tau_{L}}_{e}d_{L}\ell^{\tau_{R}}_{e}dh^{t(e)}_{e}\delta_{G_{1,\kappa}}(\mathcal{W}_{e})
\end{equation}
For fixed long sides, the ribbon constraint requires
\begin{equation}
    h^{s(e)}_{e}\rhd \ell^{\tau_{L}}_{e} = \ell^{\tau_{R}}_{e},\quad h^{t(e)}_{e} = (\ell^{\tau_{R}}_{e})^{-1}h^{s(e)}_{e}\ell^{\tau_{L}}_{e}\in H_{1}
\end{equation}
So, writing $l^{\tau_{L}}_{e}:=l(\ell^{\tau_{L}}_{e})$ and $l^{\tau_{R}}_{e}:=l(\ell^{\tau_{R}}_{e})$, we have,
\begin{equation}\label{eq:short-ribbon-length-matching}
    \int dh^{s(e)}_{e}dh^{t(e)}_{e}\delta_{G_{1,\kappa}}(\mathcal{W}_{e}) = \frac{\delta_{\mathbb{R}}(l^{\tau_{L}}_{e}-l^{\tau_{R}}_{e})}{S_{\Lambda}(\ell^{\tau_{L}}_{e})}
\end{equation}
The short side holonomy integrations are removed  from \eqref{eq:split-BF-path-integral-boundary-completion} in this way, giving the edge frame fixed expression 
\begin{align}{\label{eq:split-BF-path-integral-edge-adapted}}
    Z_{\mathrm{BF},\mathrm{split},\mathrm{red}}^{\Delta,D}= \int \prod_{\tau\in T(\Delta)}&\Bigg[\prod_{i=1}^{3}d_{R}L_{i}\delta_{AN_{1,\kappa}}(L_{1}L_{2}L_{3})\delta_{2\pi}(\varphi_{\tau}(L_{1}))\Bigg]\nonumber\\
    &\times\prod_{\substack{e\in E(\Delta)\\ e\not\subset\partial D}}\frac{\delta_{\mathbb{R}}(l^{\tau_{L}}_{e}-l^{\tau_{R}}_{e})}{S_{\Lambda}(\ell^{\tau_{L}}_{e})}\prod_{v\in V^{0}(\Delta)}\delta_{2\pi}(\delta_{v}(l;\Lambda)).
\end{align}

\subsection{Identification of the measures and recovery of Regge}\label{subsubsec:BF-to-homogeneous-space}

The previous sections established equivalence of the split BF and Regge constraint surfaces. It remains to evaluate \eqref{eq:split-BF-path-integral-edge-adapted} in Regge length coordinates to recover \eqref{eq:disk-homo-regge-path-integral} and subsequently use this correspondence to prove Pachner move invariance of the Regge path integral.

\subsubsection{Recovery of Regge from split BF}

We start from \eqref{eq:split-BF-path-integral-edge-adapted}. Our first claim addresses the push forward of $d_{R}L_{i}$ to Regge coordinates.

\begin{lemma}[Split BF to edge length measure]\label{lem:split-bf-to-edge-measure}
    For $\tau=(v_1v_2v_3)$ in the selected nondegenerate orientation branch,
    \begin{equation}\label{eq:edge-adapted-triangle-length-measure}
        \prod_{i=1}^3d_RL_i\,
        \delta_{AN_{1,\kappa}}(L_1L_2L_3)\delta_{2\pi}(\varphi_\tau)\longrightarrow
        \frac{S_\Lambda(l_{12})dl_{12}\,S_\Lambda(l_{13})dl_{13}\,
              S_\Lambda(l_{23})dl_{23}}{V_{\tau,\Lambda}}.
    \end{equation}
    Applying the above for a shared edge along with \eqref{eq:short-ribbon-length-matching} then gives,
\begin{equation}\label{eq:edge-adapted-shared-edge-cancellation}
    S_{\Lambda}(l^{\tau_{L}}_{e})dl^{\tau_{L}}_{e}S_{\Lambda}(l^{\tau_{R}}_{e})dl^{\tau_{R}}_{e}\frac{\delta_{\mathbb{R}}(l^{\tau_{L}}_{e}-l^{\tau_{R}}_{e})}{S_{\Lambda}(l^{\tau_{L}}_{e})} = dl_e^{\tau_L}\,
       S_\Lambda(l_e^{\tau_R})dl_e^{\tau_R}\,
       \delta_{\mathbb R}(l_e^{\tau_L}-l_e^{\tau_R})
       \longrightarrow S_\Lambda(l_e)dl_e.
\end{equation}
The arrow denotes integration of the constraints and the induced change of coordinates, with all remaining factors evaluated on their support.
\end{lemma}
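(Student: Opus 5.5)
The plan is to integrate the delta function first and reduce the triangle measure to the invariant area measure on pairs of points of $X$. Then the edge-adapted gauge condition of Lemma \ref{lem:edge-adapted-frames} removes the rotation about the cut. Finally, the curved cosine law \eqref{eq:curved-cosine-law} converts the remaining angle into the third edge length.

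\textbf{Step 1: solve the delta function.} I would integrate $\delta_{AN_{1,\kappa}}(L_1L_2L_3)$ against $d_RL_3$, which sets $L_3=(L_1L_2)^{-1}$. The point to check is that the normalisation of the induced delta (cf. \eqref{eq:G-delta-factorization}) is compatible with the right Haar measure. If it is not, a modular factor $\Delta_R$ appears, and I would track it using Appendix \ref{app:measure-considerations}.

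\textbf{Step 2: pass to vertex variables.} Following the converse direction of Proposition \ref{prop:triangle-closure}, I change variables to $q_1=L_1$ and $q_2=L_1L_2$, so that $x_1=q_1x_0$ and $x_2=q_2x_0$ are the two vertices other than the cut. I would then show that $d_RL_1\,d_RL_2$, together with the modular factor from Step 1, equals $d_Lq_1\,d_Lq_2$. By \eqref{eq:AN-polar-measure} this is the product of invariant area measures $d\mu_X(x_1)\,d\mu_X(x_2)$. This bookkeeping of modular functions is the step I expect to be the main obstacle. It is the only place where the non-unimodularity of $AN_{1,\kappa}$ for $\Lambda\neq0$ enters, and the cut convention matters here. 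As a fallback, I would use $d_RL=d_L(L^{-1})$ together with the inversion invariance $l(L^{-1})=l(L)$ from Lemma \ref{lem:dressing-and-inversion-invariance-of-distance-function}. This lets me write each factor directly as $S_\Lambda(l)\,dl\,d\varphi$ for the reversed geodesic, after which I would check that the Jacobians cancel.

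\textbf{Step 3: gauge-fix and change to lengths.} I write $x_1$ and $x_2$ in geodesic polar coordinates around $x_0$, giving $S_\Lambda(l_{12})\,dl_{12}\,d\varphi_\tau\;S_\Lambda(l_{13})\,dl_{13}\,d\varphi_2$. The factor $\delta_{2\pi}(\varphi_\tau)$ removes $d\varphi_\tau$, and the remaining angle $\varphi_2$ is then $\pm\beta_1$, the interior angle at the cut. On the selected orientation branch the sign is fixed and $\beta_1\in(0,\pi)$, so the map to edge lengths is one-to-one. Differentiating \eqref{eq:curved-cosine-law} at fixed $l_{12}$ and $l_{13}$, using \eqref{eq:CS-identity}, gives
\begin{equation*}
\Lambda S_\Lambda(l_{23})\,dl_{23}=\Lambda S_\Lambda(l_{12})S_\Lambda(l_{13})\sin\beta_1\,d\beta_1 ,
\end{equation*}
hence $d\beta_1=S_\Lambda(l_{23})\,dl_{23}/\big(S_\Lambda(l_{12})S_\Lambda(l_{13})\sin\beta_1\big)$.

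\textbf{Step 4: identify $V_{\tau,\Lambda}$.} A direct expansion of $\det G_{\tau,\Lambda}$ from \eqref{eq:vertex-gram-matrix}, using the cosine law, gives $\sqrt{\det G_{\tau,\Lambda}}=|\Lambda|\,S_\Lambda(l_{12})S_\Lambda(l_{13})\sin\beta_1$. By \eqref{eq:triangle-volume-gram-matrix} this means $V_{\tau,\Lambda}=S_\Lambda(l_{12})S_\Lambda(l_{13})\sin\beta_1$. In the flat case the same identity follows from \eqref{eq:flat-triangle-area}, with $S_0(l)=l$ and \eqref{eq:flat-cosine-law}. Combining with Step 3 gives exactly \eqref{eq:edge-adapted-triangle-length-measure}.

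\textbf{Shared edges.} The claim \eqref{eq:edge-adapted-shared-edge-cancellation} is immediate. Each triangle adjacent to $e$ contributes one factor $S_\Lambda(l^\tau_e)\,dl^\tau_e$, and \eqref{eq:short-ribbon-length-matching} divides by one copy of $S_\Lambda$ on the support of the length delta. Integrating $\delta_{\mathbb R}(l^{\tau_L}_e-l^{\tau_R}_e)$ over one of the two lengths then leaves a single $S_\Lambda(l_e)\,dl_e$.
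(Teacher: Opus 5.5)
Your proposal is correct and follows the paper's proof essentially step for step. Both eliminate the delta, pass to developed vertices to get a product of left Haar (area) measures, go to polar coordinates, and let $\delta_{2\pi}(\varphi_\tau)$ remove $d\varphi_\tau$. Both then convert $d\beta_{v_1}^\tau$ to $dl_{23}$ via the cosine law; for $\Lambda=0$ you differentiate the flat law directly, since your displayed identity degenerates to $0=0$. Your Gram-determinant check that $V_{\tau,\Lambda}=S_\Lambda(l_{12})S_\Lambda(l_{13})\sin\beta_1$ fills in a step the paper only asserts.

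The one difference is which $L_i$ you solve the delta for, and it resolves your ``main obstacle'' at once.
\begin{itemize}
\item \emph{Your route.} Solving for $L_3$ forces a modular factor $\Delta_R(L_1L_2)^{-1}=\Delta_R(q_2)^{-1}$. It comes from $d_R((L_1L_2)L_3)=\Delta_R(L_1L_2)\,d_RL_3$, not from the delta's normalisation, which is the same for $d_L$ and $d_R$ at the identity. This factor combines with $d_RL_2=d_R(q_1^{-1}q_2)=\Delta_R(q_1)^{-1}d_Rq_2$ to give exactly $d_Lq_1\,d_Lq_2$, so your fallback is not needed.
\item \emph{The paper's route.} It solves for $L_1$, where right invariance of $d_RL_1$ produces no modular factor. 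In its labelling $q_1=1$, $q_2=L_1$, $q_3=L_1L_2$, one gets directly $d_RL_2\,d_RL_3=d_R(q_2^{-1}q_3)\,d_R(q_3^{-1})=d_Lq_2\,d_Lq_3$.
\end{itemize}
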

\begin{proof}
    Use the developed vertices of Proposition \ref{prop:triangle-closure}, with the root at $x_0$
    \begin{equation}
        q_1=1,\quad q_2=L_1,\quad q_3=L_1L_2,\quad
        l_{ij}=d_X(x_0,q_{i}^{-1}q_jx_0).
    \end{equation}
    In the $q_{i}$ variables,
    \begin{equation}
        \prod_{i=1}^{3}d_{R}L_{i}\delta_{AN_{1,\kappa}}(L_{1}L_{2}L_{3})\longrightarrow d_{R}(q_{2}^{-1}q_{3})d_{R}(q_{3}^{-1}) = d_{L}q_{2}d_{L}q_{3}
    \end{equation}
    Now, the polar angle of $q_2x_0$ is $\varphi_\tau$, while that of $q_3x_0$ differs by the internal angle $\beta_{v_1}^\tau$, with sign fixed by the chosen orientation. Therefore
    \begin{equation}
        d_Lq_2d_Lq_3
        =S_\Lambda(l_{12})dl_{12}\,S_\Lambda(l_{13})dl_{13}\,
          d\varphi_\tau\,d\beta_{v_1}^\tau.
    \end{equation}
    The edge frame delta removes $d\varphi_\tau$. Differentiating the cosine law \eqref{eq:curved-cosine-law}, or \eqref{eq:flat-cosine-law} for $\Lambda=0$, at fixed $l_{12},l_{13}$ gives
    \begin{equation}\label{eq:edge-adapted-cosine-Jacobian}
        \frac{\partial\beta_{v_1}^\tau}{\partial l_{23}}
        =\frac{S_\Lambda(l_{23})}
          {S_\Lambda(l_{12})S_\Lambda(l_{13})\sin\beta_{v_1}^\tau}
        =\frac{S_\Lambda(l_{23})}{V_{\tau,\Lambda}},
    \end{equation}
    where $V_{\tau,\Lambda}$ is the Jacobian factor \eqref{eq:triangle-volume-gram-matrix}. This yields \eqref{eq:edge-adapted-triangle-length-measure} as claimed. The opposite triangle orientation would supply a separate branch and is not included in this measure.

    For a shared edge $e$ the above supplies one length measure from each incident triangle, whereas \eqref{eq:short-ribbon-length-matching} supplies a matching delta and an inverse radial factor, therefore,
    \begin{equation}
        S_{\Lambda}(l^{\tau_{L}}_{e})dl^{\tau_{L}}_{e}S_{\Lambda}(l^{\tau_{R}}_{e})dl^{\tau_{R}}_{e}\frac{\delta_{\mathbb{R}}(l^{\tau_{L}}_{e}-l^{\tau_{R}}_{e})}{S_{\Lambda}(l^{\tau_{L}}_{e})} = dl_e^{\tau_L}\,
       S_\Lambda(l_e^{\tau_R})dl_e^{\tau_R}\,
       \delta_{\mathbb R}(l_e^{\tau_L}-l_e^{\tau_R})
       \longrightarrow S_\Lambda(l_e)dl_e.
    \end{equation}
    as claimed. As expected the two incidence lengths become the same $l_{e}$.
\end{proof}
We now arrive at the final Regge recovery theorem.

\begin{theorem}[Regge from split BF]\label{thm:regge-recovery}
Let $\sigma = 1, \kappa = 0,-1$, and let $\Delta$ triangulate the disk. In the consistently oriented non-degenerate triangle sector, the edge adapted split BF path integral \eqref{eq:split-BF-path-integral-edge-adapted} is,
\begin{align}\label{eq:split-BF-periodic-length-integral}
Z_{\mathrm{BF},\mathrm{split},\mathrm{red}}^{\Delta,D}[\mathcal O]
        =\int_{\mathcal D_\Delta}\Bigg[\prod_{e\in E(\Delta)}S_{\Lambda}(l_{e})dl_{e}\Bigg]\Bigg[\prod_{\tau\in T(\Delta)}\frac{1}{V_{\tau,\Lambda}}\Bigg]\Bigg[\prod_{v\in V^{0}(\Delta)}\delta_{2\pi}(\delta_{v}(l;\Lambda))\Bigg]\mathcal{O}(l)
\end{align}
   As we discussed in \S\ref{subsec:regge-path-int-discretization}, we note that the delta function on $\delta_v$ allows for positive or negative deficit angles, multiple of $2\pi$, which corresponds to conical defects or branched geometries. To recover the projection on the regular case $\delta_v=0$ would amount to restrict appropriately the geometric data, for example such that  $\abs{\delta_{v}(l;\Lambda)}<\pi$ at every interior vertex,
\begin{equation}\label{eq:split-BF-Regge-real-equality}
    Z_{\mathrm{BF},\mathrm{split},\mathrm{red}}^{\Delta,D}[\mathcal O]\Big\rvert_{\abs{\delta_{v}}<\pi} = Z_{\mathrm{LDG}}^{\Delta, D}[\mathcal{O}]
\end{equation}
    where $Z_{\mathrm{LDG}}^{\Delta, D}[\mathcal{O}]$ is given by \eqref{eq:disk-homo-regge-path-integral}.
\end{theorem}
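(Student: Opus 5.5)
The plan is to start from the edge-adapted expression \eqref{eq:split-BF-path-integral-edge-adapted} and change variables triangle by triangle, using Lemma \ref{lem:split-bf-to-edge-measure} as the local building block. For each $\tau\in T(\Delta)$ we apply \eqref{eq:edge-adapted-triangle-length-measure}. This replaces the three $d_RL_i$, the closure delta $\delta_{AN_{1,\kappa}}(\cG_\tau)$ and the frame delta $\delta_{2\pi}(\varphi_\tau)$ by $\prod_{e\subset\partial\tau}S_\Lambda(l^\tau_e)dl^\tau_e/V_{\tau,\Lambda}$, written in terms of \emph{incidence lengths} $l^\tau_e$, one for each pair (triangle, edge). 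Proposition \ref{prop:triangle-closure} guarantees that on the support of the closure delta these incidence lengths are the side lengths of a genuine geodesic triangle in $X$. The restriction to the consistently oriented, nondegenerate branch therefore translates exactly into the triangle inequalities together with $V_{\tau,\Lambda}>0$, which is the admissibility domain $\cD_\Delta$.

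The next step is to glue the triangles. Each interior edge carries two incidence lengths $l^{\tau_L}_e,l^{\tau_R}_e$ and one matching factor $\delta_{\mathbb R}(l^{\tau_L}_e-l^{\tau_R}_e)/S_\Lambda$ coming from \eqref{eq:short-ribbon-length-matching}. By \eqref{eq:edge-adapted-shared-edge-cancellation}, integrating out one of the two incidence lengths leaves exactly one factor $S_\Lambda(l_e)dl_e$. A boundary edge $e\subset\partial D$ belongs to a single triangle of $\Delta$, since $\tau_0$ carries no weight, so it already contributes one $S_\Lambda(l_e)dl_e$ and no matching delta. Collecting all edges gives $\prod_{e\in E(\Delta)}S_\Lambda(l_e)dl_e\prod_\tau V_{\tau,\Lambda}^{-1}$. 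By Proposition \ref{prop:Regge-deficit-recovery}, the vertex factors $\delta_{2\pi}(\delta_v)$ for $v\in V^0(\Delta)$ are functions of the now-common lengths $l_e$. The observable $\mathcal O(l)$ is simply evaluated on the support of these deltas. This establishes \eqref{eq:split-BF-periodic-length-integral}.

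The main obstacle is keeping the bookkeeping of gauge data consistent in the gluing step. The frame deltas $\delta_{2\pi}(\varphi_\tau)$ must consume exactly the residual $H_1$ angle of each triangle. The short-side integrals must consume the relative rotations. Nothing may be double counted: in particular, no extra $\delta_{H_1}$ may be left at $v_*$ or at boundary vertices, and these are excluded from \eqref{eq:split-BF-path-integral-boundary-completion} by the Bianchi argument. The first thing I would check is that, in edge-adapted frames, the $dh^{s(e)}_e$ integral in \eqref{eq:short-ribbon-length-matching} is not constrained twice, once by the ribbon relation and once by the vertex delta. Lemma \ref{lem:edge-adapted-frames} settles this, because it shows the $h_i$ are \emph{determined} by the angles $\beta^{\tau_i}_v$ and so carry no further integration.

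Finally, for \eqref{eq:split-BF-Regge-real-equality}, we restrict the integration to $\abs{\delta_v}<\pi$ at every interior vertex. On $(-\pi,\pi)$ the periodic delta $\delta_{2\pi}(x)=\sum_{k\in\mathbb Z}\delta_{\mathbb R}(x-2\pi k)$ reduces to its $k=0$ term $\delta_{\mathbb R}(x)$. The integrand then coincides term by term with \eqref{eq:disk-homo-regge-path-integral}. This is the selection of the zero-winding sector described at the start of \S\ref{sec:recovering-homo-regge}.
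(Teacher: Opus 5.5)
Your proposal is correct and follows the paper's proof essentially verbatim. You apply Lemma~\ref{lem:split-bf-to-edge-measure} to each triangle bracket and \eqref{eq:edge-adapted-shared-edge-cancellation} to each interior edge, so that one $S_\Lambda(l_e)dl_e$ per edge and one $V_{\tau,\Lambda}^{-1}$ per triangle remain. You then keep only the $w=0$ term of $\delta_{2\pi}=\sum_{w}\delta_{\mathbb R}(\cdot-2\pi w)$ on $\abs{\delta_v}<\pi$. The paper states one step explicitly that you leave implicit: the domain can be extended back to all of $\mathcal D_\Delta$ because $\delta_{\mathbb R}(\delta_v)$ is supported on $\delta_v=0$.

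One small correction to your bookkeeping aside. The Bianchi identity removes only $\cR_{v_*}$ in \eqref{eq:split-BF-path-integral-boundary-completion}. The other boundary-vertex constraints are absent only in the reduced expression \eqref{eq:split-BF-path-integral-edge-adapted}. Since that reduced expression is the theorem's starting point, your argument is unaffected.
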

\begin{proof}
    For the first claim apply \eqref{eq:edge-adapted-triangle-length-measure} to every triangle bracket in \eqref{eq:split-BF-path-integral-edge-adapted}. Then apply \eqref{eq:edge-adapted-shared-edge-cancellation} to each interior edge. Exactly one length measure per edge and one inverse Jacobian per triangle remain, proving \eqref{eq:split-BF-periodic-length-integral}.

    For the second claim we use the generic relation
    \begin{equation}\label{eq:2pi-delta-R-delta}
        \delta_{2\pi}(\theta) = \sum_{w\in \mathbb{Z}}\delta_{\mathbb{R}}(\theta-2\pi w)
    \end{equation}
    On $-\pi<\delta_v<\pi$, the only contribution from the right hand side is when $w=0$, so distributionally $\delta_{2\pi}(\delta_v)=\delta_{\mathbb R}(\delta_v)$ in this restricted domain. Combined with the first claim, this gives the LDG integrand of \eqref{eq:disk-homo-regge-path-integral}. Its domain may be extended back to $\mathcal D_\Delta$, since the real delta distributions already restrict it to $\delta_v=0$ at every interior vertex. This proves \eqref{eq:split-BF-Regge-real-equality}.
\end{proof}

\subsubsection{Regge Pachner moves from split BF}

The length measure of Theorem \ref{thm:regge-recovery} also gives local Pachner invariance of the Regge theory. For the calculations to follow, we represent each developed vertex $i$ by $q_i x_0$, with $q_i\in AN_{1,\kappa}$ as in Proposition \ref{prop:triangle-closure}. With vertices $i,j$ fixed, Lemma \ref{lem:split-bf-to-edge-measure} gives
\begin{equation}\label{eq:regge-pachner-point-measure}
 d_Lq_k=
 \frac{S_\Lambda(l_{ik})\,dl_{ik}\,S_\Lambda(l_{jk})\,dl_{jk}}
 {V_{(ijk),\Lambda}}.
\end{equation}
This formula holds on the chosen side of the geodesic $(ij)$, whereas including both sides would count two orientation branches.

\paragraph{The $2\leftrightarrow2$ Pachner move.}
Let $(1234)$ be a convex geodesic quadrilateral. We consider inserting an edge along the diagonal $(13)$, or along the diagonal $(24)$, such that the resulting configurations are two geodesic triangles sharing the inserted edge (see Figure \ref{fig:regge-2-to-2}).
\begin{figure}[ht]
    \centering
   \begin{tikzpicture}[font=\small]
\path[fill=jtSage] (0.352000,0.782000) arc[start angle=113.08378621,end angle=52.01571565,radius=3.35476507cm] arc[start angle=-148.89404816,end angle=-222.95511581,radius=2.30040963cm] arc[start angle=-56.27223260,end angle=-59.20706098,radius=84.64450499cm] -- cycle;
\path[fill=jtSage!30!white] (0.352000,0.782000) arc[start angle=-59.20706098,end angle=-56.27223260,radius=84.64450499cm] arc[start angle=-65.79658485,end angle=-130.06893178,radius=3.01038403cm] arc[start angle=28.12680859,end angle=-48.45097497,radius=2.25936261cm] -- cycle;
\draw[jt edge] (0.352000,0.782000) arc[start angle=113.08378621,end angle=52.01571565,radius=3.35476507cm];
\draw[jt edge] (3.732000,0.340000) arc[start angle=-148.89404816,end angle=-222.95511581,radius=2.30040963cm];
\draw[jt edge] (4.018000,3.096000) arc[start angle=-65.79658485,end angle=-130.06893178,radius=3.01038403cm];
\draw[jt edge] (0.846000,3.538000) arc[start angle=28.12680859,end angle=-48.45097497,radius=2.25936261cm];
\draw[jt edge] (0.352000,0.782000) arc[start angle=-59.20706098,end angle=-56.27223260,radius=84.64450499cm];
\node[jt label] at (1.975635,2.070691) {$l_{13}$};
\node[jt label] at (0.84032,2.175033) {$l_{41}$};
\node[jt label] at (3.754032,1.848222) {$l_{23}$};
\node[jt label] at (2.014032,3.173670) {$l_{34}$};
\node[jt label] at (1.954032,0.78118) {$l_{12}$};
\fill[black] (0.352000,0.782000) circle[radius=1.65pt];
\node[jt label] at (0.102127,0.634815) {$1$};
\fill[black] (3.732000,0.340000) circle[radius=1.65pt];
\node[jt label] at (3.931737,0.129750) {$2$};
\fill[black] (4.018000,3.096000) circle[radius=1.65pt];
\node[jt label] at (4.258202,3.258490) {$3$};
\fill[black] (0.846000,3.538000) circle[radius=1.65pt];
\node[jt label] at (0.657271,3.758184) {$4$};
\path[fill=jtSage] (6.252000,0.782000) arc[start angle=113.08378621,end angle=52.01571565,radius=3.35476507cm] arc[start angle=-136.74289178,end angle=-139.12845511,radius=103.46836239cm] arc[start angle=28.12680859,end angle=-48.45097497,radius=2.25936261cm] -- cycle;
\path[fill=jtSage!30!white] (9.632000,0.340000) arc[start angle=-148.89404816,end angle=-222.95511581,radius=2.30040963cm] arc[start angle=-65.79658485,end angle=-130.06893178,radius=3.01038403cm] arc[start angle=-139.12845511,end angle=-136.74289178,radius=103.46836239cm] -- cycle;
\draw[jt edge] (6.252000,0.782000) arc[start angle=113.08378621,end angle=52.01571565,radius=3.35476507cm];
\draw[jt edge] (9.632000,0.340000) arc[start angle=-148.89404816,end angle=-222.95511581,radius=2.30040963cm];
\draw[jt edge] (9.918000,3.096000) arc[start angle=-65.79658485,end angle=-130.06893178,radius=3.01038403cm];
\draw[jt edge] (6.746000,3.538000) arc[start angle=28.12680859,end angle=-48.45097497,radius=2.25936261cm];
\draw[jt edge] (9.632000,0.340000) arc[start angle=-136.74289178,end angle=-139.12845511,radius=103.46836239cm];
\node[jt label] at (8.384161,2.205364) {$l_{24}$};
\node[jt label] at (6.64032,2.175033) {$l_{41}$};
\node[jt label] at (9.754032,1.848222) {$l_{23}$};
\node[jt label] at (8.014032,3.173670) {$l_{34}$};
\node[jt label] at (7.854032,0.78118) {$l_{12}$};
\fill[black] (6.252000,0.782000) circle[radius=1.65pt];
\node[jt label] at (6.002127,0.634815) {$1$};
\fill[black] (9.632000,0.340000) circle[radius=1.65pt];
\node[jt label] at (9.831737,0.129750) {$2$};
\fill[black] (9.918000,3.096000) circle[radius=1.65pt];
\node[jt label] at (10.158202,3.258490) {$3$};
\fill[black] (6.746000,3.538000) circle[radius=1.65pt];
\node[jt label] at (6.557271,3.758184) {$4$};
\draw[jt move,{Stealth[length=2.2mm,width=1.5mm]}-{Stealth[length=2.2mm,width=1.5mm]}] (4.780000,1.900000) -- (5.750000,1.900000);
\end{tikzpicture}
    \caption{Depicting the hyperbolic case here, the $2\leftrightarrow2$ Pachner move in the Regge discretization shows that the contribution to the path integral from a quadrilateral $(1234)$ with edge inserted along the $(13)$ diagonal is equivalent to the contribution instead with an edge inserted along the $(24)$ diagonal on a common convex sector with boundary data $l_{12}, l_{23}, l_{34}, l_{41}$.}
    \label{fig:regge-2-to-2}
\end{figure}
On the split BF side both cases have measure
\begin{equation}
 \frac1{2\pi}\,d_Lq_2\,d_Lq_3\,d_Lq_4
\end{equation}
and its push forward to length variables is obtained from \eqref{eq:regge-pachner-point-measure} in either case of edge insertion. The resulting Regge amplitudes are thus respectively
\begin{align}
 &\left[\prod_{e\subset\partial(1234)}S_\Lambda(l_e)\,dl_e\right]
 \frac{S_\Lambda(l_{13})\,dl_{13}}
 {V_{(123),\Lambda}V_{(134),\Lambda}},\\
 &\left[\prod_{e\subset\partial(1234)}S_\Lambda(l_e)\,dl_e\right]
 \frac{S_\Lambda(l_{24})\,dl_{24}}
 {V_{(124),\Lambda}V_{(234),\Lambda}}.
\end{align}
They are two coordinate descriptions of the same split BF measure. At fixed boundary lengths, canceling the common boundary factors gives
\begin{equation}\label{eq:regge-2-to-2-measure}
 \frac{S_\Lambda(l_{13})\,|dl_{13}|}
 {V_{(123),\Lambda}V_{(134),\Lambda}}
 =
 \frac{S_\Lambda(l_{24})\,|dl_{24}|}
 {V_{(124),\Lambda}V_{(234),\Lambda}}.
\end{equation}
Here the diagonals are related by the geometry of the same quadrilateral. Integrating over the appropriate length domain proves the $2\leftrightarrow2$ Pachner move invariance. The two angles meeting at each boundary vertex sum to the same quadrilateral angle in both cases of edge insertion, so the surrounding deficit angle constraints are also preserved when this patch is inserted into a larger triangulation.

\paragraph{The $3\leftrightarrow1$ Pachner move.}
Let $\tau_0=(123)$ and insert a vertex $0$ in its interior. As in Figure \ref{fig:regge-3-to-1}, denote the three resulting triangles by
\begin{equation}
 \tau_i:=(0123)\setminus\{i\},\qquad i=1,2,3.
\end{equation}
\begin{figure}[ht]
\centering
\begin{tikzpicture}[font=\small]
\path[fill=jtSage] (2.353000,1.541000) arc[start angle=59.06602341,end angle=55.05659551,radius=32.04352037cm] arc[start angle=-128.64338713,end angle=-173.46876607,radius=5.24199660cm] arc[start angle=-179.54245692,end angle=-177.79310338,radius=74.66613073cm] -- cycle;
\node[jt label] at (2.774252,1.828829) {$\tau_1$};
\node[jt label] at (3.34252,1.835446){$l_{23}$};
\path[fill=jtSage!30!white] (0.365500,0.322000) arc[start angle=124.67683716,end angle=118.36751733,radius=21.18380390cm] arc[start angle=-177.79310338,end angle=-179.54245692,radius=74.66613073cm] arc[start angle=-6.53123393,end angle=-51.35661287,radius=5.24199660cm] -- cycle;
\node[jt label] at (1.981596,1.835446) {$\tau_2$};
\node[jt label] at (1.3,1.835446) {$l_{13}$};
\path[fill=jtSage!60!white] (0.365500,0.322000) arc[start angle=121.02045133,end angle=58.97954867,radius=3.75380327cm] arc[start angle=55.05659551,end angle=59.06602341,radius=32.04352037cm] arc[start angle=118.36751733,end angle=124.67683716,radius=21.18380390cm] -- cycle;
\node[jt label] at (2.322902,1.003598) {$\tau_3$};
\node[jt label] at (2.322902,0.535446){$l_{12}$};
\draw[jt edge] (0.365500,0.322000) arc[start angle=121.02045133,end angle=58.97954867,radius=3.75380327cm];
\draw[jt edge] (4.234500,0.322000) arc[start angle=-128.64338713,end angle=-173.46876607,radius=5.24199660cm];
\draw[jt edge] (2.300000,3.820000) arc[start angle=-6.53123393,end angle=-51.35661287,radius=5.24199660cm];
\draw[jt edge] (2.353000,1.541000) arc[start angle=118.36751733,end angle=124.67683716,radius=21.18380390cm];
\draw[jt edge] (2.353000,1.541000) arc[start angle=59.06602341,end angle=55.05659551,radius=32.04352037cm];
\draw[jt edge] (2.353000,1.541000) arc[start angle=-177.79310338,end angle=-179.54245692,radius=74.66613073cm];
\fill[black] (2.353000,1.541000) circle[radius=1.65pt];
\node[jt label] at (2.533000,1.611000) {$0$};
\fill[black] (0.365500,0.322000) circle[radius=1.65pt];
\node[jt label] at (0.225500,0.102000) {$1$};
\fill[black] (4.234500,0.322000) circle[radius=1.65pt];
\node[jt label] at (4.374500,0.102000) {$2$};
\fill[black] (2.300000,3.820000) circle[radius=1.65pt];
\node[jt label] at (2.300000,4.050000) {$3$};
\path[fill=jtSage] (6.265500,0.322000) arc[start angle=121.02045133,end angle=58.97954867,radius=3.75380327cm] arc[start angle=-128.64338713,end angle=-173.46876607,radius=5.24199660cm] arc[start angle=-6.53123393,end angle=-51.35661287,radius=5.24199660cm] -- cycle;
\node[jt label] at (8.200000,1.567976) {$\tau_0$};
\draw[jt edge] (6.265500,0.322000) arc[start angle=121.02045133,end angle=58.97954867,radius=3.75380327cm];
\draw[jt edge] (10.134500,0.322000) arc[start angle=-128.64338713,end angle=-173.46876607,radius=5.24199660cm];
\draw[jt edge] (8.200000,3.820000) arc[start angle=-6.53123393,end angle=-51.35661287,radius=5.24199660cm];
\fill[black] (6.265500,0.322000) circle[radius=1.65pt];
\node[jt label] at (6.125500,0.102000) {$1$};
\fill[black] (10.134500,0.322000) circle[radius=1.65pt];
\node[jt label] at (10.274500,0.102000) {$2$};
\node[jt label] at (8.2,0.535446){$l_{12}$};
\node[jt label] at (9.200000,1.835446){$l_{23}$};
\fill[black] (8.200000,3.820000) circle[radius=1.65pt];
\node[jt label] at (8.200000,4.050000) {$3$};
\node[jt label] at (7.200000,1.835446){$l_{13}$};
\draw[
  jt move,
  {Stealth[length=2.2mm,width=1.5mm]}-{Stealth[length=2.2mm,width=1.5mm]}
] (4.800000,1.700000) -- (5.700000,1.700000);
\end{tikzpicture}
\caption{Depicting the hyperbolic case here, the $3\leftrightarrow1$ Pachner move for the Regge discretization shows that the amplitude for the configuration of three triangles incident on a shared vertex $v_{0}$ after fixing the inserted vertex is equal to the amplitude of a single triangle with the same boundary data.}
\label{fig:regge-3-to-1}
\end{figure}
Fix the boundary triangle in $X$ and set $q_1=1$ following Proposition \ref{prop:triangle-closure}. The inserted vertex is represented by $q_0x_0$, where $x_0$ remains the fixed base point of $X$. Then \eqref{eq:regge-pachner-point-measure} gives
\begin{equation}
 d_Lq_0=
 \frac{S_\Lambda(l_{01})\,dl_{01}\,S_\Lambda(l_{02})\,dl_{02}}
 {V_{\tau_3,\Lambda}}.
\end{equation}
For each allowed position, $l_{03}$ is fixed and $\delta_0(l;\Lambda)=0$. On this branch, differentiating the cosine laws at fixed boundary lengths, $l_{01}$, and $l_{02}$ gives
\begin{equation}\label{eq:regge-3-to-1-deficit-jacobian}
 \left|\frac{\partial\delta_0}{\partial l_{03}}\right|_{\delta_0=0}
 =\frac{S_\Lambda(l_{03})V_{\tau_0,\Lambda}}
 {V_{\tau_1,\Lambda}V_{\tau_2,\Lambda}}.
\end{equation}
Indeed, the derivative is
$S_\Lambda(l_{23})\cos\beta_3^{\tau_1}/V_{\tau_1,\Lambda}
+S_\Lambda(l_{13})\cos\beta_3^{\tau_2}/V_{\tau_2,\Lambda}$, thus
using $\beta_3^{\tau_0}=\beta_3^{\tau_1}+\beta_3^{\tau_2}$ and
$V_{\tau,\Lambda}=S_\Lambda(l_{3i})S_\Lambda(l_{3j})\sin\beta_3^\tau$
yields \eqref{eq:regge-3-to-1-deficit-jacobian}. The delta function change of variables therefore gives the measure identity
\begin{equation}\label{eq:regge-3-to-1-measure}
 \frac{d_Lq_0}{V_{\tau_0,\Lambda}}
 =
 \frac{\prod_{i=1}^3S_\Lambda(l_{0i})\,dl_{0i}}
 {V_{\tau_1,\Lambda}V_{\tau_2,\Lambda}V_{\tau_3,\Lambda}}
 \delta_{\mathbb R}\!\left(\delta_0(l;\Lambda)\right),
\end{equation}
where the right-hand side is pushed forward by integrating $l_{03}$. With three non-degenerate incident triangles, their angles at $0$ lie in $(0,\pi)$, so $-\pi<\delta_0<2\pi$. The periodic constraint of Theorem \ref{thm:regge-recovery} consequently has only its zero deficit contribution on this sector.

After removing the common boundary length factors, the three triangle amplitude is
\begin{equation}
 \mathcal A_{\tau_1\tau_2\tau_3}
 =\frac1{V_{\tau_0,\Lambda}}\int d_Lq_0,
\end{equation}
where the integration is over the admissible interior vertex positions. Choose $a,b$ so that $l_{01}=a$, $l_{02}=b$ select one such position on the chosen orientation branch. Its normalized gauge fixing insertion is
\begin{equation}
 \int d_Lq_0\,
 \frac{V_{\tau_3,\Lambda}}
 {S_\Lambda(l_{01})S_\Lambda(l_{02})}
 \delta_{\mathbb R}(l_{01}-a)\,
 \delta_{\mathbb R}(l_{02}-b)=1.
\end{equation}
Thus
\begin{equation}
 \mathcal A_{\tau_1\tau_2\tau_3}^{\mathrm{gf}}
 =\frac1{V_{\tau_0,\Lambda}}
 =\mathcal A_{\tau_0}.
\end{equation}
The surrounding deficit constraints are unchanged because the angles at each boundary vertex add to those of $\tau_0$. This proves the $3\leftrightarrow1$ Pachner move invariance.

\section*{Conclusion}\addcontentsline{toc}{section}{Conclusion}

We have constructed a homogeneous Regge discretization of two-dimensional linear dilaton gravity and derived its path integral geometric measure from a ribbon discretization of BF theory. The ribbon refinement simultaneously keeps the variables associated with  the triangulation and with its dual complex. After a Regge-adapted partial gauge fixing, the short ribbon holonomies carry the rotational curvature data while the long holonomies determine  the intrinsic geometry of the triangles.
A central ingredient in the construction is the Iwasawa decomposition of the BF gauge group, $G_{1,\kappa}\cong SO(2)\bowtie AN_{1,\kappa},$ which separates the rotational and translational sectors in a way adapted to the homogeneous Regge geometry.

In the curved case   the  $AN_{1,\kappa}$ 
factor is non-unimodular, so reduced face distributions depend on a choice of cilium. The underlying ribbon BF theory is nevertheless cut independent: the root dependence of the split description is a dependence on the partial gauge-fixing prescription rather than  a physical ambiguity.

These developments suggest several natural directions for further work. Baratin and Freidel obtained their Regge-type topological models from Feynman amplitudes \cite{baratin_hidden_2007,baratin_hidden_2007-1}, whereas our construction starts from a BF theory and derives the Regge formulation through its discretization and reduction. It would be interesting to close this circle in two dimensions by deriving the present models directly from Feynman diagrams, both for vanishing and non-vanishing cosmological constant.

Table \ref{table:signature-and-curvature-cases} also suggests natural extensions to our works. There is the aforementioned need to formulate reality conditions if one wants to use the split BF to Regge construction in the Euclidean $\Lambda>0$ case. On the other hand, Table \ref{table:signature-and-curvature-cases} also indicates that the split BF to Regge correspondence seems well adapted to all signs of $\Lambda$ in a Lorentzian signature. Therefore it is worthwhile to try and adapt the Lorentzian Regge constructions of \cite{tate_realizability_2012, ito_pinched_2026} to our setting, wherein the correspondence may address all cases. The converse may also be possible if one adopts the viewpoint suggested in \S\ref{subsec:regge-path-int-discretization}, that is, to take the Lorentzian split BF theory as primary and use it to \textit{derive} a proposed appropriate Lorentzian two-dimensional homogeneous Regge theory for general $\Lambda$.

In the quantum theory considered here, the dilaton variables have been fully integrated out. A natural extension would be to retain part of these data and study the corresponding enlarged state-sum description. We have not addressed the representation-theoretic formulation of the model, but it is natural to ask whether the corresponding state sum can be organized in terms of representations of some deformed symmetry structures, reducing in the flat limit to ordinary Poincar\'e representations, in close analogy with the higher-dimensional constructions of Baratin and Freidel \cite{baratin_hidden_2007, baratin_hidden_2007-1}.  In view of the matched pair structure of the underlying BF theory, one may expect $\kappa$-Poincar\'e/Euclidean-type structures to emerge (see \cite{arzano_deformations_2021} for a modern review of deformations of spacetime symmetries). We note in particular that $\kappa$-Poincar\'e deformations are also relevant for a 2-group perspective \cite{Girelli:2021khh}. 

We have also restricted our attention to the simplest dilaton-gravity model. More general two-dimensional dilaton gravities are naturally described by non-linear Poisson sigma models \cite{ikeda_general_1993,ikeda_two-dimensional_1994}, rather than by an ordinary BF theory, and include models such as Liouville-type gravity. In such cases, one expects the underlying algebraic structure to be correspondingly deformed, potentially involving further quantum-group-like structures. It would be interesting to determine to what extent the present discretization can be generalized to these non-linear theories and whether the split triangulation/dual-complex picture survives in an appropriate form.

Finally, a crucial ingredient in the curved construction was the ribbon structure, which allowed us to keep track simultaneously of data living on the triangulation and on the dual complex. An important question is whether an analogous mechanism exists in three and four dimensions. The higher-dimensional generalization may require a more elaborate combinatorial structure, possibly involving higher-gauge-theoretic data rather than an ordinary ribbon description. Understanding this extension would clarify whether the semi-dual discretization uncovered here is a peculiarity of two dimensions or the lower-dimensional manifestation of a more general mechanism.

\section*{Acknowledgments}\addcontentsline{toc}{section}{Acknowledgments}

The authors would like to extend a great appreciation to Giulio Neri, Seth Asante, Aldo Riello, Simon Langenscheidt, and Jos\'e Padua-Arg\"{u}elles for insightful discussions throughout these explorations. Research at the Perimeter Institute is supported in part by the Government of Canada through the Department of Innovation, Science and Economic Development and by the Province of Ontario through the Ministry of Colleges, Universities, Research Excellence and Security. We acknowledge the support of the Natural Sciences and Engineering Research Council of Canada (NSERC) through the Discovery Grant awarded to FG, and the Canadian Graduate Research Scholarship - Doctoral (CGRS D) grant awarded to CP. We further acknowledge that many of the figures throughout this document were generated with  LLM tools, and that LLM's assisted in reviewing the text for typos and overall narrative flow.

\appendix

\section{Regge dilaton equation of motion}\label{app:discrete-dilaton-eom}

Here we show that the equation of motion
\begin{equation}
     \sum_{\tau\supset e}\sum_{v\in \tau}\Phi_{v}\pdv{\beta^{\tau}_{v}(l;\Lambda)}{l_{e}} \approx 0,\quad \forall e\in E(\Delta)
\end{equation}
from varying the Regge discretized LDG action \eqref{eq:discretized-LDG-action} with respect to $l_{e}$ admits a direct geometrical interpretation as the gluing condition for solutions of the continuum dilaton equation.

Consider first a single constant curvature triangle $\tau=(ijk)$ and set
\begin{equation}
    a=l_{jk},\quad
    b=l_{ik},\quad
    c=l_{ij},
\end{equation}
with corresponding internal angles
\begin{equation}
    \alpha=\beta_i^\tau,\quad
    \beta=\beta_j^\tau,\quad
    \gamma=\beta_k^\tau.
\end{equation}
The derivative of the angles with respect to the edge $a$ are
\begin{align}
    \frac{\partial\alpha}{\partial a}
    &=
    \frac{S_\Lambda(a)}{V_{\tau,\Lambda}},
    \\
    \frac{\partial\beta}{\partial a}
    &=
    -\frac{S_\Lambda(b)\cos\gamma}
    {V_{\tau,\Lambda}},
    \\
    \frac{\partial\gamma}{\partial a}
    &=
    -\frac{S_\Lambda(c)\cos\beta}
    {V_{\tau,\Lambda}}.
\end{align}
It follows that the contribution of $\tau$ to the equation associated with the edge $(jk)$ is
\begin{equation}
    \mathcal E_{jk}^{\tau}
    :=
    \sum_{v\in\tau}
    \Phi_v
    \frac{\partial\beta_v^\tau}
    {\partial l_{jk}}
    =
    \frac{1}{V_{\tau,\Lambda}}
    (
    S_\Lambda(a)\Phi_i
    -S_\Lambda(b)\cos\gamma\,\Phi_j
    -S_\Lambda(c)\cos\beta\,\Phi_k
    ),
    \label{eq:triangle-length-eom}
\end{equation}
while the contribution from a given edge $(jk)$ shared by two triangles is 
\begin{equation}
    \mathcal E_{jk}^{\tau}
    +
    \mathcal E_{jk}^{\tau'}
    =0,
\end{equation}
To parameterize the solution inside a single non-degenerate triangle $\tau$, we introduce the constant vector $\Phi_\tau\in \mathbb{R}^3$ in the ambient space such that 
\begin{equation}
   {
    \Phi_v
    :=
    \Phi_\tau\cdot \mathcal X_v.
    }
    \label{eq:discrete-JT-dilaton}
\end{equation}
Each  triangle carries its own ${\Phi}_\tau$, and the Regge equations will determine how these local solutions glue across common edges.

The embedding coordinates of $X_\Lambda$ satisfy
\begin{equation}
    \nabla_\mu\nabla_\nu \mathcal X
    =
    -\Lambda g_{\mu\nu}\mathcal X.
    \label{eq:embedding-hessian}
\end{equation}
Indeed, $\mathcal X$ is normal to the constant curvature quadric, while $\partial_\mu\mathcal X$ is tangent, and the Gauss formula gives
\begin{equation}
    \partial_\mu\partial_\nu\mathcal X
    =
    \Gamma^\rho_{\mu\nu}\partial_\rho\mathcal X
    -
    \Lambda g_{\mu\nu}\mathcal X.
\end{equation}
Since $\Phi_\tau$ is a constant ambient vector, the field
\begin{equation}
    \Phi_\tau(\mathcal X)
    =
    \Phi_\tau\cdot \mathcal X
\end{equation}
 satisfies
\begin{align}
    \nabla_\mu\nabla_\nu\Phi(\mathcal X)
    &=
    \Phi_\tau\cdot
    \nabla_\mu\nabla_\nu\mathcal X
    \nonumber\\
    &=
    -\Lambda g_{\mu\nu}\Phi_\tau(\mathcal X).
\end{align}
so that the interpolation determined by the vertex data is an exact continuum dilaton solution of the dilaton equation in the interior of each constant curvature triangle.

Now let us focus on the geodesic edges characterized in \eqref{eq:ambient-geodesic}. The dilaton along this edge is 
\begin{align}
    \Phi_{ij}(s)
    &=
    \Phi_\tau\cdot \mathcal{X}_{ij}(s)
    \nonumber\\
    &=
    C_\Lambda(s)\Phi_\tau(\mathcal X_i)
    +
    S_\Lambda(s)\,
    \Phi_\tau\cdot t_{ij}.
    \label{eq:dilaton-geodesic-interpolation}
\end{align}
In particular, at $s=l_{ij}$,
\begin{equation}
    \Phi_\tau(\mathcal X_j)
    =
    C_\Lambda(l_{ij})\Phi_\tau(\mathcal X_i)
    +
    S_\Lambda(l_{ij})\,
    \Phi_\tau\cdot t_{ij}.
    \label{eq:dilaton-edge-relation}
\end{equation}
The ambient formulation therefore provides a natural interpolation of the discrete dilaton between neighboring vertices. It is also useful to eliminate the tangent component in favour of the values of the dilaton at the endpoints. From \eqref{eq:dilaton-edge-relation}, one finds
\begin{equation}
    \Phi_\tau\cdot t_{ij}
    =
    \frac{
        \Phi_\tau(\mathcal X_j)-C_\Lambda(l_{ij})\Phi_\tau(\mathcal X_i)
    }{
        S_\Lambda(l_{ij})
    },\label{eq:phi-t}
\end{equation}
and consequently
\begin{equation}
    {
    \Phi_{ij}(s)
    =
    C_\Lambda(s)\Phi_\tau(\mathcal X_i)
    +
    \frac{S_\Lambda(s)}{S_\Lambda(l_{ij})}
    (
        \Phi_\tau(\mathcal X_j)-C_\Lambda(l_{ij})\Phi_\tau(\mathcal X_i)
    ).
    }
    \label{eq:curved-dilaton-interpolation}
\end{equation}
This expression should be compared with the usual linear interpolation of a scalar field on a flat edge.  In the flat limit,
\begin{equation}
    C_\Lambda(s)\rightarrow1,
    \quad
    S_\Lambda(s)\rightarrow s,
\end{equation}
and \eqref{eq:curved-dilaton-interpolation} becomes
\begin{equation}
    \Phi_{ij}(s)
    =
    \Phi_\tau(x_i)
    +
    \frac{s}{l_{ij}}
    (\Phi_\tau(x_j)-\Phi_\tau(x_i)),
\end{equation}
which is precisely the standard affine interpolation of the dilaton along a flat edge.

We note that $ \Phi_\tau\cdot t_{ij}$ can naturally be interpreted as the directional derivative of $\Phi$ in the direction of $t_{ij}$. Indeed, its directional derivative along $t_{ij}\in T_{\mathcal X_i}X_\Lambda$ is simply
\begin{equation}\label{eq:direcder}
    \nabla_t\Phi_\tau |_{\mathcal X_i}
    =
    \frac{d}{ds}\Phi_\tau(\mathcal X(s))|_{s=0}
    =
    \Phi_\tau\cdot\dot {\mathcal X}(0)
     =
    \Phi_\tau\cdot t_{ij}= \frac{\Phi_\tau(\mathcal X_j)-C_\Lambda(a)\Phi_\tau(\mathcal X _i)}
         {S_\Lambda(a)},
\end{equation}
where $\mathcal X(0)=\mathcal X_i$ and $\dot {\mathcal X}(0)=t_{ij}$ and in the last equality, we used \eqref{eq:phi-t}.
Along the geodesic $(ij)$, parametrized by its arc length $s$, as we discussed earlier, the dilaton takes the form
\begin{equation}
    \Phi_{ij}(s)
    =
    C_\Lambda(s)\Phi_\tau(\mathcal X_i)
    +
    S_\Lambda(s)\,
    \Phi_\tau \cdot t_{ij},
\end{equation}
where $t_{ij}$ denotes the initial unit tangent at $X_i$. Since
\begin{equation}
    C_\Lambda''(s)=-\Lambda C_\Lambda(s),
    \quad
    S_\Lambda''(s)=-\Lambda S_\Lambda(s),
\end{equation}
it follows directly that
\begin{equation}
    \frac{d^2\Phi_{ij}}{ds^2}
    +
    \Lambda\Phi_{ij}(s)
    =0,
\end{equation}
which is  the one-dimensional restriction of the continuum dilaton equation along the geodesic edge. 

The only remaining gluing condition is the continuity of
the normal derivative.  At the vertex $j$, let $t$ denote the unit tangent vector to the geodesic $(jk)$, directed from $j$ to $k$, and let $n$ be the unit normal to this geodesic pointing towards the interior of $\tau$. We can use $(t,n)$ as a basis in $T_{\mathcal X}X_{\Lambda}$. The embedding coordinates satisfy
\begin{align}
    \mathcal X_k
    &=
    C_\Lambda(a)\mathcal X_j
    +
    S_\Lambda(a)t,
    \label{eq:Xk-expansion}
    \\
    \mathcal X_i
    &=
    C_\Lambda(c)\mathcal X_j
    +
    S_\Lambda(c)
    (
    \cos\beta\,t+\sin\beta\,n
    ).
    \label{eq:Xi-expansion}
\end{align}
Taking the scalar product with $\Phi_\tau$, the second equation leads to 
\begin{equation}
    \Phi_\tau (\mathcal X_i)
    =
    C_\Lambda(c)\Phi_\tau (\mathcal X_j)
    +
    S_\Lambda(c)
    (
    \cos\beta\,\nabla_t\Phi_\tau
    +
    \sin\beta\,\nabla_n\Phi_\tau
    ).
\end{equation}
Eliminating the tangential derivative using
 \eqref{eq:direcder} yields
\begin{align}
    \nabla_n\Phi_\tau (\mathcal X_j)
    =
    \frac{1}{
    S_\Lambda(a)S_\Lambda(c)\sin\beta}
    \Big(
        &S_\Lambda(a)\Phi_\tau (\mathcal X_i)
        -S_\Lambda(a)C_\Lambda(c)\Phi_\tau (\mathcal X_j)
        \nonumber\\
        &-S_\Lambda(c)\cos\beta\,\Phi_\tau(\mathcal X_k)
        +S_\Lambda(c)C_\Lambda(a)
        \cos\beta\,\Phi_\tau (\mathcal X_j)
    \Big).
    \label{eq:normal-derivative-intermediate}
\end{align}
Using
\begin{equation}
    V_{\tau,\Lambda}
    =
    S_\Lambda(a)S_\Lambda(c)\sin\beta
\end{equation}
and the curved trigonometric identity
\begin{equation}
    S_\Lambda(b)\cos\gamma
    =
    S_\Lambda(a)C_\Lambda(c)
    -
    C_\Lambda(a)S_\Lambda(c)\cos\beta,
\end{equation}
we finally obtain
\begin{equation}
    {
    \nabla_n\Phi_\tau (\mathcal X_j)
    =
    \frac{
    S_\Lambda(a)\Phi_\tau (\mathcal X_i)
    -S_\Lambda(b)\cos\gamma\,\Phi_\tau (\mathcal X_j)
    -S_\Lambda(c)\cos\beta\,\Phi_\tau (\mathcal X_k)
    }{
    V_{\tau,\Lambda}
    }.
    }
    \label{eq:normal-derivative}
\end{equation}
Comparing with \eqref{eq:triangle-length-eom}, we conclude that
\begin{equation}
    {
    \mathcal E_{jk}^{\tau}
    =
    \nabla_{n_{\tau,jk}}\Phi_\tau (\mathcal X_j) .
    }
    \label{eq:triangle-normal-derivative}
\end{equation}
This provides a simple interpretation of the length equations of motion. Consider an interior edge $e=(jk)$ shared by two triangles
\begin{equation}
    \tau=(ijk),
    \quad
    \tau'=(i'jk),
\end{equation}
we must have 
\begin{equation}
    \mathcal E_{jk}^{\tau}
    +
    \mathcal E_{jk}^{\tau'}
    =0,
\end{equation}
and therefore
\begin{equation}
    {
    \nabla_{n_{\tau,e}}\Phi_\tau (\mathcal X_j)
    +
    \nabla_{n_{\tau',e}}\Phi_{\tau'} (\mathcal X_{j})
    =0.
    }
    \label{eq:normal-gluing}
\end{equation}
The two inward-pointing normals have opposite orientations along the common edge,
\begin{equation}
    n_{\tau',e}=-n_{\tau,e}.
\end{equation}
Hence \eqref{eq:normal-gluing} is  the condition that the normal derivatives of the two local dilaton solutions agree at the common vertex. This is enough to obtain agreement along the full edge. Indeed, if $t$ is the unit tangent and $n$ is parallel transported along the geodesic, the continuum dilaton equation gives
\begin{equation}
   \nabla_t\nabla_n\Phi_\tau
    =-\Lambda g(t,n)\Phi_\tau=0.\end{equation}
Thus the normal derivative is constant along the edge, and matching it at one endpoint matches it everywhere on the edge. The Regge length equation therefore implements the $C^1$ gluing of the trianglewise continuum dilaton solutions on the zero-deficit development.

\section{Identities for non-unimodular integration and matched pairs} \label{app:measure-considerations}

Here we prove some useful identities relevant for integration on non-unimodular Lie groups. Recall the decomposition \eqref{eq:path-int-measure-v1} of the unimodular Haar measure on $G_{\sigma,\kappa}$ induced from its Iwasawa decomposition,
\begin{equation}
    dg = dhd_{R}\ell = d_{L}\tilde{\ell} d\th,\quad g = h\ell = \tilde{\ell}\th\in G_{\sigma,\kappa}, \quad h,\th\in H_{\sigma},\quad \ell,\tell\in AN_{\sigma,\kappa}.
\end{equation}
Given that we will make use of both $d_{R}$ and $d_{L}$ on $AN_{\sigma,\kappa}$, we fix now our notation for the relevant modular factors\footnote{Modular factors are homomorphisms. That is, e.g., for $\Delta_{R}$ we have $\Delta_{R}(\ell\ell_{0}) = \Delta_{R}(\ell)\Delta_{R}(\ell_{0})$, $\Delta_{R}(\ell)^{-1} = \Delta_{R}(\ell^{-1})$, and $\Delta_{R}(1) =1 $. Analogously for $\Delta_{L}$.} and Jacobians,
\begin{subequations}
\begin{align}
    d_{R}(\ell_{0}\ell)=:\Delta_{R}(\ell_{0})d_{R}\ell,\quad d_{L}(\tell\ell_{0}) =: \Delta_{L}(\ell_{0})d_{L}\tell,\label{eq:right-left-mod-factor}\\
     d_{R}(h_{0}\rhd \ell) =: J^{R}_{h_{0}}(\ell)d_{R}\ell,\quad d_{L}(\tell\lhd h_{0})=:J^{L}_{h_{0}}(\tell)d_{L}\tell,\label{eq:right-left-Jacobians}
\end{align}
\end{subequations}
as well as the Jacobian factors for the Haar measure on $H_{\sigma}$,
\begin{equation}
d(\ell_{0}\rhd h)  = d(h^{-1}\lhd \ell_{0}^{-1})=: K_{\ell_{0}}(h)dh.
\end{equation}
The left and right invariant Haar measures, as well as the left and right modular factors, by definition obey the relations,
\begin{equation}{\label{eq:left-right-measure-mod-factor-def}}
   d_{L}\ell = d_{R}(\ell^{-1}) = \Delta_{R}(\ell)^{-1}d_{R}\ell,\quad \Delta_{R}(\ell_{0}) = \Delta_{L}(\ell_{0})^{-1}.
\end{equation}
In this context one has the following relations between $d_{L}, d_{R}, \Delta_{R}, \Delta_{L}, J^{R}_{\bullet}(\bullet), J^{L}_{\bullet}(\bullet),$ and $K_{\bullet}(\bullet)$.

\begin{lemma}[Left/Right $AN_{\sigma,\kappa}$ Haar measure dressing invariance]\label{lem:right-left-measure-dressing-invariance}
    The left invariant Haar measure $d_{L}$ on $AN_{\sigma,\kappa}$ and the right invariant Haar measure $d_{R}$ on $AN_{\sigma,\kappa}$ are respectively invariant under the left dressing action by $H_{\sigma}$ and the right dressing action by $H_{\sigma}$,
    \begin{align}
        d_{L}(h_{0}\rhd \tilde{\ell}) = d_{L}\tilde{\ell},\quad d_{R}(\ell\lhd h_{0}) = d_{R}\ell,\quad g = h\ell = \tilde{\ell}\tilde{h}\in G_{\sigma,\kappa},\quad h_{0}\in H_{\sigma} .
    \end{align}
\end{lemma}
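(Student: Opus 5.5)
Both claims follow from the uniqueness of the factorization $dg = dh\,d_R\ell = d_L\tilde\ell\,d\tilde h$ in \eqref{eq:path-int-measure-v1}, combined with bi-invariance of the unimodular Haar measure $dg$ on $G_{\sigma,\kappa}$ and of $dh$ on $H_\sigma$. The strategy is to transport a multiplication on $G_{\sigma,\kappa}$ through the matched-pair relations \eqref{eq:G-Iwasawa-decomp}, read off how the two factors transform, and compare.

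\textbf{First claim: $d_L(h_0\rhd\tilde\ell)=d_L\tilde\ell$.}

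1. Consider left multiplication $g\mapsto h_0 g$ with $g=\tilde\ell\tilde h$. The matched-pair relations give
\begin{equation}
h_0\tilde\ell\tilde h=(h_0\rhd\tilde\ell)\,\big((h_0\lhd\tilde\ell)\tilde h\big).
\end{equation}
So in the $(\tilde\ell,\tilde h)$ coordinates the map is $\tilde\ell\mapsto h_0\rhd\tilde\ell$, together with $\tilde h\mapsto(h_0\lhd\tilde\ell)\tilde h$.

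2. The map is triangular: the new $\tilde\ell$ depends only on $\tilde\ell$, while for fixed $\tilde\ell$ the $\tilde h$ map is left multiplication in $H_\sigma$. Hence its Jacobian is the product of the Jacobian of $\tilde\ell\mapsto h_0\rhd\tilde\ell$ with respect to $d_L$ and the Jacobian of left multiplication on $H_\sigma$. The latter is $1$ because $H_\sigma$ is $SO(2)$ or $SO(1,1)$, both abelian and unimodular.

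3. Since $d(h_0g)=dg$, the total Jacobian is $1$. Therefore the dressing $\tilde\ell\mapsto h_0\rhd\tilde\ell$ preserves $d_L$.

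\textbf{Second claim: $d_R(\ell\lhd h_0)=d_R\ell$.}

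This is the mirror argument. Take $g=h\ell$ and use right multiplication $g\mapsto gh_0$:
\begin{equation}
h\ell h_0=\big(h(\ell\rhd h_0)\big)(\ell\lhd h_0),
\end{equation}
with the action conventions suitably mirrored. Right invariance of $dg$ and right invariance of $dh$ then force $d_R(\ell\lhd h_0)=d_R\ell$. Alternatively, I would obtain it from the first claim by the inversion $g\mapsto g^{-1}$, which preserves $dg$, exchanges the two factorizations, sends $d_L\tilde\ell$ to $d_R(\tilde\ell^{-1})$ by \eqref{eq:left-right-measure-mod-factor-def}, and exchanges $\rhd$ with $\lhd$ via $(h\rhd\ell)^{-1}=\ell^{-1}\lhd h^{-1}$-type identities.

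\textbf{Main obstacle.} The delicate step is justifying the triangular-Jacobian argument rigorously. One has to make sure the measure decomposition $dg=d_L\tilde\ell\,d\tilde h$ is used with the correct ordering, so that the fibre map really is left multiplication on $H_\sigma$ rather than something twisted by $\tilde\ell$. One also has to keep track of the matched-pair conventions so that the $AN$ component is exactly $h_0\rhd\tilde\ell$. This is where the non-unimodularity of $AN_{\sigma,\kappa}$ would appear if the orderings were mismatched, since it would produce factors $\Delta_R$. I would verify the conventions against \eqref{eq:ribbon-factorization} before writing the Jacobian argument.
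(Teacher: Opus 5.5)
Your proposal is correct and follows the paper's proof essentially verbatim. The paper left-multiplies $g=\tilde\ell\tilde h$ by $h_0$, writes $d(h_0g)=d_L(h_0\rhd\tilde\ell)\,d\big((h_0\lhd\tilde\ell)\tilde h\big)=d_L(h_0\rhd\tilde\ell)\,d\tilde h$ (this is your triangular-Jacobian step, where the fibre map is a translation in $H_\sigma$), compares with $dg=d_L\tilde\ell\,d\tilde h$, and then mirrors the argument with right multiplication on $g=h\ell$ for the $d_R$ claim; your alternative derivation of the second claim from the first via $d_R(\tilde\ell^{-1})=d_L\tilde\ell$ and $(h_0\rhd\tilde\ell)^{-1}=\tilde\ell^{-1}\lhd h_0^{-1}$ is also valid, though the paper does not use it.
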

\begin{proof}
    First consider the left multiplication by $h_{0}$ on $g = \tilde{\ell}\tilde{h}$,
    \begin{equation}
        g\mapsto h_{0}g = h_{0}\tell\th = (h_{0}\rhd \tell)(h_{0}\lhd \tell)\th\implies \tell' = h_{0}\rhd \tell, \quad \th' = (h_{0}\lhd \tell)\th.
    \end{equation}
    At the level of the Haar measure on $G_{\sigma,\kappa}$, we then have,
    \begin{equation}
        dg\mapsto d(h_{0}g) = d_{L}(h_{0}\rhd\tell)d((h_{0}\lhd \tell)\th) = d_{L}(h_{0}\rhd\tell)d\th,
    \end{equation}
    which, by unimodularity of $dg$ implies,
    \begin{equation}
      d_{L}(h_{0}\rhd\tell)d\th=   d(h_{0}g) = dg = d_{L}\tell d\th \implies  d_{L}(h_{0}\rhd\tell)= d_{L}\tell,
    \end{equation}
    as claimed. Similarly, consider right multiplication by $h_{0}$ on $g = h\ell$,
    \begin{equation}
        g\mapsto g h_{0} = h\ell h_{0} = h(\ell\rhd h_{0})(\ell\lhd h_{0})\implies h' = h(\ell\rhd h_{0}),\quad \ell' = \ell\lhd h_{0},
    \end{equation}
    so at the level of the Haar measure on $G_{\sigma,\kappa}$, we have,
    \begin{equation}
        dg\mapsto d(gh_{0}) = d(h(\ell\rhd h_{0}))d_{R}(\ell\lhd h_{0}) = dh d_{R}(\ell\lhd h_{0}),
    \end{equation}
    which by unimodularity of $dg$ implies,
    \begin{equation}
         dh d_{R}(\ell\lhd h_{0}) = d(gh_{0}) = dg = dh d_{R}\ell\implies d_{R}(\ell\lhd h_{0}) = d_{R}\ell,
    \end{equation}
    as claimed.
\end{proof}
\begin{lemma}[Left/right Jacobians and the right modular factor relation]
The right Jacobian $J^{R}_{\bullet}(\bullet)$, left Jacobian $J^{L}_{\bullet}(\bullet)$, and right modular factor $\Delta_{R}$ respectively defined by \eqref{eq:right-left-Jacobians} and \eqref{eq:right-left-mod-factor} obey the relations,
\begin{equation}
   J^{R}_{h_{0}}(\ell) = \frac{\Delta_{R}(h_{0}\rhd \ell)}{\Delta_{R}(\ell)},\quad J^{L}_{h_{0}}(\tell) = \frac{\Delta_{R}(\tell)}{\Delta_{R}(\tell\lhd h_{0})}.
\end{equation}
\end{lemma}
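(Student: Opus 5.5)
The plan is to copy the argument of Lemma \ref{lem:right-left-measure-dressing-invariance}, but this time use the \emph{other} factorization of $g$ in each case. The decomposition $dg=dh\,d_R\ell=d_L\tell\,d\th$ means $dg$ can be expressed in either chart. Multiplying $g$ on the left by $h_0$ is transparent in the $\tell\th$ chart, which was the previous lemma. In the $h\ell$ chart it instead mixes the two factors in a controlled way. For $J^R$ we therefore take $g=h\ell$ and use $h_0 g = h_0 h\ell$. The question is what left multiplication by $h_0$ does to the $AN_{\sigma,\kappa}$ factor, and to answer it we write $\ell$ through its own dressing.

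Concretely, write $\ell = (\ell\rhd h^{-1})^{-1}\cdots$. It is more convenient to work with $g^{-1}$. Inversion exchanges the two factorizations: $g=h\ell$ gives $g^{-1}=\ell^{-1}h^{-1}$, which is of $\tell\th$ type with $\tell=\ell^{-1}$. Since $dg$ is inversion invariant (unimodularity) and $d_L(\ell^{-1})=d_R\ell$, the identity $dg=d_L\tell\,d\th$ applied to $g^{-1}$ reproduces $dh\,d_R\ell$ consistently. The step that actually computes the Jacobian proceeds as follows. We first express $h_0\rhd\ell$ as the $AN$-part of $h_0\ell$. We then compare $d_R$ of that part with $d_R\ell$ by transporting through the right-dressing invariance $d_R(\ell\lhd h)=d_R\ell$ of Lemma \ref{lem:right-left-measure-dressing-invariance}. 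Finally we absorb the discrepancy, which comes from $d_R$ not being left invariant, into modular factors via \eqref{eq:left-right-measure-mod-factor-def}. Since $d_R(\ell^{-1})=d_L\ell=\Delta_R(\ell)^{-1}d_R\ell$, the $d_L$-invariance of $h_0\rhd(\cdot)$ translates into
\begin{equation}
d_R(h_0\rhd\ell)=\Delta_R(h_0\rhd\ell)\,d_L(h_0\rhd\ell)=\Delta_R(h_0\rhd\ell)\,d_L\ell=\frac{\Delta_R(h_0\rhd\ell)}{\Delta_R(\ell)}\,d_R\ell .
\end{equation}
The middle equality is the first half of Lemma \ref{lem:right-left-measure-dressing-invariance}, $d_L(h_0\rhd\ell)=d_L\ell$. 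That lemma was stated for the $\tell$ factor, but $h_0\rhd(\cdot)$ is a map on all of $AN_{\sigma,\kappa}$ and the lemma holds pointwise for any element of $AN_{\sigma,\kappa}$, so it applies here. This gives $J^R_{h_0}(\ell)$ directly.

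The formula for $J^L$ is the mirror computation. We write $d_L(\tell\lhd h_0)=\Delta_R(\tell\lhd h_0)^{-1}d_R(\tell\lhd h_0)$ and use $d_R(\tell\lhd h_0)=d_R\tell$ from the second half of the same lemma. Then $d_R\tell=\Delta_R(\tell)\,d_L\tell$, which yields $J^L_{h_0}(\tell)=\Delta_R(\tell)/\Delta_R(\tell\lhd h_0)$.

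The only step needing care is justifying that the identity $d_L\ell=\Delta_R(\ell)^{-1}d_R\ell$ from \eqref{eq:left-right-measure-mod-factor-def} holds as a pointwise density relation, so that it can be evaluated at the dressed point $h_0\rhd\ell$. The convention \eqref{eq:right-left-mod-factor} fixes this. For $\Lambda=0$ both sides reduce to $1$, which gives a consistency check. For $\Lambda<0$ one can also verify the result explicitly in the $AN$ coordinates of the upper half-plane.
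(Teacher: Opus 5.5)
Your displayed chain is correct and is the paper's proof. It combines $d_L\ell=\Delta_R(\ell)^{-1}d_R\ell$ with the dressing invariances $d_L(h_0\rhd\ell)=d_L\ell$ and $d_R(\tell\lhd h_0)=d_R\tell$ of Lemma \ref{lem:right-left-measure-dressing-invariance}, so the preliminary detour through $g^{-1}$ and the mention of right-dressing invariance for $J^R$ are not used and can be dropped.
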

\begin{proof}
    We begin with the proof of the identity for $J^{R}_{h_{0}}(\ell)$. Using \eqref{eq:left-right-measure-mod-factor-def} and \eqref{eq:right-left-Jacobians} we have,
    \begin{equation}
        d_{L}(h_{0}\rhd \ell) = \Delta_{R}(h_{0}\rhd \ell)^{-1}d_{R}(h_{0}\rhd\ell) = \Delta_{R}(h_{0}\rhd\ell)^{-1}J^{R}_{h_{0}}(\ell)d_{R}\ell.
    \end{equation}
    But, by Lemma \ref{lem:right-left-measure-dressing-invariance}, the left hand side is equal to $d_{L}\ell$, thus,
    \begin{equation}
     \Delta_{R}(\ell)^{-1}d_{R}\ell= d_{L}\ell = d_{L}(h_{0}\rhd\ell) = \Delta_{R}(h_{0}\rhd\ell)^{-1}J^{R}_{h_{0}}(\ell)d_{R}\ell \implies J^{R}_{h_{0}}(\ell) = \frac{\Delta_{R}(h_{0}\rhd \ell)}{\Delta_{R}(\ell)},
    \end{equation}
    as claimed. Very similarly for the $J^{L}_{h_{0}}(\tell)$ identity, we use \eqref{eq:left-right-measure-mod-factor-def}, \eqref{eq:right-left-Jacobians}, and Lemma \ref{lem:right-left-measure-dressing-invariance} to find
    \begin{equation}
       J^{L}_{h_{0}}(\tell)d_{L}\tell =  d_{L}(\tell\lhd h_{0}) = \Delta_{R}(\tell\lhd h_{0})^{-1}d_{R}(\tell\lhd h_{0}) = \Delta_{R}(\tell\lhd h_{0})^{-1}d_{R}\tell = \Delta_{R}(\tell\lhd h_{0})^{-1}\Delta_{R}(\tell)d_{L}\tell,
    \end{equation}
    which indeed implies,
    \begin{equation}
        J^{L}_{h_{0}}(\tell) = \frac{\Delta_{R}(\tell)}{\Delta_{R}(\tell\lhd h_{0})},
    \end{equation}
    as claimed.
\end{proof}
\begin{lemma}[$H_{\sigma}$ Jacobian and right modular factor relation]\label{lem:K-jacobian-right-mod-factor}
    The Jacobian factor $K_{\ell_{0}}(h)$ for the $H_{\sigma}$ Haar measure is related to the right modular factor $\Delta_{R}$ by,
    \begin{equation}
        K_{\ell_{0}}(h) = \Delta_{R}(\ell_{0}\lhd h)^{-1},\quad \ell_{0}\in AN_{\sigma,\kappa}.
    \end{equation}
    When $h\ell = \tell\th$, we have the special cases,
    \begin{equation}
        K_{\ell^{-1}}(h^{-1}) = \Delta_{R}(\tell),\quad K_{\tell}(\th) = \Delta_{R}(\ell)^{-1} .
    \end{equation}
\end{lemma}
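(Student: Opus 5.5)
We follow the strategy of Lemma \ref{lem:right-left-measure-dressing-invariance}: act on $G_{\sigma,\kappa}$ by a group element, read off the transformation of the Iwasawa factors, and use unimodularity of $dg$ to extract the Jacobian. This time we multiply on the left by $\ell_{0}\in AN_{\sigma,\kappa}$ and use the factorization $g=h\ell$ with $dg=dh\,d_{R}\ell$ from \eqref{eq:path-int-measure-v1}.

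\textbf{Step 1: transformation of the factors.} By the matched-pair relations,
\begin{equation}
    \ell_{0}g=\ell_{0}h\ell=(\ell_{0}\rhd h)\big((\ell_{0}\lhd h)\ell\big),
\end{equation}
so the new factors are
\begin{equation}
    h'=\ell_{0}\rhd h\in H_{\sigma},\qquad \ell'=(\ell_{0}\lhd h)\ell\in AN_{\sigma,\kappa}.
\end{equation}

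\textbf{Step 2: the Jacobian.} The map $(h,\ell)\mapsto(h',\ell')$ is triangular: $h'$ depends only on $h$, and at fixed $h$ the factor $\ell'$ is a left translate of $\ell$ by $\ell_{0}\lhd h$. Its Jacobian is therefore the product of the diagonal pieces. The $H_{\sigma}$ piece is $K_{\ell_{0}}(h)$ by definition. The $AN_{\sigma,\kappa}$ piece is $\Delta_{R}(\ell_{0}\lhd h)$ by the definition \eqref{eq:right-left-mod-factor} of the right modular factor. This gives
\begin{equation}
    d(\ell_{0}g)=K_{\ell_{0}}(h)\,\Delta_{R}(\ell_{0}\lhd h)\,dh\,d_{R}\ell .
\end{equation}
Left invariance of the unimodular Haar measure gives $d(\ell_{0}g)=dg=dh\,d_{R}\ell$. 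Comparing the two expressions yields $K_{\ell_{0}}(h)=\Delta_{R}(\ell_{0}\lhd h)^{-1}$.

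For consistency with the equivalent definition $d(h^{-1}\lhd\ell_{0}^{-1})$, invert $\ell_{0}h=(\ell_{0}\rhd h)(\ell_{0}\lhd h)$ to get $h^{-1}\ell_{0}^{-1}=(\ell_{0}\lhd h)^{-1}(\ell_{0}\rhd h)^{-1}$. Hence $h^{-1}\lhd\ell_{0}^{-1}=(\ell_{0}\rhd h)^{-1}$. Inversion preserves the unimodular measure $dh$, so both definitions agree.

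\textbf{Step 3: the special cases.} These follow from uniqueness of the factorization. Inverting $h\ell=\tilde\ell\tilde h$ gives $\ell^{-1}h^{-1}=\tilde h^{-1}\tilde\ell^{-1}$, so $\ell^{-1}\lhd h^{-1}=\tilde\ell^{-1}$. Then the general formula and the homomorphism property of $\Delta_{R}$ give
\begin{equation}
    K_{\ell^{-1}}(h^{-1})=\Delta_{R}(\tilde\ell^{-1})^{-1}=\Delta_{R}(\tilde\ell).
\end{equation}
Directly, $\tilde\ell\tilde h=h\ell$ gives $\tilde\ell\lhd\tilde h=\ell$, and hence $K_{\tilde\ell}(\tilde h)=\Delta_{R}(\ell)^{-1}$.

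The main point needing care is Step 2. One must justify the triangular factorization of the Jacobian and keep the convention $d_{R}(\ell_{0}\ell)=\Delta_{R}(\ell_{0})d_{R}\ell$ straight, so that the modular factor appears rather than its inverse. The rest is bookkeeping with the matched-pair relations.
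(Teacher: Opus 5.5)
Your proposal is correct and is essentially the paper's own argument: left-multiply $g=h\ell$ by $\ell_{0}$, read off $h'=\ell_{0}\rhd h$ and $\ell'=(\ell_{0}\lhd h)\ell$, and compare $K_{\ell_{0}}(h)\,\Delta_{R}(\ell_{0}\lhd h)\,dh\,d_{R}\ell$ with $dg=dh\,d_{R}\ell$ using unimodularity. Your extra details fill in what the paper calls immediate: the triangular-Jacobian remark, the consistency check with the definition via $d(h^{-1}\lhd\ell_{0}^{-1})$, and the identities $\ell^{-1}\lhd h^{-1}=\tilde\ell^{-1}$ and $\tilde\ell\lhd\tilde h=\ell$ that give the special cases.
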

\begin{proof}
    Consider the left multiplication by $\ell_{0}$ on $g = h\ell$,
    \begin{equation}
        g\mapsto \ell_{0}g = \ell_{0}h\ell = (\ell_{0}\rhd h)(\ell_{0}\lhd h)\ell\implies h' = \ell_{0}\rhd h, \quad \ell' = (\ell_{0}\lhd h)\ell.
    \end{equation}
    At the level of the Haar measure on $G_{\sigma,\kappa}$ we then have
    \begin{equation}
        dg\mapsto d(\ell_{0}g) = d(\ell_{0}\rhd h)d_{R}((\ell_{0}\lhd h)\ell) = K_{\ell_{0}}(h)\Delta_{R}(\ell_{0}\lhd h)dh d_{R}\ell,
    \end{equation}
    which, by unimodularity of $dg$ implies,
    \begin{equation}
        K_{\ell_{0}}(h)\Delta_{R}(\ell_{0}\lhd h)dh d_{R}\ell = d(\ell_{0}g) = dg = dh d_{R}\ell\implies K_{\ell_{0}}(h) = \Delta_{R}(\ell_{0}\lhd h)^{-1},
    \end{equation}
    as claimed. The special case claim follows immediately from the above proof and the ribbon relation $h\ell = \tell \th$.
\end{proof}

\printbibliography

\end{document}